\documentclass{article}

\usepackage{microtype}
\usepackage{graphicx}
\usepackage{subcaption}
\usepackage{booktabs} 
\usepackage{float}

\usepackage{hyperref}
\usepackage{capt-of}
\usepackage{comment}

\usepackage[accepted]{icml2026}

\usepackage{amsmath}
\usepackage{amssymb}
\usepackage{mathtools}
\usepackage{amsthm}

\usepackage[capitalize,noabbrev]{cleveref}

\theoremstyle{plain}
\newtheorem{theorem}{Theorem}[section]
\newtheorem{proposition}[theorem]{Proposition}
\newtheorem{lemma}[theorem]{Lemma}
\newtheorem{corollary}[theorem]{Corollary}
\theoremstyle{definition}

\theoremstyle{remark}
\newtheorem{remark}[theorem]{Remark}
\newtheorem{example}{Example}

\usepackage[textsize=tiny]{todonotes}

\icmltitlerunning{Block-wise Codeword Embedding for Reliable Multi-bit Text Watermarking}

\begin{document}

\twocolumn[
  \icmltitle{Block-wise Codeword Embedding for \\ 
  Reliable Multi-bit Text Watermarking}




  \begin{icmlauthorlist}
    \icmlauthor{Joeun Kim}{dgistai}
    \icmlauthor{HoEun Kim}{dgistai}
    \icmlauthor{Dongsup Jin}{ulsan}
    \icmlauthor{Young-Sik Kim}{dgistai,dgisteecs}
  \end{icmlauthorlist}
    \icmlaffiliation{dgistai}{Department of AI, DGIST, Daegu, Republic of Korea}
    \icmlaffiliation{dgisteecs}{Department of EECS, DGIST, Daegu, Republic of Korea}
    \icmlaffiliation{ulsan}{Department of ICT Convergence, University of Ulsan, Ulsan, Republic of Korea}
    \icmlcorrespondingauthor{Young-Sik Kim}{ysk@dgist.ac.kr}
  \icmlkeywords{Machine Learning, ICML}
  \vskip 0.3in
]



\printAffiliationsAndNotice{}  

\begin{abstract}
Recent multi-bit watermarking methods for large language models (LLMs) prioritize capacity over reliability, often conflating decoding with detection. Our analysis reveals that existing ECC-based extractors suffer from catastrophic false positive rates (FPR), and applying rejection thresholds merely collapses detection sensitivity (TPR) to random guessing. To resolve this structural limitation, we propose \textbf{BREW} (Block-wise Reliable Embedding for Watermarking), a framework shifting the paradigm to \emph{designated verification}. BREW employs a two-stage mechanism: (i) \textbf{blind message estimation} via independent block voting, followed by (ii) \textbf{window-shifting verification} that rigorously validates the payload against local edits. Experiments demonstrate that BREW achieves a TPR of 0.965 with an FPR of 0.02 under 10\% synonym substitution, demonstrating that the high-FPR issue is not an inherent trade-off of multi-bit watermarking, but a solvable structural flaw of prior decoding-centric designs. Our framework is model-agnostic and theoretically grounded, providing a scalable solution for reliable forensic deployment.
\end{abstract}

\section{Introduction}
Large language models (LLMs) have transformed content generation across creative, professional, and scientific domains, yet raise critical concerns about provenance and potential misuse for deceptive content \cite{solaiman2019release,bender2021dangers}. Reliably distinguishing human-authored from AI-generated text has become essential for academic integrity, journalism, legal proceedings, and platform governance \cite{mitchell2023detectgpt,gehrmann2019gltr}.
Text watermarking addresses this challenge by embedding imperceptible data into AI-generated content during generation \cite{kirchenbauer2023watermark}. Unlike post-hoc detection methods relying on statistical artifacts \cite{mitchell2023detectgpt,su2023detectllm}, watermarking provides stronger origin guarantees while preserving fluency and style.
Watermarking approaches divide into zero-bit (checking watermark presence) and multi-bit (encoding extractable metadata). The green/red partition strategy of \cite{kirchenbauer2023watermark} biases generation toward a keyed ``green'' vocabulary subset. Recent multi-bit methods augment partitioning with error-correcting codes (ECCs) to embed message bits. \cite{qu2025provably} encodes payloads with Reed--Solomon codes, while \cite{chao2024rbc} uses LDPC codes with sliding windows for short texts.

Despite strong extraction capabilities, current multi-bit watermarking lacks explicit false positive control \cite{fu2025multi}. For instance, under 10\% synonym substitution, a non-ECC approach \cite{yoo2024advancing} shows substantial false positives (FPR $\approx 0.5$) despite perfect recall. Similarly, our reproduction of the ECC-based method by \cite{qu2025provably} reveals an unacceptable FPR ($>0.9$) alongside high TPR ($\approx 0.97$), frequently misclassifying unwatermarked text.
We introduce \textbf{B}lock-wise \textbf{R}eliable \textbf{E}mbedding for \textbf{W}atermarking (BREW), a block-wise multi-bit watermarking framework designed to control false positives while preserving high detection sensitivity. Specifically, BREW (i) embeds \emph{complete codewords in independent blocks} to localize errors and prevent cascade failures under edits, and (ii) deploys a \emph{window-shifting detector} that systematically realigns and recovers codewords after insertion/deletion-induced desynchronization. Crucially, detection verifies that a recovered codeword equals the \emph{designated} codeword that was actually embedded in that block, thereby suppressing spurious matches that inflate FPR. This design achieves both a high TPR and a significantly lower FPR compared to previous multi-bit methods, making it more suitable for real-world forensic applications. The framework is \emph{code-agnostic}: while we instantiate with BCH codes for efficiency and clarity, the design extends to RS/LDPC codes, enabling adaptation to application-specific error patterns.

As a result, BREW closes a critical reliability gap in prior multi-bit watermarking. On 200-token texts under 10\% synonym substitution, BREW achieves strong detection performance (TPR $=0.965$) while maintaining a low false positive rate (FPR $=0.02$), in sharp contrast to both earlier non-ECC approaches with substantial false positives and recent ECC-based methods whose FPR exceeds $0.9$.

\paragraph{Contributions} This work makes the following key contributions:
\begin{enumerate}
    \item \textbf{Paradigm Shift to Verification-Centric Detection}: We identify that high FPR stems from conflating \emph{decoding} with \emph{detection}. BREW decouples these via designated verification. Unlike ``steel-manned'' baselines that rely on rejection thresholds (suffering severe recall loss), our approach fundamentally solves the false detection problem.
    \item \textbf{Reliable Short-Payload Transmission}: We focus on the guaranteed delivery of critical payloads in adversarial settings. Our distributed architecture ensures robustness against localized attacks where traditional methods fail.
    \item \textbf{Incremental Detection Framework}: Watermark evidence accumulates from independent blocks, enabling graduated confidence assessment and partial recovery even when portions of the text are corrupted.
    \item \textbf{Theory for Reliability}: We provide finite-sample bounds that rigorously control false positives and characterize detection power under realistic noise models.
    \item \textbf{Comprehensive Validation}: Experiments across multiple datasets and models (OPT, LLaMA, Mistral) demonstrate state-of-the-art TPR--FPR trade-offs. BREW achieves an FPR of 0.02 under 10\% substitution, whereas prior methods exceed 0.90.
\end{enumerate}



\section{Related Work}
\label{sec:related}

\begin{table*}[t]
  \caption{Comparison of zero-bit text watermarking methods.}
  \label{tab:zero_bit_methods}
  \centering
  \small
  \renewcommand{\arraystretch}{1.2}
  \begin{tabular}{p{3.2cm}p{1.6cm}p{5.6cm}p{4.8cm}}
    \toprule
    Method & ECC & Key Idea & Main Limitations \\
    \midrule
    KGW~\cite{kirchenbauer2023watermark} & No &
    Green/red token partition; binomial hypothesis test &
    Vulnerable to paraphrasing; weak signal in short texts \\

    DiPmark~\cite{wu2024dipmark} & No &
    Distribution-preserving biasing; minimal quality loss &
    Reduced detection power \\

    \cite{zhao2024provable} & No &
    Unigram statistics; provable robustness bounds &
    Limited expressiveness beyond unigrams \\

    NS-Watermark~\cite{takezawa2023necessary} & No &
    Detectability theory; exponential reweighting &
    Lacks practical robustness mechanisms \\

    \cite{christ2024pseudorandom} & Yes &
    Pseudorandom ECC; hidden hypothesis testing &
    High computational cost; zero payload capacity \\
    \bottomrule
  \end{tabular}
\end{table*}

\begin{table*}[t]
  \caption{Comparison of multi-bit text watermarking methods.}
  \label{tab:multi_bit_methods}
  \centering
  \small
  \renewcommand{\arraystretch}{1.2}
  \begin{tabular}{p{3.2cm}p{1.6cm}p{5.6cm}p{4.8cm}}
    \toprule
    Method & ECC & Key Idea & Main Limitations \\
    \midrule
    MPAC~\cite{yoo2024advancing} & No &
    Pseudorandom position allocation; zero-bit watermark composition &
    Low extraction accuracy (49.2\% @ 32 bits); no calibrated detection \\

    \cite{qu2025provably} & Yes (RS) &
    Reed--Solomon message encoding; segment-level voting &
    Very high FPR ($\approx 0.9$) under insertion/deletion \\

    RBC~\cite{chao2024rbc} & Yes (LDPC) &
    Sliding-window decoding; adaptive biasing &
    Elevated FPR risk; complex decoding and tuning \\
    \bottomrule
  \end{tabular}
\end{table*}

Text watermarking for LLMs has rapidly diversified alongside model capabilities and deployment contexts. We organize prior work by \emph{detection objective}: (i) \emph{zero-bit} watermarking, which only tests for the presence of a watermark, and (ii) \emph{multi-bit} watermarking, which embeds and extracts a payload. This lens clarifies robustness requirements (synchronization, error tolerance) and evaluation protocols, and it better reflects recent cryptographic developments, including zero-bit constructions based on pseudorandom error-correcting codes.

\subsection{Zero-Bit Watermarking}
A canonical approach is the keyed green/red partition of KGW \cite{kirchenbauer2023watermark}, which biases generation toward a secret per-token green set and applies a binomial-style hypothesis test at detection. Variants preserve the model output distribution to improve text quality, as in DiPmark \cite{wu2024dipmark}, or provide robustness under bounded edits by operating on unigram statistics \cite{zhao2024provable}. From a theoretical perspective, NS-Watermark \cite{takezawa2023necessary} further sharpens detectability by deriving necessary and sufficient conditions and realizing them via exponential reweighting. From a cryptographic angle, \cite{christ2024pseudorandom} constructs pseudorandom error-correcting codes whose local neighborhoods are computationally indistinguishable from random, enabling hidden presence tests with constant error. Despite their efficiency, most zero-bit schemes rely on aggregate frequency signals and lack explicit synchronization, making them vulnerable to paraphrasing, translation, or token-level desynchronization, especially in short texts.

\subsection{Multi-Bit Watermarking}
Multi-bit watermarking seeks to embed a payload that can be \emph{decoded}. Two broad families appear.
\paragraph{(a) Non-ECC Multi-Bit Ideas.} 
MPAC \cite{yoo2024advancing} uses invariant features (keywords/syntax) for robustness, but suffers allocation imbalance and low accuracy on longer messages (49.2\% match rate for 32-bit).

\paragraph{(b) ECC-Based Message Encoding.}
\cite{qu2025provably} pioneers \emph{ECC-based message encoding}, which encodes the payload with Reed--Solomon (RS), distributes symbols via pseudorandom segments, and decodes by cracking noisy segment votes to the nearest codeword. RBC \cite{chao2024rbc} extends this line with LDPC and sliding windows, reporting strong performance on short texts through adaptive biasing and sophisticated decoding.

\paragraph{Limitations.}
ECC-based methods often behave like \emph{message extractors}, not calibrated detectors: nearest-codeword decoding maps even unwatermarked text to valid codewords, driving FPR high—particularly under insertions/deletions or synonym edits. \cite{fu2025multi} formalizes this \emph{false detection problem}: conflating detection with identification effectively enlarges key capacity and degrades reliability.

\subsection{Attacks and Evaluation Protocols}
Attacks include (i) \textbf{substitutions}, including synonym replacement, back-translation, and model-based paraphrasing \cite{morris2020textattack, wieting2018paranmt, krishna2023paraphrasing}, (ii) \textbf{insertions/deletions} that break token–bit alignment, and (iii) \textbf{semantic rewrites} that alter surface form while preserving meaning \cite{wolff2020attacking}. While recent frameworks standardize protocols and metrics \cite{kuditipudi2023robust}, insertion/deletion scenarios remain underexplored. Prior pseudorandom embedding strategies \cite{yoo2024advancing, qu2025provably} mitigate—but do not resolve—synchronization, and segment-level voting can yield unacceptably high FPR on unwatermarked text.

\subsection{Positioning of Our Work: From Extraction to Verification}
Existing ECC-based methods \cite{qu2025provably, chao2024rbc} primarily function as \emph{message extractors}: they aggressively map noisy signals to the nearest valid codeword, inherently assuming the presence of a watermark. Even when ``steel-manned'' with optimal rejection thresholds (as analyzed in Section~\ref{sec:experiments}), these extractors cannot escape a linear trade-off between TPR and FPR (approaching random guessing) under desynchronization attacks.

We position BREW as a \textbf{verification-centric framework} designed for the \emph{reliable transmission of short payloads} in adversarial environments. Unlike methods that strive for high-capacity transmission (often sacrificing robustness), BREW prioritizes the integrity of the detection signal. By combining (1) a \textbf{distributed codeword architecture} that localizes errors and (2) a \textbf{window-shifting verification} mechanism that handles synchronization drift, BREW bridges the gap between zero-bit robustness and multi-bit utility. Our approach ensures that detection (confirming presence) logically precedes identification (decoding payload), thereby maintaining near-zero FPR without the catastrophic recall loss seen in thresholded extraction methods. 
This design intentionally favors reliable verification over maximizing raw payload capacity. Designated verification restricts acceptance to a single keyed target codeword per block, rather than accepting any valid codeword within the decoder radius. This reduces usable payload capacity compared to high-capacity ECC extractors, but it is precisely this restricted acceptance space that enables strong false-positive control. Longer payloads can still be distributed across multiple blocks, yielding a flexible capacity--reliability trade-off.


\section{Proposed Watermarking Framework}
\label{sec:method}

\begin{figure*}[t]
    \centering
    \vspace{0.1in}
    \includegraphics[width=\textwidth]{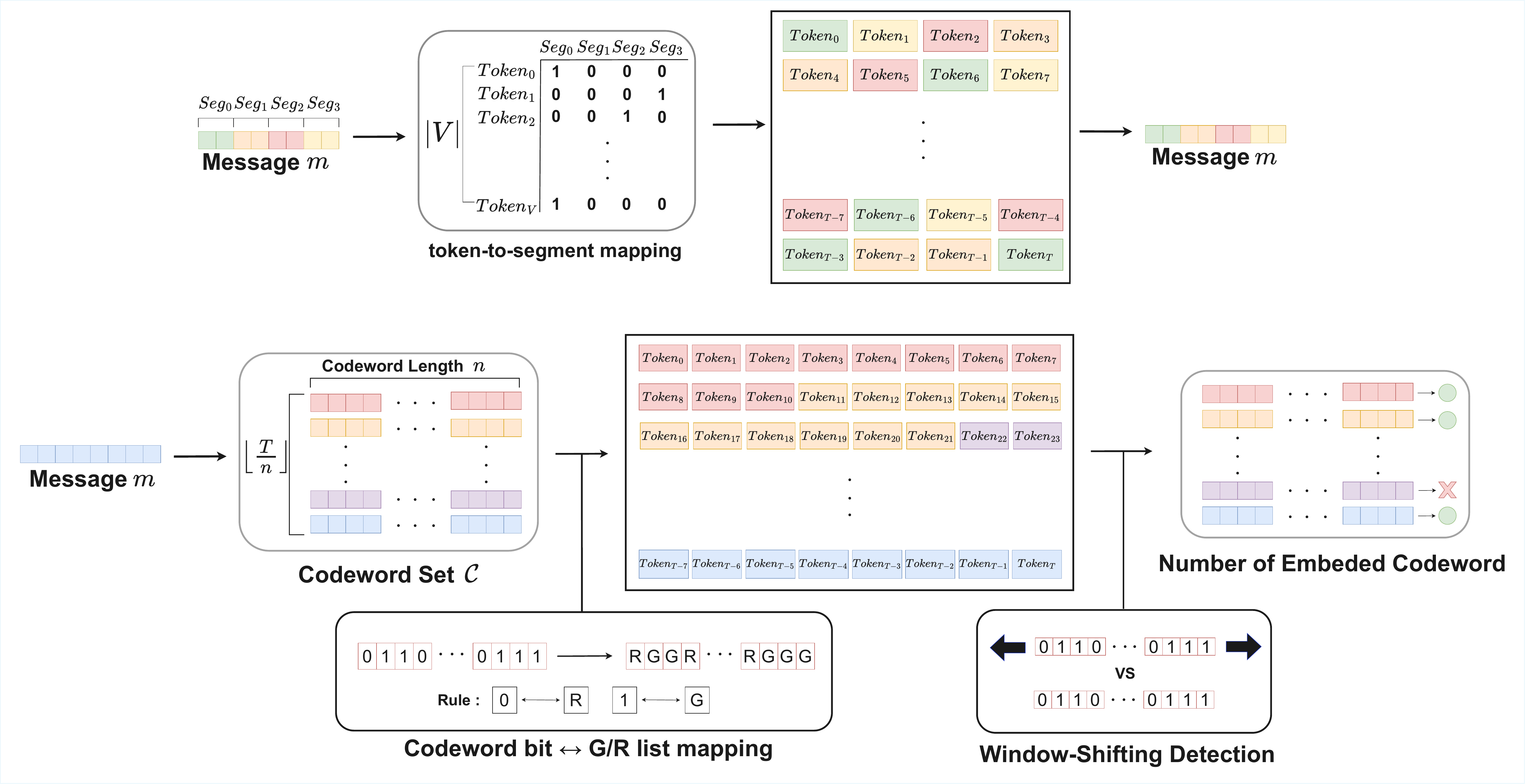}
    \caption{Comparison of multi-bit watermarking frameworks.
\textbf{(Top)} Prior schemes map every token to a segment, allowing ECC to ``correct'' accumulated noise into valid codewords, leading to high false positives.
\textbf{(Bottom)} BREW employs distributed embedding and window-shifting with designated verification. This eliminates ``any-codeword'' acceptance, preserving payload capacity while strictly controlling false positives.}
    \label{fig:overview}
    \vspace{0.1in}
\end{figure*}

We propose \textbf{BREW}, a \emph{reliable multi-bit} watermarking framework that explicitly targets the high FPR pitfall observed in prior multi-bit schemes, while preserving high TPR and robustness to common edits. The method has three pillars: 
(i) \emph{distributed resilience} via independent codeword blocks that enable partial watermark recovery and progressive confidence assessment even when some blocks are corrupted,
(ii) a \emph{window-shifting} detector that realigns and recovers individual codewords after insertion/deletion, contributing to the overall watermark strength score, and
(iii) a \emph{graduated verification protocol}, which quantifies watermark evidence by counting correctly matched \emph{designated} codewords rather than accepting ``any'' decodable codeword, thereby enabling continuous watermark strength measurement while suppressing spurious detections.
This incremental approach transforms binary detection into progressive evidence accumulation, where each recovered block contributes to a quantifiable confidence score. This section provides detailed algorithmic descriptions and technical analysis of each component.

\subsection{Reliable Multi-bit Detection via Designated-Codeword Verification}
\label{sec:method:dci}

Prior methods rely on previous tokens to collect information for codeword decoding, treating any text—watermarked or not—identically: the same token contributes to the decoding process, and ECC even corrects ``errors'' to produce false positives by reconstructing valid codewords from random noise. In contrast, our approach considers not only tokens but also their relative positions to verify whether patterns match the actual codeword structure, accepting only the designated codeword as a true detection rather than any valid codeword, thereby significantly reducing false positives while maintaining multi-bit capacity.

\subsubsection{Two-Stage Blind Detection Procedure}
\label{sec:method:two_stage}

We retain the use of meaningful messages $m\in\{0,1\}^k$ but depart from prior approaches by introducing a \textbf{two-stage detection protocol} that eliminates the need for an oracle (i.e., prior knowledge of the embedded message).

\textbf{Stage 1: Blind Message Estimation.} 
Since the watermark is embedded continuously across $M$ blocks, we first treat each block as an independent voter. 
The detector performs standard decoding on all blocks to collect candidate messages. 
The most frequent candidate $\hat{m}$ is selected via majority voting:
\begin{equation}
    \hat{m} = \arg\max_{m' \in \{0,1\}^k} \sum_{j=1}^{M} \mathbb{I}(\text{Decode}(b^{(j)}) = m').
\end{equation}
This step allows the detector to blindly recover the likely payload from noisy text.


\textbf{Stage 2: Designated Verification (The BREW Core).} 
Using the estimated payload $\hat{m}$, the detector reconstructs the 
\emph{designated codeword sequence} $\mathcal{Q}$ specific to the text's block indices using the shared key $\mathcal{K}$. 
For each block $j$, the detector deterministically derives a block-specific mask 
$r^{(j)} \in \{0,1\}^k$ from $\mathcal{K}$ and $j$, and uses it to reconstruct the designated codeword for that block.
We then apply our window-shifting verification (Algorithm~\ref{alg:detection}) to strictly validate $\hat{m}$. 
Specifically:
\begin{enumerate}
    \item \textbf{Reconstruction:} Generate the designated codeword 
    $c^{(j)} \leftarrow E(\hat{m} \oplus r^{(j)})$ for each block $j$.
    \item \textbf{Shift-aware Verification:} For each shared candidate offset 
    $s\in\mathcal{S}=[-s_{\max},s_{\max}]$, check whether the window centered at 
    $j\cdot n+s$ exactly matches $c^{(j)}$.
    \item \textbf{Decision:} The text is deemed watermarked only if the ratio of matched blocks exceeds a threshold $\theta$.
\end{enumerate}
This two-stage approach ensures that even if $\hat{m}$ is incorrect, e.g., in unwatermarked text, Stage 2 fails to find matching designated codewords, thereby keeping the FPR near zero. Importantly, BREW performs multi-bit payload recovery rather than only zero-bit detection: successful detection requires consistency between the recovered payload and the embedded blockwise codewords. We report message-level exact match rates in Appendix~\ref{app:match_rate}, showing that detection remains closely aligned with payload recovery even under paraphrasing attacks.

\subsection{Distributed Codeword Embedding}

\subsubsection{Vocabulary Partitioning Strategy}

Following the established approach of KGW \cite{kirchenbauer2023watermark}, we partition the vocabulary $\mathcal{V}$ into two disjoint sets for each block. For block $j$, we compute a block-specific seed such that $\text{seed}_j = H(\mathcal{K}, j)$, where $H$ is a cryptographic hash function (e.g., SHA-3) and $\mathcal{K}$ is the secret watermarking key. Using $\text{seed}_j$, we deterministically partition the vocabulary as 
$\mathcal{L}_0^{(j)} = \{v \in \mathcal{V} : H(\text{seed}_j, v) \bmod 2 = 0\}$ and
$\mathcal{L}_1^{(j)} = \{v \in \mathcal{V} : H(\text{seed}_j, v) \bmod 2 = 1\}$. This block-specific partitioning prevents adversaries from inferring vocabulary assignments across multiple generations, even with partial knowledge of the partitioning strategy.

\subsubsection{Codeword Generation and Selection}
We pre-compute a set of diverse codewords to avoid statistical patterns that could be exploited by adversaries. 
Specifically, the codeword generation strategy serves two purposes: 
(1) excluding all-zero codewords prevents degenerate cases that could impact detection accuracy, and 
(2) generating codeword pairs with maximum Hamming distance enhances robustness by ensuring diverse bit patterns. 
The detailed generation procedure is provided in Appendix~\ref{Appendix:codeword_generation}.


\begin{algorithm}[tb]
\caption{Distributed Watermark Embedding}
\label{alg:embedding}
\begin{algorithmic}[1]
\STATE \textbf{Input:} key $\mathcal{K}$, queue $\mathcal{Q}$, code length $n$, generation length $T$, strength $\delta$, scheme $\in\{\text{soft},\text{hard}\}$
\STATE \textbf{Output:} sequence of watermarked tokens $y_{1:T}$
\FOR{$t=1,2,\dots,T$}
  \STATE Obtain logits $\ell^{(t)}$ from the LM given current context
  \STATE $(j,b) \leftarrow \big(\lfloor (t-1)/n \rfloor,\ (t-1)\bmod n\big)$
  \STATE \textbf{If needed, initialize block $j$:} sample $c\in\mathcal{C}$, set $\mathcal{Q}[j]\!\leftarrow\!c$, and build $(\mathcal{L}^{(j)}_0,\mathcal{L}^{(j)}_1)$ using seed $H(\mathcal{K},j)$
  \STATE $z \leftarrow \mathcal{Q}[j][b]$ \COMMENT{target bit}
  \STATE Bias logits toward $\mathcal{L}^{(j)}_z$ with strength $\delta$ (and \textbf{mask} to $\mathcal{L}^{(j)}_z$ if scheme = hard)
  \STATE Sample token $y_t$ from the resulting distribution and append it to the sequence
\ENDFOR
\end{algorithmic}
\end{algorithm}

\subsubsection{Distributed Embedding Algorithm}
Our embedding algorithm generates text in blocks of length $n$, with each block embedding exactly one codeword (Algorithm~\ref{alg:embedding}). 
We support two embedding schemes. \textbf{Soft embedding} adds a bias $\delta$ to the logits of the target list before softmax, preserving natural variation while encouraging codeword-consistent token selection. In contrast, \textbf{hard embedding} restricts sampling entirely to the target list, ensuring exact codeword embedding at the potential cost of reduced text quality. 
This design enables a tunable trade-off between watermark strength and linguistic naturalness.

\begin{algorithm}[tb]
  \caption{Synchronized Sliding Window Detection}
  \label{alg:detection}
  \begin{algorithmic}[1]
    \STATE \textbf{Input:} tokens, key $\mathcal{K}$, code parameters $(n,k)$, shift budget $s_{\max}$, threshold $\theta$
    \STATE \textbf{Output:} decision \texttt{is\_wm} and estimated payload $\hat{m}_{final}$
    \STATE $\mathcal{S}\leftarrow\{s\in\mathbb{Z}:|s|\le s_{\max}\}$, $best\_ratio \leftarrow 0$, $\hat{m}_{final} \leftarrow \textbf{None}$
    \FOR{$s \in \mathcal{S}$} 
        \STATE $\mathbf{b}_s \leftarrow \textsc{ExtractBitStream}(\text{tokens}, \mathcal{K}, \text{offset}=s)$, $M \leftarrow \lfloor |\mathbf{b}_s|/n \rfloor$, $V[\cdot]\leftarrow 0$
        \FOR{$j=0,\dots,M-1$}
          \STATE $m' \leftarrow \textsc{DecodeAndExtract}(\mathbf{b}_s[j n:(j{+}1)n])$
          \STATE \textbf{if} $m' \neq \textbf{None}$ \textbf{then} $V[m'] \leftarrow V[m'] + 1$
        \ENDFOR
        \STATE $\hat{m}_s \leftarrow \arg\max_{m'} V[m']$, Reconstruct $\mathcal{Q}$ using $\hat{m}_s$ and offset $s$, $matched \leftarrow 0$
        \FOR{$j=0,\dots,M-1$}
          \STATE \textbf{if} $\textsc{SafeDecode}(\mathbf{b}_s[jn : jn+n]) = \mathcal{Q}[j]$ \textbf{then} $matched \leftarrow matched + 1$
        \ENDFOR
        \STATE \textbf{if} $matched/M \ge best\_ratio$ \textbf{then} $best\_ratio \leftarrow matched/M$, $\hat{m}_{final} \leftarrow \hat{m}_s$
    \ENDFOR
    \STATE $\texttt{is\_wm} \leftarrow (best\_ratio \ge \theta)$
    \STATE \textbf{if} $\neg \texttt{is\_wm}$ \textbf{then} $\hat{m}_{final} \leftarrow \textbf{None}$
    \STATE \textbf{Return} $\texttt{is\_wm}, \hat{m}_{final}$
  \end{algorithmic}
\end{algorithm}

\subsection{Sliding Window Synchronization for Linear Shift Recovery}
\label{sec:method:window}

Our \textbf{Sliding Window Synchronization} mechanism ensures robustness against insertion and deletion attacks. Unlike prior circular-shift assumptions, edits induce \emph{linear shifts} across block boundaries, rendering static slicing ineffective. BREW addresses this by treating the bit sequence as a continuous stream. 
For detection, BREW uses a single global candidate offset set $\mathcal{S}=\{-s_{\max},\ldots,s_{\max}\}$. For each global offset candidate $s\in\mathcal{S}$, the detector applies the same offset $s$ to the entire token sequence to extract a shifted bit stream $\mathbf{b}^{(s)}$. It then partitions $\mathbf{b}^{(s)}$ into consecutive length-$n$ blocks $\mathbf{b}^{(s)}[jn:(j+1)n]$ for all $j$. Thus, $s$ is shared across all blocks and is not selected independently for each block.
For analysis, we upper-bound this global-offset detector by a more permissive per-block search model, where each block is allowed to choose its best offset in $\mathcal{S}$ independently. This relaxation can only increase the number of accidental matches, since maximizing per block is at least as permissive as choosing one global offset for all blocks. Therefore, the FPR bound derived under the per-block model is a conservative upper bound for the actual global-offset implementation.




\paragraph{Safe Decoding with Error Handling.}
To ensure robust \emph{incremental detection} even when individual blocks contain uncorrectable errors, we implement a safe decoding subroutine that gracefully handles decoder failures. The decoder accepts only codewords within the correction radius $t$ and returns \texttt{None} otherwise, preventing spurious matches while allowing other blocks to contribute to the watermark strength score. The concrete decoding procedure and full algorithm are provided in Appendix~\ref{Appendix:safe_decode}.

\paragraph{Mechanism of Linear Window Search.}
Our core detection algorithm~\ref{alg:detection} augments standard error-correcting decoding with systematic linear windowing.

\textit{Linear Search Rationale}: When insertions or deletions introduce a net shift in the extracted bit stream, the correct block boundaries may be displaced from their nominal positions. By evaluating global offset candidates $s\in[-s_{\max},s_{\max}]$, the detector tests shifted framings of the entire bit stream and selects the framing with the highest aggregate designated-codeword match ratio. This recovers synchronization when the net shift is within the search budget.


\paragraph{Limitations and Adaptive Extension.}
The current fixed-range search is designed for bounded global desynchronization. However, in extremely long texts where accumulated insertions or deletions exceed $s_{\max}$, the correct global framing may fall outside the candidate range. A possible future extension is to use adaptive synchronization that updates the global search center across longer spans of text. This extension is orthogonal to our current detector, which evaluates a shared global offset candidate over all blocks.

\subsection{Parameter Selection and Optimization}
We provide general guidelines for parameter selection, focusing on block length $n$, 
bias parameter $\delta$, and maximum shift $s_{\max}$. 
These parameters govern the trade-off between robustness, detection accuracy, and text quality. 
Comprehensive trade-off analyses and recommended configurations are deferred to Appendix~\ref{Appendix:parameter}.

\subsection{Computational Complexity Analysis and Security Properties}
The embedding procedure has the same $O(|\mathcal{V}|)$ per-token complexity as existing methods, 
while detection introduces an additional factor proportional to the maximum shift $s_{\max}$. 
Formal derivations and detailed complexity expressions are given in Appendix~\ref{Appendix:complexity}.
Our approach inherits the security guarantees of the underlying hash function and error-correcting code, while introducing additional resilience via block-wise embedding and codeword diversity. 
A full discussion of key security, codeword diversity, and block independence is provided in Appendix~\ref{Appendix:security}.

\section{Analytical Bounds for FPR/FNR in ECC-Backed Watermarks}
\label{sec:theory}

This section develops finite-sample bounds for the proposed watermarking scheme based on blockwise \emph{codeword-presence} detection with window-shifting. We quantify false-positive (FPR) and false-negative (FNR) probabilities under general $q$-ary linear codes, and isolate the role of the embedding bias parameter $\delta$ in the soft-embedding regime. We summarize here the setup and key intuition, while deferring detailed theorems and proofs to Appendix~\ref{Appendix:finite}.

\paragraph{Setup and Notation.}
We consider a $q$-ary linear block code $C\subseteq \Sigma^n$ with unique-decoding radius $t$. Each text block embeds a designated codeword via $\delta$-biased sampling from a green/red partition of the vocabulary. Detection is performed by unique decoding with window-shifting to counter misalignments. (See Appendix~\ref{Appendix:finite:setup} for detailed definitions.)

\paragraph{False Positives.}
We analyze two types of tests: (i) a naïve ``any-codeword'' presence test, and (ii) the proposed designated-codeword test with window-shifting.  
Theorems~\ref{thm:any} and \ref{thm:single_fpr} (Appendix~\ref{Appendix:finite:any}, \ref{Appendix:finite:designated}) quantify single-block and aggregate FPR under these schemes, highlighting exponential suppression in the block length $n$ and the number of blocks $M$. 
Importantly, the aggregate independence assumption is imposed on the detector-facing matching indicators induced by block-specific hash partitions, not on the raw natural-language tokens; Appendix~\ref{Appendix:finite:aggfpr} provides the detailed justification.

\paragraph{False Negatives.}
The impact of soft embedding ($\delta$-bias) and adversarial edits is modeled via an effective symbol error probability $p_{\mathrm{tot}}$.  
Theorem~\ref{thm:fnr} (Appendix~\ref{Appendix:finite:fnr}) shows that the aggregate FNR decays exponentially in $M$ provided that $p_{\mathrm{tot}}<t/n$.

\paragraph{Design Implications.}
The combined FPR/FNR bounds yield a clear design rule: choose parameters $(n,t,s_{\max},\theta,M,\delta)$ so that $\theta$ balances the two Chernoff exponents, and $\delta$ is large enough to keep the embedding error below $t/n$. See Appendix~\ref{Appendix:finite:params} for proofs, examples, and entropy-based parameter guidelines.

\section{Results}
\label{sec:experiments}

\subsection{Experimental Setup}

\paragraph{Models and Datasets.} 
We evaluate on OPT-1.3B~\cite{zhang2022opt}, LLaMA-3.2-3B~\cite{touvron2023llama}, and Mistral-7B~\cite{jiang2023mistral7b} using the C4 and OpenGen datasets. Unless otherwise specified, OPT-1.3B on C4 is used as the default setting. \textit{Scalability and Model-Agnosticism:} BREW is \emph{model-agnostic}, operating exclusively on output logits independent of model architecture or size. Consequently, the observed performance trade-offs theoretically transfer to larger foundation models (e.g., 70B+), where higher entropy further facilitates embedding.

\paragraph{Baselines.}
We compare BREW against representative multi-bit watermarking methods.
Specifically, we include \textbf{MPAC}~\cite{yoo2024advancing}, a non-ECC multi-bit watermarking approach based on invariant linguistic features, and \textbf{\cite{qu2025provably}}, a recent ECC-based method using Reed--Solomon codes. For standard comparisons, we follow the official implementations with identical parameter settings. Crucially, to address potential concerns that the high FPR of baselines stems solely from the lack of a rejection threshold, we rigorously evaluate whether their performance can be redeemed by simple thresholding (a ``steel-manning'' stress test). We analyze the full TPR--FPR trade-off via ROC curves to verify if an optimal threshold exists or if the method fundamentally fails to separate watermarked from unwatermarked text under attacks.
\footnote{We also considered RBC~\cite{chao2024rbc}, which employs LDPC codes; however, we exclude it from evaluation because no public implementation is available and the reported LDPC parameters ($n=12$, $k=5$, $d_v$=3, $d_c$=4) violate the standard constraint $n \cdot d_v = (n - k) \cdot d_c$ (36 $\neq$ 28).}

\paragraph{Parameters and Detection.}
We adopt BCH($n=31$, $k=6$, $t=7$) for embedding due to its favorable sensitivity--FPR trade-off. We use \emph{soft} watermarking (defaulting over hard for better text quality) with insertion strength $\delta\in\{1.5,2.0,3.0,6.0\}$, with sensitivity to $\delta$ analyzed in Appendix~\ref{Appendix:Window_shift_Range_delta}.

During detection, we use a window-shift range of $s_{\max}\in\{0,1,3,5\}$ to recover token–codeword alignment under insertion or deletion attacks (Appendix~\ref{Appendix:Window_shift_Range}). 
Detection follows an \emph{incremental} protocol that counts recovered codewords: by default, a text is declared watermarked if at least one codeword is recovered, while a stricter two-codeword threshold is analyzed in Appendix~\ref{Appendix:threshold_two_codewords}. 
We adopt the structured detector in all experiments, as the na\"{i}ve variant yields consistently high false positive rates (Appendix~\ref{Appendix:no_attack}). 
We evaluate texts truncated to $T\in\{200,500\}$ and report standard detection metrics (TPR, FPR, Precision, and F1), with primary analysis based on ROC curves.

\paragraph{Implementation Details.}

All experiments are conducted using the MarkLLM evaluation framework~\cite{pan-etal-2024-markllm}, 
with all baseline methods evaluated via their official implementations and our watermark embedding and detection algorithms reimplemented as custom modules under the same evaluation pipeline.

\subsection{Effect of Codeword Parameters}
\label{sec:codeword_effect}

\begin{table}[t]
  \caption{Detection performance (TPR/FPR) of different BCH codeword configurations at $\delta=3$ and $s_{\max}=5$ under token-increasing (insertion-like) synonym substitution attacks ($T=200$).}
  \label{table:bch_delta3_s5}
  \centering
  \resizebox{\columnwidth}{!}{%
  \begin{tabular}{l ccc ccc}
    \toprule
    & \multicolumn{3}{c}{\textbf{5\% Insertion}}
    & \multicolumn{3}{c}{\textbf{10\% Insertion}} \\
    \cmidrule(lr){2-4}\cmidrule(lr){5-7}
    & \textbf{(15,5,3)} & \textbf{(31,6,7)} & \textbf{(63,7,15)}
    & \textbf{(15,5,3)} & \textbf{(31,6,7)} & \textbf{(63,7,15)} \\
    \midrule
    \textbf{TPR}
    & 0.990 & 0.930 & 0.710
    & 1.000 & 0.710 & 0.305 \\
    \textbf{FPR}
    & 0.945 & 0.085 & 0.000
    & 0.930 & 0.110 & 0.005 \\
    \bottomrule
  \end{tabular}}
\end{table}
Before evaluating robustness under various attacks, we examine the effect of BCH codeword parameters on detection performance to identify a suitable default configuration under token-altering perturbations. We compare three BCH codeword configurations at $T=200$: a short $(15,5,3)$, a medium $(31,6,7)$, and a long $(63,7,15)$ codeword. As shown in Table~\ref{table:bch_delta3_s5}, shorter codewords tend to overreact to noise, resulting in excessively high false positive rates, whereas longer codewords overly suppress false positives at the cost of degraded true positive rates under insertion attacks. The medium configuration strikes a favorable balance between these extremes, maintaining high detection sensitivity while keeping false positives at a moderate level. Accordingly, we adopt the $(31,6,7)$ codeword as the default setting in subsequent experiments; full results including deletion-like substitutions are provided in Appendix~\ref{Appendix:codeword_para}.

\subsection{Synonym Substitution Attack}

\begin{figure}[t]
    \centering
    \includegraphics[width=\columnwidth]{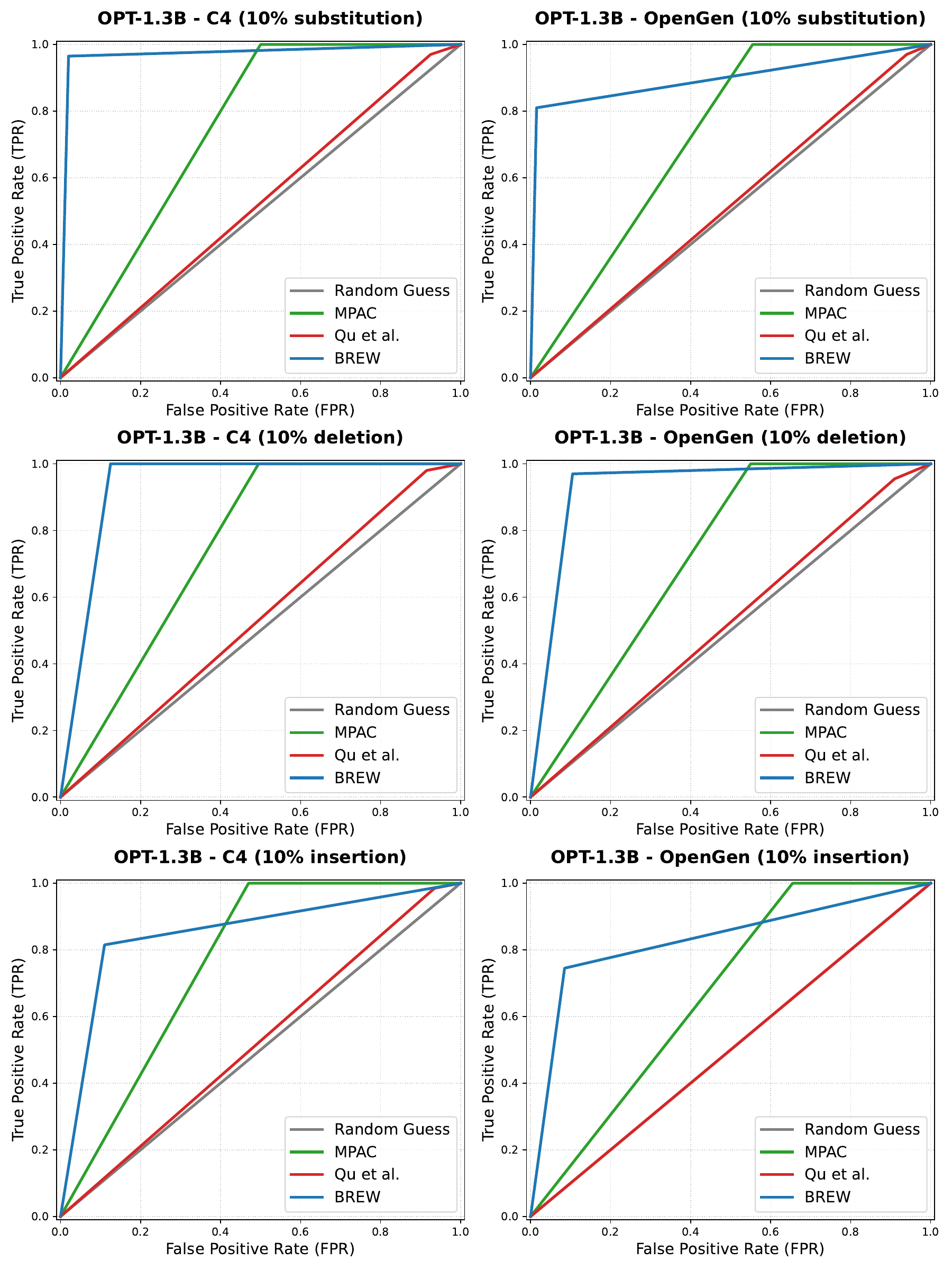}
    \caption{
    ROC curves under 10\% synonym substitution attacks on the OPT-1.3B model with text length $T=200$.
    \textbf{Top:} token-preserving substitutions;
    \textbf{Middle:} token-reducing (deletion-like) substitutions;
    \textbf{Bottom:} token-increasing (insertion-like) substitutions.
    Columns correspond to the C4 (left) and OpenGen (right) datasets.
    The figure compares detection performance of BREW, MPAC~\cite{yoo2024advancing}, and \cite{qu2025provably}.}
    \label{fig:ROC_OPT_1_3B_Final}
\end{figure}

We evaluate robustness under synonym substitution attacks, in which words in a watermarked text are replaced with semantically equivalent alternatives. Although substitutions are applied at the word level, they induce three distinct effects at the token level: (i) token-preserving, (ii) token-reducing (deletion-like), and (iii) token-increasing (insertion-like), which either preserve or disrupt token--codeword alignment. Experiments are conducted on the C4 and OpenGen datasets with substitution rates of 5\% and 10\%. Unless otherwise stated, MPAC~\cite{yoo2024advancing} and \cite{qu2025provably} are evaluated using their recommended watermark strength $\delta=6$, while BREW is evaluated with $\delta=6$ and a window-shift parameter $s_{\max}=5$ to tolerate limited token insertion and deletion effects. In the main text, we report results for the OPT-1.3B model under the 10\% substitution setting with text length $T=200$, which represents a challenging and practically relevant attack regime; comprehensive results covering additional substitution rates (5\%), text lengths ($T=500$), and backbone models are deferred to Appendix~\ref{Appendix:synsub_figures}. Detection performance is analyzed using ROC curves to characterize the trade-off between true positive and false positive rates.

\subsubsection{Token-Preserving Synonym Substitution}

Token-preserving substitutions maintain token--codeword alignment but perturb the statistical distribution of generated tokens. As illustrated in the \textbf{top} of Figure~\ref{fig:ROC_OPT_1_3B_Final}, BREW achieves a clear separation between watermarked and unwatermarked texts under 10\% synonym substitution on OPT-1.3B with $T=200$. MPAC~\cite{yoo2024advancing} exhibits intermediate performance, improving over \cite{qu2025provably} but remaining less reliable than BREW. In contrast, \cite{qu2025provably} behaves close to random-guessing, reflecting extremely high false positive rates even in the absence of token-level desynchronization. This result indicates that distributional perturbations alone are sufficient to induce frequent false detections in existing multi-bit schemes, whereas BREW remains robust when token alignment is preserved.

\subsubsection{Token-Altering Synonym Substitution (Deletion/Insertion-like)}

Token-altering substitutions disrupt alignment, making detection challenging. However, as shown in Figure~\ref{fig:ROC_OPT_1_3B_Final} (\textbf{middle}, \textbf{bottom}), BREW remains robust under deletion- and insertion-like attacks, maintaining favorable ROC separation via window-shift realignment (Appendix~\ref{Appendix:Window_shift_Range_delta}). MPAC~\cite{yoo2024advancing} shows intermediate performance, partially mitigating disruption but proving less reliable than BREW. In contrast, \cite{qu2025provably} approaches random guessing, confirming that token-level desynchronization leads to pervasive false detections.

\subsubsection{Analysis of Thresholding Effects (Steel-Manning)}
\label{sec:steelmanning}

A critical question regarding baseline fairness is whether the high FPR of \cite{qu2025provably} can be mitigated simply by applying a rejection threshold. Our ROC analysis in Figures~\ref{fig:ROC_OPT_1_3B_Final} and \ref{fig:Insertion_ROC_Final} serves as a ``steel-manning'' evaluation, sweeping all possible thresholds.
The results reveal that for \cite{qu2025provably}, TPR and FPR increase linearly together (following the random-guess diagonal $y=x$), implying that no optimal threshold exists to trade off FPR for TPR; strictly lowering FPR causes a proportional, catastrophic drop in TPR. In contrast, BREW's curve bows significantly toward the upper-left corner, maintaining high TPR ($>0.96$) even at strict low-FPR thresholds ($0.02$). This confirms that BREW's advantage is structural—stemming from \emph{window-shifting alignment}—and cannot be replicated in prior methods merely by tuning detection thresholds.

\subsection{Paraphrasing Attack}

\begin{figure}[t]
    \centering
    \includegraphics[width=\columnwidth]{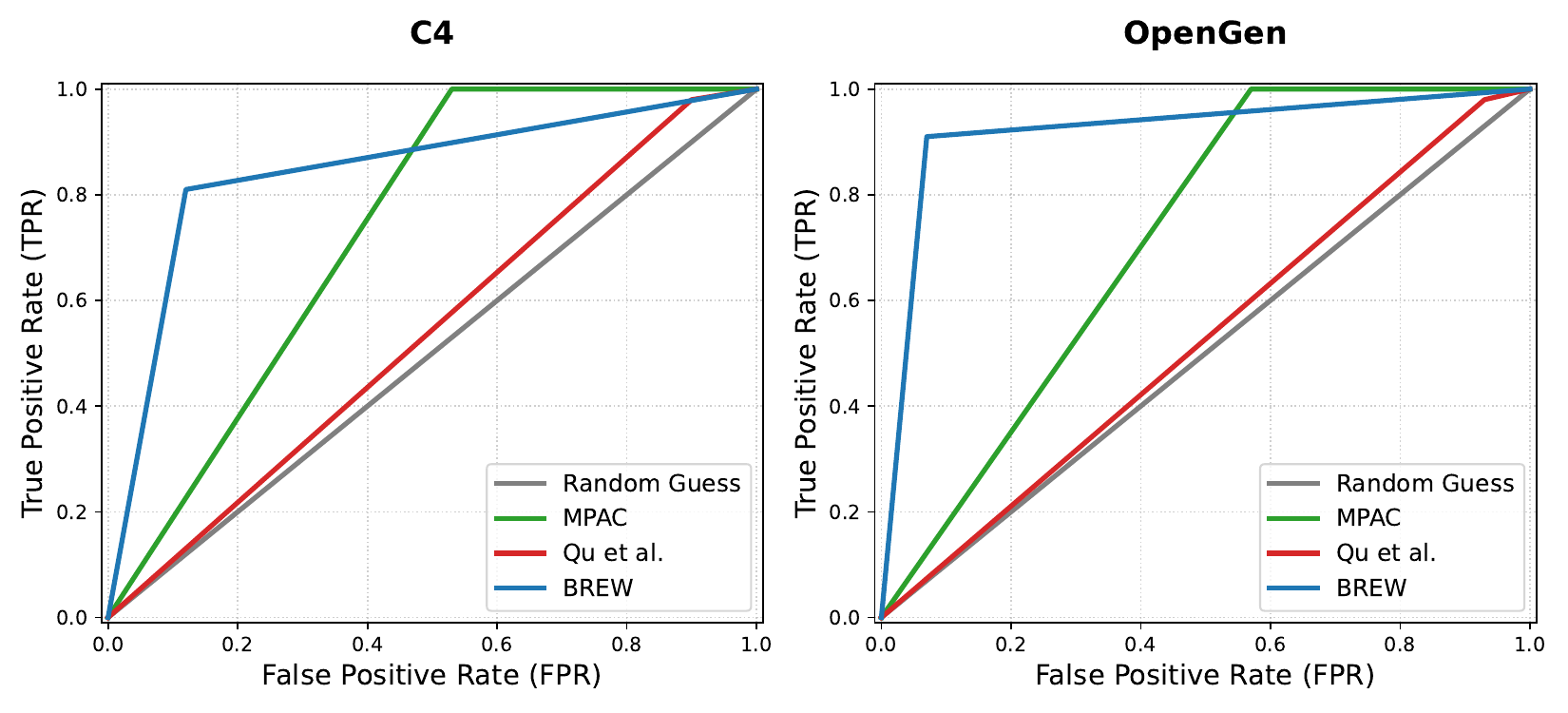}
    \caption{
    ROC curves under paraphrasing attacks on the OPT-1.3B model evaluated on the C4 and OpenGen datasets.
    The figure compares BREW, MPAC~\cite{yoo2024advancing}, and \cite{qu2025provably}.
    Detailed numerical results are provided in Appendix~\ref{Appendix:Paraphrasing_attacks}.
    }
    \label{fig:Paraphrasing_Attack}
\end{figure}

Paraphrasing attacks challenge detection by simultaneously altering token identity, length, and ordering. We evaluate robustness using a T5-based paraphraser~\cite{raffel2020exploring} on OPT-1.3B outputs from C4 and OpenGen. Figure~\ref{fig:Paraphrasing_Attack} shows that BREW maintains strong performance in the low false positive rate regime across both datasets. While MPAC~\cite{yoo2024advancing} attains higher TPR, it suffers from substantially increased false positives. In contrast, \cite{qu2025provably} approaches random guessing due to pervasive false detections. Overall, BREW proves robust to semantic rewriting by explicitly controlling false positives through structured evidence accumulation.

\section{Conclusion}
\label{sec:conclusion}

We proposed an incremental detection framework that addressing the high FPR of prior ECC-based watermarking. By shifting to designated verification, BREW effectively suppresses false positives while maintaining robust detection under attacks. Experiments confirm near-zero FPR and high TPR, significantly outperforming baselines such as MPAC and \cite{qu2025provably}. Crucially, our \emph{model-agnostic} design ensures scalability to large foundation models. While BREW prioritizes reliable short-payload verification under strict FPR control over very high payload capacity, longer payloads can be distributed across multiple blocks, yielding a flexible capacity--reliability trade-off. Future work will explore adaptive windowing to mitigate accumulated global drift in long texts while balancing alignment robustness and spurious-match risk.


\section*{Impact Statement}
This paper presents a method for improving the reliability of multi-bit watermark detection in large language models.
By focusing on false positive control under strong text transformations, such as paraphrasing and token-altering attacks, the proposed approach aims to support responsible provenance identification of machine-generated text.
In particular, reducing false positives helps minimize the risk of incorrectly attributing AI-generated content to human authors, which is an important consideration in academic and professional settings.
We do not foresee significant negative societal impacts beyond those already associated with text watermarking technologies, and the method is intended to complement existing efforts toward transparency and accountability in AI-generated content.

\section*{Acknowledgments}
This work was partly supported by the Institute of Information \& communications Technology Planning \& Evaluation (IITP) grant funded by the Korea government (MSIT) (RS-2024-00399401, Development of Quantum-Safe Infrastructure Migration and Quantum Security Verification Technologies, 40\%), the Institute of Information \& communications Technology Planning \& Evaluation (IITP) grant funded by the Korea government (MSIT) (RS-2024-00442085, Development of V2X Infra Security Core Technologies for Autonomous Vehicle Services, 40\%), and the Regional Innovation System \& Education (RISE) program through the Ulsan RISE Center, funded by the Ministry of Education (MOE) and the Ulsan Metropolitan City, Republic of Korea (2026-RISE-07-001, 20\%).




\nocite{langley00}

\bibliography{ICML2026_BCH_paper_ref}
\bibliographystyle{icml2026}

\newpage
\appendix
\onecolumn



\section{Detailed Algorithms}
\label{Appendix:algorithms}




\subsection{Diverse Codeword Generation}
\label{Appendix:codeword_generation}

This algorithm provides the detailed procedure for generating diverse codewords. As described in Section~\ref{sec:method}, we exclude the all-zero codeword and use maximum-weight pairs to maximize Hamming distance and improve robustness.

The operation $c_2 = c_1 \oplus c_{\max}$ may produce the all-zero codeword when $c_1 = c_{\max}$. To prevent this degenerate case, Algorithm~\ref{alg:codeword_generation} explicitly guards against $c_2=\mathbf{0}$ and selects the nonzero codeword $c_1$ instead. This ensures that the generated codeword set never contains the all-zero codeword, maintaining consistency with the designated-codeword detector.

\begin{algorithm}[tb]
  \caption{Diverse Codeword Generation}
  \label{alg:codeword_generation}
  \begin{algorithmic}
    \STATE {\bfseries Input:} error-correcting code $\mathcal{C}$ with parameters $(n,k,t)$, secret key $\mathcal{K}$
    \STATE {\bfseries Output:} diverse codeword set $\mathcal{Q}$

    \STATE Define message space $\mathcal{M} \leftarrow \{0,1\}^k \setminus \{0^k\}$
    \STATE Initialize codeword set $\mathcal{Q} \leftarrow \emptyset$
    \STATE Find maximum-weight codeword
    \STATE \hspace{1em}$c_{\max} \leftarrow \arg\max_{c \in \mathcal{C}} \mathrm{wt}(c)$

    \WHILE{$|\mathcal{Q}| <$ required number of blocks}
      \STATE Sample a random message $m$ uniformly from $\mathcal{M}$
      \STATE Encode the message to obtain $c_1 \leftarrow \mathcal{E}(m)$
      \STATE Compute a paired codeword $c_2 \leftarrow c_1 \oplus c_{\max}$
      \IF{$c_2 = \mathbf{0}$}
        \STATE Set $c \leftarrow c_1$ \COMMENT{avoid the all-zero codeword}
      \ELSE
        \STATE Randomly select $c \in \{c_1, c_2\}$
      \ENDIF
      \STATE $\mathcal{Q} \leftarrow \mathcal{Q} \cup \{c\}$
    \ENDWHILE
  \end{algorithmic}
\end{algorithm}

\subsection{Bit Sequence Extraction}
\label{Appendix:extraction}

For completeness, we provide the full pseudocode of the bit extraction procedure that maps generated tokens to binary sequences and segments them into fixed-length blocks. This expands on the conceptual description given in Section~\ref{sec:method}.

\begin{algorithm}[tb]
  \caption{Continuous Bit Stream Extraction with Alignment}
  \label{alg:extraction}
  \begin{algorithmic}
    \STATE {\bfseries Input:} text tokens $\{s_0,\ldots,s_T\}$, secret key $\mathcal{K}$, alignment offset $\Delta$
    \STATE {\bfseries Output:} continuous bit stream $\mathbf{b} \in \{0,1\}^U$

    \STATE $N_p \leftarrow$ prompt length
    \STATE $U \leftarrow T - N_p$
    \STATE Initialize bit array $\mathbf{b}$ of size $U$
    \STATE $j_{\text{prev}} \leftarrow -1$

    \FOR{$t = N_p$ {\bfseries to} $T-1$}
      \STATE $idx \leftarrow t - N_p$
      \STATE \COMMENT{Adjust block index based on the alignment offset $\Delta$}
      \STATE $j \leftarrow \lfloor (idx + \Delta) / n \rfloor$ 
      
      \IF{$j \neq j_{\text{prev}}$}
        \STATE $\text{seed}_j \leftarrow H(\mathcal{K}, j)$
        \STATE Build partitions $\mathcal{L}_0^{(j)}, \mathcal{L}_1^{(j)}$ using $\text{seed}_j$
        \STATE Precompute lookup function $f_j(v) \in \{0,1\}$
        \STATE $j_{\text{prev}} \leftarrow j$
      \ENDIF
      \STATE $\mathbf{b}[idx] \leftarrow f_j(s_t)$
    \ENDFOR

    \STATE \textbf{return} $\mathbf{b}$
  \end{algorithmic}
\end{algorithm}

\subsection{Safe Error-Correcting Decoder}
\label{Appendix:safe_decode}

This algorithm expands on the safe decoding strategy summarized in Section~\ref{sec:method}. 
It ensures robust handling of uncorrectable blocks by returning \texttt{None} when decoding exceeds the correction radius.

\begin{algorithm}[tb]
  \caption{Safe Error-Correcting Decoder}
  \label{alg:safe_decode}
  \begin{algorithmic}
    \STATE {\bfseries Input:} bit sequence $x \in \{0,1\}^n$, code $\mathcal{C}$, correction radius $t$
    \STATE {\bfseries Output:} $(\hat{c}, d)$ if decodable within $t$; otherwise \texttt{None}

    \STATE Ensure field and parity-check structures for $\mathcal{C}$ are initialized
    \STATE Attempt to decode $x$ using $\mathcal{C}$ and obtain $(\hat{c}, d)$

    \IF{decoding fails}
      \STATE \textbf{return} \texttt{None}
    \ENDIF

    \IF{$d \le t$}
      \STATE \textbf{return} $(\hat{c}, d)$
    \ELSE
      \STATE  \textbf{return} \texttt{None}
    \ENDIF
  \end{algorithmic}
\end{algorithm}

\section{Additional Analyses}
\label{Appendix:additional}

\subsection{Error-Correcting Code Selection and Parameterization}
\label{Appendix:ecc}

We primarily employ BCH codes \cite{blahut1983theory} due to their well-understood properties and efficient implementation. 
For code length $n = 2^m - 1$, we select parameters based on the trade-off between error-correction capability and false-positive rates:

\begin{itemize}
\item \textbf{BCH(15,5,3)}: Short blocks with moderate error-correction capability, suitable for shorter text segments.
\item \textbf{BCH(31,6,7)}: The default configuration in our experiments, balancing block length, robustness, and detection reliability.
\item \textbf{BCH(63,7,15)}: A longer-block configuration with higher error-correction capability, evaluated for stronger attack scenarios.
\end{itemize}


\subsubsection{Alternative Error-Correcting Codes}

Our framework readily accommodates other linear block codes \cite{Richardson2008ecc}:

\textbf{Reed--Solomon Codes}: Optimal for burst error correction, particularly effective when insertion/deletion attacks create localized corruption patterns.

\textbf{LDPC Codes}: Superior performance for longer blocks, but increased computational complexity. Recommended for applications requiring very low false positive rates.

\textbf{Convolutional Codes}: Well-suited for streaming applications where text is generated and detected incrementally.

\textbf{Code Selection Guidelines}:
\begin{itemize}
\item Choose code length $n$ based on expected text length and block granularity requirements
\item Select error-correction capability $t$ based on anticipated attack strength
\item Balance code rate $k/n$ against false positive requirements using our theoretical analysis (Section~\ref{sec:theory})
\end{itemize}

\subsection{Parameter Selection and Optimization}
\label{Appendix:parameter}

\subsubsection{Block Length Optimization}

The choice of block length $n$ involves several trade-offs:

\textbf{Shorter blocks} ($n \leq 31$): 
\begin{itemize}
\item Advantages: Better localization of insertion/deletion effects, faster detection
\item Disadvantages: Higher false positive rates, reduced error-correction capability
\end{itemize}

\textbf{Longer blocks} ($n \geq 63$):
\begin{itemize}
\item Advantages: Lower false positive rates, stronger error correction
\item Disadvantages: Larger vulnerability to insertion/deletion within blocks
\end{itemize}

We recommend $n = 31$ for most applications, providing a good balance between robustness and efficiency.

\subsubsection{Bias Parameter Tuning}

The bias parameter $\delta$ controls the strength of watermark embedding:

\begin{itemize}
\item $\delta \in [1.5, 2.0]$: Minimal text quality impact, moderate watermark strength
\item $\delta \in [2.0, 2.5]$: Balanced trade-off for most applications  
\item $\delta \in [2.5, 3.0]$: Strong watermarking for high-security scenarios
\end{itemize}

\subsubsection{Window-Shift Range}
\label{Appendix:Window_shift_Range}

The maximum shift parameter $s_{\max}$ should be chosen based on expected insertion/deletion rates:

\begin{align}
s_{\max} &\geq \alpha \cdot n \cdot p_{\text{ins/del}}
\end{align}

where $p_{\text{ins/del}}$ is the expected insertion/deletion rate and $\alpha \geq 1.5$ provides a safety margin. 
In our experiments with $n=31$, we use $s_{\max}=5$ as a practical default, which provides a favorable balance between alignment recovery and spurious-match risk. Larger values can improve robustness to longer accumulated drift, but also increase the number of candidate framings and may raise FPR.

\subsection{Capacity--Reliability Trade-off}
\label{Appendix:capacity_tradeoff}

BREW intentionally prioritizes reliable short-payload verification under strict
false-positive constraints over maximizing raw payload capacity. Conventional
ECC-based extraction accepts any valid codeword within the correction radius,
whereas designated verification accepts only the keyed target codeword assigned
to each block. This reduces the acceptance space from roughly $2^k$ valid
codewords to one designated codeword at the block level.

In our default BCH$(31,6,7)$ setting, this corresponds to sacrificing up to
$k=6$ bits of unconstrained acceptance capacity per block, or approximately
$5.85$ effective bits after accounting for the nonzero-codeword construction
used in our implementation. This reduction is structural: it trades raw payload
capacity for strong false-positive control. For applications requiring longer
payloads, messages can be distributed across multiple independent blocks, while
the designated verification step maintains strict control over spurious
matches.

\subsection{Computational Complexity Analysis}
\label{Appendix:complexity}

\subsubsection{Embedding Complexity}

The computational overhead during text generation consists of:
\begin{itemize}
\item Hash computation: $O(1)$ per token
\item Vocabulary partitioning: $O(|\mathcal{V}|)$ per block, amortized $O(|\mathcal{V}|/n)$ per token
\item Logit modification: $O(|\mathcal{V}|)$ per token
\end{itemize}

Total embedding complexity: $O(|\mathcal{V}|)$ per token, the same as existing methods.

\subsubsection{Detection Complexity}

Detection complexity depends on the number of shift operations:
\begin{itemize}
\item Bit extraction: $O(T)$ for text length $T$
\item Error correction per block: $O(n^3)$ using standard algorithms
\item Window-shifting: $O(s_{\max} \cdot n^3)$ per block in the worst case
\end{itemize}

Total detection complexity: $O(T \cdot s_{\max} \cdot n^2)$, where the factor $s_{\max}$ represents the overhead of shift search. For practical parameters ($s_{\max} = 10$, $n = 31$), this remains computationally tractable.

\subsection{Security Properties}
\label{Appendix:security}
Our method inherits the cryptographic properties of the underlying hash function and error-correcting code while providing additional security benefits through block-wise design.

\textbf{Key Security}: The secret key $\mathcal{K}$ determines vocabulary partitioning and codeword selection. Without knowledge of $\mathcal{K}$, an adversary cannot distinguish watermarked from unwatermarked text beyond statistical artifacts under the assumed one-wayness of the cryptographic hash function.

\textbf{Codeword Diversity}: Random codeword generation prevents statistical attacks based on repeated patterns. Each text embeds different codewords, making it infeasible to infer watermarking parameters from multiple samples.

\textbf{Block Independence}: Unlike methods that embed a single codeword across multiple blocks, our approach ensures that the compromise of one block does not affect others, providing better security compartmentalization.

The following section provides a formal theoretical analysis of detection bounds and false positive rates under our framework.

\section{Finite-sample Bounds: Detailed Proofs and Examples}
\label{Appendix:finite}

This appendix contains the complete derivations, theorems, proofs, and examples for the finite-sample bounds introduced in Section~\ref{sec:theory}.

\subsection{Setup and Notation}
\label{Appendix:finite:setup}
Let $\Sigma=\{0,1,\dots,q-1\}$ and let $C\subseteq \Sigma^n$ be a $q$-ary linear block code with length $n$, dimension $k$, and minimum Hamming distance $d_{\min}$. Its unique-decoding radius is $t=\lfloor (d_{\min}-1)/2 \rfloor$. Define the $q$-ary Hamming ball volume as
\begin{equation}
V_q(n,t)\triangleq\sum_{i=0}^{t}\binom{n}{i}(q-1)^i.
\label{eq:Vq}
\end{equation}

A text is partitioned into $M$ disjoint blocks. For block $j\in\{1,\dots,M\}$, a secret seed $\mathrm{seed}_j$ (derived from a global key and block index) deterministically specifies (i) a single \emph{designated} codeword $c^{(j)}\in C$ to be embedded in that block and (ii) a partition of the vocabulary into green/red (or more generally $q$-ary) token lists aligned with the symbols of $c^{(j)}$. 

\paragraph{Embedding.}
In \emph{soft} embedding, logits of tokens in the green list are shifted by $+\delta$ while others are left unchanged, and a token is sampled from the resulting softmax. In \emph{hard} embedding, sampling is restricted to the green list (formally, $\delta\to\infty$).

\paragraph{Detection.}
Given a candidate text, the detector extracts a $q$-ary symbol sequence $b^{(j)}\in\Sigma^n$ from each block $j$ according to the green/red partition induced by $\mathrm{seed}_j$, and applies a unique decoder for $C$ to decide whether $b^{(j)}$ lies within Hamming distance $\le t$ from the designated codeword $c^{(j)}$. To counter local misalignments caused by insertions or deletions, the detector searches over a bounded set of linear offsets of magnitude at most $s_{\max}$ around the nominal block boundary. We denote by $S \triangleq 2s_{\max}+1$ the number of candidate offsets (including zero). The global decision is based on the fraction of blocks that decode successfully: if the match ratio exceeds a threshold $\theta\in(0,1)$, the text is declared watermarked. Although Algorithm~\ref{alg:detection} applies a single global offset across all blocks, the theoretical analysis considers a per-block shift search for tractability. Since taking the maximum over offsets per block yields a conservative upper bound, all FPR guarantees derived under the per-block model remain valid for Algorithm~\ref{alg:detection}.

\paragraph{Stochastic Model.}
Under $\mathcal{H}_0$ (no watermark), we do not assume that the raw token sequence is uniformly random or independent. Instead, we condition on any fixed text $X$, which may exhibit arbitrary linguistic dependencies, and attribute all randomness to the secret key $\mathcal{K}$.
Under the Random Oracle Model (ROM), the block-specific seeds $\mathrm{seed}_j = H(\mathcal{K}, j)$ are independent random variables across blocks. Since the detector-facing symbol observations and corresponding detection events are deterministic functions of the fixed text and $\mathrm{seed}_j$, the block-level detection indicators are mutually independent under $\mathcal{H}_0$.

Thus, the independence assumption does not rely on properties of natural language, but follows from the pseudorandomness induced by the secret key. Under $\mathcal{H}_1$ (watermark present), each block independently suffers symbol errors (from soft embedding and/or adversarial editing) with per-symbol error probability $p_{\mathrm{tot}}\in[0,1]$, and a bounded linear misalignment of at most $s_{\max}$ symbols may be introduced due to insertions or deletions.

\subsection{Na\"{i}ve ``Any-codeword'' Presence Test}
\label{Appendix:finite:any}
Consider the (undesirable) test that declares a watermark if \emph{there exists} any codeword of $C$ within Hamming distance $t$ of the observed block.

\begin{theorem}[FPR of the any-codeword test]
\label{thm:any}
If $t\le \lfloor(d_{\min}-1)/2\rfloor$ so that Hamming balls of radius $t$ around distinct codewords are disjoint, then under $\mathcal{H}_0$ the single-block false-positive probability of the any-codeword test is
\begin{equation}
\mathrm{FPR}_{\textsc{any}} = \frac{|C|V_q(n,t)}{q^n}= q^{k-n}V_q(n,t).
\end{equation}
\end{theorem}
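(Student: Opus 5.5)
The plan is to show that, under $\mathcal{H}_0$, the extracted block $b^{(j)}$ is uniformly distributed on $\Sigma^n$, and then to count the accepting set exactly.

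\textbf{Step 1 (uniformity).} By the stochastic model of Appendix~\ref{Appendix:finite:setup}, we condition on a fixed text $X$, and all randomness comes from $\mathrm{seed}_j=H(\mathcal{K},j)$ in the Random Oracle Model. Each symbol $b^{(j)}_i$ is the label that the keyed partition assigns to the $i$-th token of the block.

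If the $n$ tokens in the block are distinct, the random-oracle labels are i.i.d.\ uniform on $\Sigma$. Hence $b^{(j)}$ is uniform on $\Sigma^n$.

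Repeated tokens within a block are the delicate point, since they receive identical labels. I would handle this either by stating it as a modeling assumption (as the paper implicitly does), or by letting the hash also take the position $i$ as input so that labels are independent per position. This is the step I expect to need the most care. The counting below is mechanical once uniformity holds.

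\textbf{Step 2 (accepting set).} The any-codeword test accepts exactly when
\[
b^{(j)}\in A\triangleq\bigcup_{c\in C}B(c,t),
\]
where $B(c,t)$ is the Hamming ball of radius $t$ around $c$.

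\textbf{Step 3 (disjointness).} Suppose $c\neq c'$ and some $x$ lies in both balls. The triangle inequality then gives $d(c,c')\le 2t\le d_{\min}-1$, which contradicts the definition of $d_{\min}$. So the balls are pairwise disjoint.

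\textbf{Step 4 (ball volume).} A word at distance exactly $i$ from $c$ is obtained by choosing $i$ of the $n$ positions and one of $q-1$ alternative symbols in each. There are $\binom{n}{i}(q-1)^i$ such words, so $|B(c,t)|=V_q(n,t)$ as in \eqref{eq:Vq}.

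\textbf{Step 5 (size of $A$).} By disjointness, $|A|=|C|\,V_q(n,t)$. Linearity of $C$ gives $|C|=q^k$.

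\textbf{Step 6 (conclusion).} Uniformity gives
\[
\Pr[b^{(j)}\in A]=\frac{|A|}{q^n}=\frac{|C|\,V_q(n,t)}{q^n}=q^{k-n}V_q(n,t),
\]
which is the claimed formula.

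This probability is taken over the key for every fixed text. It is therefore also the FPR averaged over any distribution of unwatermarked texts.
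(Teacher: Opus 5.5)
Your proposal is correct and follows the paper's proof. Both arguments take the block to be uniform on $\Sigma^n$ under $\mathcal{H}_0$, write the acceptance event as a disjoint union of $|C|=q^k$ Hamming balls of volume $V_q(n,t)$, and divide by $q^n$. The paper simply asserts uniformity as part of its null model, so your triangle-inequality disjointness argument and your caveat about repeated tokens add rigor rather than change the route; that caveat is genuine, since the position-independent partition $H(\mathrm{seed}_j,v)$ gives repeated tokens identical labels.
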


\begin{proof}
Under $\mathcal{H}_0$, the block is uniform on $\Sigma^n$. The event ``within distance $t$ of \emph{some} codeword'' is the disjoint union of the $|C|$ Hamming balls of radius $t$, each of volume $V_q(n,t)$. The probability is therefore $|C|V_q(n,t)/q^n$. 
\end{proof}

\begin{remark}[Binary specialization and magnitude]
For $q=2$, the Hamming ball volume is $V_2(n,t)=\sum_{i=0}^{t}\binom{n}{i}$. 
Even for moderate parameters, this quantity can be large: for BCH-like $(n,t)=(31,7)$,
$V_2=3{,}572{,}224$, yielding
$\mathrm{FPR}_{\textsc{any}} = 2^{k-n} V_2$.
For the default BCH$(31,6,7)$ configuration used in our experiments, this gives
$\mathrm{FPR}_{\textsc{any}} \approx 0.106$, which is unacceptably high.
This motivates the designated-codeword test introduced below.
\end{remark}

\subsection{Designated-Codeword Test}
\label{Appendix:finite:designated}
Our scheme designates exactly one valid codeword per block $j$ via $\mathrm{seed}_j$; only proximity to this designated codeword is considered. The following analysis characterizes the detector-facing $q$-ary symbol sequence under an idealized hash-induced null model. It should not be interpreted as assuming that raw natural-language tokens are uniformly random.

\begin{theorem}[Single-block FPR under designated-codeword test]
\label{thm:single_fpr}
Condition on a fixed unwatermarked text under $\mathcal{H}_0$ and consider the detector-facing $q$-ary block induced by the secret-key hash partition. Under the idealized uniform $q$-ary null model, the single-block false positive rate (FPR) for the designated-codeword test is
\begin{equation}
p_0 = \frac{V_q(n,t)}{q^n}.
\label{eq:p0}
\end{equation}
When detection is performed using a sliding window over
$S=2s_{\max}+1$ bounded linear offsets, the FPR obeys the union bound
\begin{equation}
p_0^{(\mathrm{shift})} \le \min\{1,Sp_0\}.
\label{eq:shift_union}
\end{equation}
If the $S$ shifted decoding events are independent or have negligibly overlapping acceptance regions, then
\begin{equation}
p_0^{(\mathrm{shift})} = 1-(1-p_0)^S = Sp_0 + O(p_0^2).
\label{eq:shift_exact}
\end{equation}
\end{theorem}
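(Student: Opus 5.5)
The plan is to mirror the proof of Theorem~\ref{thm:any}, restricted to a single acceptance ball. First, under the idealized uniform $q$-ary null model, the detector-facing block $b^{(j)}$ is uniform on $\Sigma^n$ conditional on the fixed text $X$. The randomness here comes entirely from the key, in the Random Oracle Model. Second, the designated codeword $c^{(j)}$ is a fixed element of $C$ once $\mathrm{seed}_j$ is given. If one worries that $c^{(j)}$ and the partition share the seed, use the translation invariance of the uniform measure: for any fixed $c$, the shifted block $b^{(j)}-c$ (coordinatewise mod $q$) is again uniform. Third, the acceptance event is exactly $\{d_H(b^{(j)},c^{(j)})\le t\}$, because the unique decoder returns $c^{(j)}$ if and only if the received word lies in the radius-$t$ ball around it. Such a ball contains $V_q(n,t)$ words, since choosing $i$ positions and a nonzero error in each gives $\binom{n}{i}(q-1)^i$ words. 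Dividing by $q^n$ yields \eqref{eq:p0}. Unlike Theorem~\ref{thm:any}, no disjointness of balls is needed, since only one ball is involved.

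For the shifted detector, define $A_s$ as the event that the window at offset $s\in\mathcal{S}$ lies within distance $t$ of $c^{(j)}$. By the per-block relaxation described in Section~\ref{sec:method:window}, the detector accepts on $\bigcup_{s}A_s$. Each shifted window is again a detector-facing block, so the same argument gives $\Pr[A_s]=p_0$ under the idealized model. Boole's inequality then gives $\Pr[\bigcup_s A_s]\le Sp_0$. Since any probability is at most $1$, this yields \eqref{eq:shift_union}. The bound needs no assumption about dependence between the shifted windows. This matters because windows at neighbouring offsets share up to $n-|s-s'|$ symbols.

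Under the additional hypothesis of independent events, complementation gives $\Pr[\bigcup_s A_s]=1-\prod_s(1-p_0)=1-(1-p_0)^S$. A binomial expansion shows this equals $Sp_0-\binom{S}{2}p_0^2+\dots$, that is, $Sp_0+O(p_0^2)$ with the constant depending on $S$. Under the alternative hypothesis of negligible overlap of the acceptance regions, inclusion--exclusion shows the pairwise terms are of order at most $p_0^2$. The union then matches $Sp_0$ up to $O(p_0^2)$, so the stated equality holds as an approximation.

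The main obstacle is justifying that each window, in particular a shifted one, is uniform on $\Sigma^n$. A shifted window straddles two blocks and is read with the partitions of both $\mathrm{seed}_j$ and $\mathrm{seed}_{j+1}$. This is exactly why the statement is phrased under the idealized null model. I would handle it by noting that, under the ROM, each token's symbol is a fresh hash output $H(\mathrm{seed},v)\bmod q$, uniform for each distinct (seed, token) pair. Repeated tokens within one block are the exception, and the idealized model sets that case aside. I would not attempt to prove exact independence across offsets, and would instead rely on the union bound for the rigorous statement.
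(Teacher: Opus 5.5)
Your proposal is correct and follows the paper's proof step for step: count the $V_q(n,t)$ words in the radius-$t$ ball around the fixed designated codeword under the uniform null model, apply Boole's inequality over the $S$ offsets, and multiply complement probabilities under independence. Your extra remarks (the acceptance event is exactly a distance-$\le t$ condition for a bounded-distance decoder, neighbouring windows overlap, and the negligible-overlap case should be read as an approximation) refine what the paper leaves implicit. One small correction to your aside: translation invariance alone does not remove a shared-seed dependence between $c^{(j)}$ and $b^{(j)}$ (if $c^{(j)}$ could depend on $b^{(j)}$, you would recover the any-codeword rate of Theorem~\ref{thm:any}); what you actually need is that $b^{(j)}$ is uniform conditionally on $c^{(j)}$, for example via domain-separated hash inputs under the ROM, which is exactly what the paper's fixed-codeword idealization assumes.
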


\begin{proof}
For a fixed designated codeword $c^{(j)}$, the idealized detector-facing $q$-ary null model assigns equal probability to each vector in $\{0,\ldots,q-1\}^n$. The number of vectors within Hamming radius $t$ of $c^{(j)}$ is $V_q(n,t)$, which gives \eqref{eq:p0}. Searching $S$ shifted framings gives at most $S$ opportunities to enter an acceptance region, yielding the union bound in \eqref{eq:shift_union}. Under the additional independence approximation across shifted decoding events, the complement probabilities multiply, yielding \eqref{eq:shift_exact}.
\end{proof}

\begin{remark}[Entropy bound]
For any $q$, $V_q(n,t)\le q^{\,n H_q(t/n)}$ where $H_q(\cdot)$ is the $q$-ary entropy. Thus
\begin{equation}
p_0 \le q^{-n(1-H_q(t/n))}, 
\qquad 
p_0^{(\mathrm{shift})}\lesssim S\,q^{-n(1-H_q(t/n))}.
\label{eq:entropy_bound}
\end{equation}
This highlights the exponential FPR decay in $n$ at fixed $t/n$.
\end{remark}

\begin{example}[Binary instances]
For $q=2$ and $(n,t)=(31,7)$, which corresponds to our default experimental configuration, $p_0=3,572,224/2^{31}\approx 1.6634 \times 10^{-3}$. With $s_{\max}=5$ ($S=11$), $p_0^{(\mathrm{shift})}\approx 1.81 \times 10^{-2}$ via \eqref{eq:shift_exact}. For $(n,t)=(63,3)$ and $(127,5)$, $p_0\approx 4.52 \times 10^{-15}$ and $1.56 \times 10^{-30}$, respectively.
\end{example}

\subsection{Aggregate FPR with a Match-Ratio Threshold}
\label{Appendix:finite:aggfpr}

We first relate the analysis to Algorithm~\ref{alg:detection}, which evaluates one global offset candidate at a time. Let $X_j^{(s)}$ indicate whether block $j$ matches its designated codeword under global offset candidate $s\in\mathcal{S}$, and define $Y_j=\max_{s\in\mathcal{S}}X_j^{(s)}$. Then,
\[
\max_{s\in\mathcal{S}}\frac{1}{M}\sum_{j=1}^{M}X_j^{(s)}
\le
\frac{1}{M}\sum_{j=1}^{M}\max_{s\in\mathcal{S}}X_j^{(s)}
=
\frac{1}{M}\sum_{j=1}^{M}Y_j .
\]
Thus, the per-block maximization model analyzed below is more permissive than the actual global-offset detector, and its FPR bound is a conservative upper bound for Algorithm~\ref{alg:detection}.

Let $Y_j$ be the indicator that block $j$ matches under the best offset in $\mathcal{S}$ under $\mathcal{H}_0$. Write $p\triangleq p_0^{(\mathrm{shift})}$.

\begin{theorem}[Aggregate FPR under thresholding]
\label{thm:agg_fpr}
Assume $\{Y_j\}_{j=1}^M$ are independent Bernoulli random variables with
$\Pr[Y_j=1]\le p$. Then for any threshold $\theta$ satisfying $p<\theta<1$,
\begin{equation}
\Pr_{\mathcal{H}_0} \bigg[\frac{1}{M}\sum_{j=1}^M Y_j \ge \theta\bigg]
\le
\exp\Big(-M\,D(\theta\Vert p)\Big),
\end{equation}
where $D(a\Vert b)=a\log\frac{a}{b}+(1-a)\log\frac{1-a}{1-b}$ is the Bernoulli KL divergence.
\end{theorem}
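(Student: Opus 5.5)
This is the standard Chernoff--Hoeffding upper-tail bound for sums of independent Bernoulli variables. The plan has two parts: first reduce to the case where every $Y_j$ has success probability exactly $p$, then apply the exponential Markov inequality with an optimized tilt. Let $S_M=\sum_{j=1}^M Y_j$ and $p_j=\Pr[Y_j=1]\le p$.

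\textbf{Step 1 (moment generating function).} Fix $\lambda>0$. Markov's inequality applied to $e^{\lambda S_M}$ gives
\[
\Pr[S_M\ge M\theta]\le e^{-\lambda M\theta}\,\mathbb{E}[e^{\lambda S_M}].
\]
By independence, $\mathbb{E}[e^{\lambda S_M}]=\prod_j (1-p_j+p_j e^{\lambda})$. Each factor $1-p_j+p_je^{\lambda}=1+p_j(e^\lambda-1)$ is nondecreasing in $p_j$ because $\lambda>0$, so each factor is at most $1-p+pe^{\lambda}$. This is where the hypothesis $\Pr[Y_j=1]\le p$ is used: it reduces everything to the i.i.d.\ $\mathrm{Bernoulli}(p)$ bound
\[
\Pr[S_M\ge M\theta]\le \exp\!\Big(-M\big[\lambda\theta-\log(1-p+pe^\lambda)\big]\Big).
\]

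\textbf{Step 2 (optimize $\lambda$).} Maximize $g(\lambda)=\lambda\theta-\log(1-p+pe^\lambda)$ over $\lambda>0$. Setting $g'(\lambda)=0$ gives
\[
\frac{pe^\lambda}{1-p+pe^\lambda}=\theta,
\qquad\text{that is,}\qquad
e^{\lambda^*}=\frac{\theta(1-p)}{p(1-\theta)}.
\]
The condition $p<\theta<1$ is exactly what makes $\lambda^*>0$, so the tilt is admissible. At this point $1-p+pe^{\lambda^*}=(1-p)/(1-\theta)$. Substituting,
\[
g(\lambda^*)=\theta\log\frac{\theta(1-p)}{p(1-\theta)}-\log\frac{1-p}{1-\theta}=\theta\log\frac{\theta}{p}+(1-\theta)\log\frac{1-\theta}{1-p}=D(\theta\Vert p).
\]

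\textbf{Where the care is needed.} No single step is a real obstacle. The points to get right are the monotonicity argument that handles $p_j\le p$ rather than $p_j=p$, and confirming that $\lambda^*>0$ under $p<\theta$. One could instead cite the Chernoff--Hoeffding theorem directly, but the self-contained derivation above is short.

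Finally, note the link to Algorithm~\ref{alg:detection}. The inequality displayed before the theorem shows that the actual global-offset statistic is dominated by $\frac1M\sum_j Y_j$. Together with Theorem~\ref{thm:single_fpr}, which supplies $p=p_0^{(\mathrm{shift})}$, and the ROM independence argument of Appendix~\ref{Appendix:finite:setup}, the bound therefore transfers to the implemented detector.
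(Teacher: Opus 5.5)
Your proposal is correct and takes essentially the paper's approach. The paper simply cites the Cram\'er--Chernoff bound for binomial tails, and you carry out that derivation explicitly, including the monotonicity step that handles $\Pr[Y_j=1]\le p$ rather than equality (which the paper's one-line citation glosses over). The only thing left implicit is the degenerate case $p=0$: there $\lambda^*$ is undefined, but the claim holds trivially because the left-hand side is $0$.
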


\begin{proof}
This is the standard Chernoff (Cram\'{e}r--Chernoff) bound for binomial tails.
\end{proof}

\begin{remark}[Validity of independence assumption]
Theorem~\ref{thm:agg_fpr} relies on the independence of the relaxed block-level matching indicators $Y_j$. We do not assume that the raw text tokens are independent under $\mathcal{H}_0$. Under $\mathcal{H}_0$, the text is independent of the secret key; hence, conditioning on any fixed text, the remaining randomness comes from the block-specific hash seeds $\mathrm{seed}_j=H(\mathcal{K},j)$. Under the Random Oracle Model, these seeds are modeled as independent random variables across distinct block indices, which pseudorandomizes the detector-facing symbol observations across blocks. Therefore, the independence assumption applies to the detector-facing matching indicators induced by randomized partitioning, rather than to the raw natural language sequence itself.
\end{remark}

\begin{remark}[Design implication]
Choosing $\theta\gg p$ makes the aggregate FPR exponentially small in $M$.
In particular, combining \eqref{eq:entropy_bound} and Theorem~\ref{thm:agg_fpr} yields exponential suppression as both $n$ and $M$ increase, at fixed $t/n$ and $S$.
\end{remark}

\begin{corollary}[FPR Correction for Blind Detection]
\label{cor:blind_correction}
In the blind detection protocol (Stage 1), the detector estimates a candidate payload $\hat{m}$ from a message space of size $2^k$ before verification. By applying a union bound over all possible messages, the aggregate false positive rate in the blind setting is bounded by:
\begin{equation}
\mathrm{FPR}_{\mathrm{blind}} 
\le 2^k \cdot \Pr_{\mathcal{H}_0} \bigg[\frac{1}{M}\sum_{j=1}^M Y_j \ge \theta\bigg] 
\le 2^k \cdot \exp\Big(-M\,D(\theta\Vert p)\Big).
\end{equation}
Although the bound increases linearly by the message space size $2^k$, the base probability decays exponentially in $M$ and doubly exponentially in $n$. For typical parameters, the factor $2^k$ is negligible compared to the suppression provided by the Chernoff exponent, ensuring the final FPR remains low.
\end{corollary}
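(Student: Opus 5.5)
The plan is to reduce the blind detector's false-positive event to a union, over all candidate payloads, of designated-verification events for a \emph{fixed} payload, and then to apply Theorem~\ref{thm:agg_fpr} to each term. Under $\mathcal{H}_0$ we condition on a fixed text $X$; all randomness comes from the key through the seeds $\mathrm{seed}_j=H(\mathcal{K},j)$. The key difficulty is that $\hat{m}$ is chosen by Stage~1 from the same bit stream that Stage~2 then verifies. Because of this data dependence, Theorem~\ref{thm:agg_fpr} cannot be applied directly with $\hat m$ treated as fixed. The union bound removes this obstacle: it never requires us to know how $\hat m$ is distributed.

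\textbf{Step 1: the fixed-payload event.} For each $m'\in\{0,1\}^k$, define the verification indicators $Y_j(m')$. Here $Y_j(m')=1$ iff block $j$ lies within distance $t$ of $E(m'\oplus r^{(j)})$ under the best offset in $\mathcal{S}$. Let $A_{m'}$ be the event $\frac1M\sum_j Y_j(m')\ge\theta$. The blind detector accepts only if $A_{\hat m}$ occurs, and $\hat m\in\{0,1\}^k$ whatever Stage~1 does. Hence $\{\text{accept}\}\subseteq\bigcup_{m'}A_{m'}$. By the remark following Theorem~\ref{thm:agg_fpr}, the relaxation to per-block maximal offsets only enlarges each $A_{m'}$ relative to the global-offset Algorithm~\ref{alg:detection}, so this containment also covers the actual detector.

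\textbf{Step 2: bound each term.} Fix $m'$. For a fixed payload the designated codeword of block $j$ is a deterministic function of $(\mathcal{K},j)$. The observed block is a function of $X$ and $\mathrm{seed}_j$. Under the ROM stochastic model of Appendix~\ref{Appendix:finite:setup}, the $Y_j(m')$ are therefore independent across $j$. Theorem~\ref{thm:single_fpr} gives $\Pr[Y_j(m')=1]\le p_0^{(\mathrm{shift})}=p$. Its proof only uses that a single fixed target has a ball of volume $V_q(n,t)$ under the uniform null, and that holds for any designated codeword, including one masked by $r^{(j)}$. Theorem~\ref{thm:agg_fpr} then gives $\Pr[A_{m'}]\le\exp(-MD(\theta\Vert p))$ for every $m'$.

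\textbf{Step 3: sum.} The union bound over the $2^k$ payloads yields $\mathrm{FPR}_{\mathrm{blind}}\le\sum_{m'}\Pr[A_{m'}]\le 2^k\exp(-MD(\theta\Vert p))$. This is the displayed inequality once we read $\Pr_{\mathcal{H}_0}[\frac1M\sum Y_j\ge\theta]$ as the uniform-in-$m'$ fixed-payload bound. The qualitative remark then follows by bounding $p$ with the entropy bound \eqref{eq:entropy_bound}, $p\lesssim S\,q^{-n(1-H_q(t/n))}$. As $p\to0$, $D(\theta\Vert p)\approx\theta\log(1/p)$ grows linearly in $n$. The exponent $M D(\theta\Vert p)$ therefore dominates the additive $k\log 2$, provided $p<\theta$, which is the hypothesis already needed for Theorem~\ref{thm:agg_fpr}.
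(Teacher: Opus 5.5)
Your derivation of the displayed chain is correct and follows the paper's own argument, which is just the union bound over the $2^k$ payloads followed by Theorem~\ref{thm:agg_fpr}. Your Step~1 is a useful sharpening of that one line: the containment $\{\text{accept}\}\subseteq\bigcup_{m'}A_{m'}$ avoids the data dependence of $\hat m$, and you add the global-to-per-block relaxation. One caveat comes from the paper, not from you. Your Step~2 says block $j$'s observation depends only on $X$ and $\mathrm{seed}_j$, but this holds only at offset $0$. Shifted windows straddle neighbouring blocks, so $Y_j$ and $Y_{j+1}$ share seeds. Independence is therefore a hypothesis of Theorem~\ref{thm:agg_fpr}, not a consequence of the random-oracle model.

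The gap is your final paragraph, where you say the qualitative remark ``follows.''

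First, your own estimate gives the exponent $M\,D(\theta\Vert p)\approx\theta M\ln(1/p)$, with $\ln(1/p)=\Theta(n)$ at fixed $t/n$. That is single-exponential decay in $n$, so ``doubly exponential'' does not follow. It is in fact false: under independence, $\Pr[A_{m'}]\ge p_0^{\lceil\theta M\rceil}$.

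Second, $p<\theta$ only gives $D(\theta\Vert p)>0$. Making $2^k$ negligible requires $M\,D(\theta\Vert p)\gg k\ln 2$, which is a separate condition on $(\theta,M,k)$. Your asymptotic argument silently holds $k$ fixed as $n\to\infty$.

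The condition fails in exactly the regime the paper uses. At a one-block threshold, $\Pr[A_{m'}]\le Mp\le MSp_0$. The bound then becomes $MS\cdot 2^kp_0=MS\cdot\mathrm{FPR}_{\textsc{any}}$, with $\mathrm{FPR}_{\textsc{any}}$ from Theorem~\ref{thm:any}. So the factor $2^k$ cancels precisely the gain of designated over any-codeword verification.

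Concretely, take BCH$(31,6,7)$ with $s_{\max}=5$ ($p\approx 1.81\times10^{-2}$) and $T=200$ ($M=6$):
\begin{itemize}
\item the one-codeword threshold $\theta=1/6$ gives $D(\theta\Vert p)\approx 0.233$ and $2^k e^{-M D(\theta\Vert p)}\approx 16$;
\item even the two-codeword threshold $\theta=1/3$ gives about $0.9$.
\end{itemize}

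This is not just looseness in the union bound. Suppose Stage~1 extraction inverts the Stage~2 reconstruction, as Algorithm~\ref{alg:detection} suggests. Then every block that voted for $\hat m$ automatically matches its designated codeword. At a one-block threshold, the blind detector is then exactly the any-codeword test.

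You should state the low-FPR conclusion only under an explicit condition, such as $M\,D(\theta\Vert p)\ge k\ln 2+\ln(1/\alpha)$ for a target level $\alpha$. In practice that requires several matched blocks.
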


\subsection{Soft Embedding: Symbol Error Induced by $\delta$-Bias}
\label{Appendix:finite:soft}
Let $m\in(0,1)$ denote the \emph{pre-bias} total softmax mass of the green list at a generation step. After applying the logit shift $+\delta$ to the green tokens, the probability that the next token is drawn from the green list is
\begin{equation}
P_{\text{green}}(\delta;m)=\frac{me^{\delta}}{me^{\delta}+(1-m)}=\sigma\big(\mathrm{logit}(m)+\delta\big),
\end{equation}
where $\sigma(u)=1/(1+e^{-u})$ and $\mathrm{logit}(m)=\log(m/(1-m))$.

\begin{theorem}[Per-symbol embedding error in soft mode]
\label{thm:pemb}
When the designated symbol requires sampling from the green list, the per-symbol embedding error probability is
\begin{align}
p_{\mathrm{emb}}(\delta;m) &= 1-P_{\text{green}}(\delta;m)=\frac{1-m}{m\,e^{\delta}+(1-m)}.
\end{align}
In the balanced case $m=\tfrac{1}{2}$, $p_{\mathrm{emb}}(\delta;\tfrac{1}{2})=1-\sigma(\delta)$. For a target $p^* \in(0,1/2)$, it suffices to choose
\begin{equation}
\delta \ge \log\frac{1-p^*}{p^*} - \mathrm{logit}(m)
\end{equation}
to guarantee $p_{\mathrm{emb}}(\delta;m)\le p^*$.
\end{theorem}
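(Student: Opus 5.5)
The computation is direct once the softmax with shifted logits is written down. Let the pre-bias logits be $\ell_v$ and write $Z=\sum_{v\in\mathcal{V}}e^{\ell_v}$. Define the green mass as $m=Z^{-1}\sum_{v\in\mathcal{L}_z^{(j)}}e^{\ell_v}$, where $\mathcal{L}_z^{(j)}$ is the target list for the designated symbol $z$.

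After adding $+\delta$ to every green logit, the green tokens carry unnormalized weight $e^{\delta}\,mZ$ and the red tokens keep $(1-m)Z$. Normalizing and cancelling $Z$ gives
\[
P_{\text{green}}(\delta;m)=\frac{me^{\delta}}{me^{\delta}+(1-m)}.
\]
Dividing numerator and denominator by $me^{\delta}$ rewrites this as $\sigma(\mathrm{logit}(m)+\delta)$, since $(1-m)/(me^{\delta})=e^{-(\mathrm{logit}(m)+\delta)}$.

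An embedding error at this position is exactly the event that the sampled token lies outside the target list. Its probability is therefore
\[
p_{\mathrm{emb}}(\delta;m)=1-P_{\text{green}}(\delta;m)=\frac{1-m}{me^{\delta}+(1-m)},
\]
which is the first claimed identity. Setting $m=\tfrac12$ makes $\mathrm{logit}(m)=0$, so $p_{\mathrm{emb}}(\delta;\tfrac12)=1-\sigma(\delta)$.

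For the sufficient condition on $\delta$, I would use the identity $1-\sigma(u)=\sigma(-u)$ together with the fact that $\sigma$ is strictly increasing. Together these show that $p_{\mathrm{emb}}=\sigma(-(\mathrm{logit}(m)+\delta))$ is strictly decreasing in $\delta$. The condition $p_{\mathrm{emb}}\le p^*$ is then equivalent to $-(\mathrm{logit}(m)+\delta)\le \sigma^{-1}(p^*)=\mathrm{logit}(p^*)$. Rearranging gives $\delta\ge -\mathrm{logit}(p^*)-\mathrm{logit}(m)=\log\frac{1-p^*}{p^*}-\mathrm{logit}(m)$, which is the stated bound.

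There is no real obstacle. The one point worth stating explicitly is that $m$ is the pre-bias green mass at the given step, conditioned on the context, so the bound holds per step. The restriction $p^*<1/2$ only ensures the required $\delta$ is positive when $m\le\tfrac12$; the inequality itself holds for every $p^*\in(0,1)$.
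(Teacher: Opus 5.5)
Your proposal is correct and follows the same route as the paper: write the softmax after a uniform $+\delta$ shift on the green list to get $P_{\text{green}}(\delta;m)=\sigma(\mathrm{logit}(m)+\delta)$, take the complement, and solve $p_{\mathrm{emb}}\le p^*$ for $\delta$; your appeal to the strict monotonicity of $\sigma$ simply spells out the paper's one-line ``solving for $\delta$'' step. Your side remark is also accurate: the equivalence holds for every $p^*\in(0,1)$, and the restriction $p^*<\tfrac12$ only guarantees a positive threshold when $m\le\tfrac12$.
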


\begin{proof}
It is straightforward from the softmax with a uniform logit shift on the green subset. The inequality is obtained by solving $1-P_{\text{green}}(\delta;m)\le p^*$ for $\delta$.
\end{proof}

\begin{example}
For $m=\tfrac{1}{2}$, $\delta\in\{2.0,\,2.5,\,3.0\}$ yields $p_{\mathrm{emb}}\approx\{0.1192,\,0.0759,\,0.0474\}$, respectively.
\end{example}

\subsection{Detection Power under Embedding and Attack Noise}
\label{Appendix:finite:fnr}
Let $p_{\mathrm{att}}\in[0,1]$ be the adversarial symbol error rate within a block (e.g., substitutions after alignment). A conservative union bound gives $p_{\mathrm{tot}}\le p_{\mathrm{emb}}+p_{\mathrm{att}}$.

\begin{theorem}[Single-block success and aggregate FNR]
\label{thm:fnr}
Suppose a block experiences i.i.d.\ symbol errors with probability $p_{\mathrm{tot}}$ and a bounded linear misalignment of at most $s_{\max}$ tokens, so that the correct alignment is included in the sliding-window search. Then the single-block success probability is
\begin{equation}
p_1(n,t,p_{\mathrm{tot}})=\Pr[\mathrm{Bin}(n,p_{\mathrm{tot}})\le t]=\sum_{i=0}^{t}\binom{n}{i}p_{\mathrm{tot}}^{i}(1-p_{\mathrm{tot}})^{n-i}.
\end{equation}
If $Y_j$ are i.i.d.\ indicators of success across blocks under $\mathcal{H}_1$, the aggregate false-negative probability obeys
\begin{equation}
\Pr_{\mathcal{H}_1}\bigg[\frac{1}{M}\sum_{j=1}^{M}Y_j<\theta\bigg]
\le
\exp\Big(-M\,D(\theta\Vert p_1)\Big).
\end{equation}
\end{theorem}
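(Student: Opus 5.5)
The proof has two parts. The single-block formula follows from the error model. The aggregate bound is the lower-tail counterpart of Theorem~\ref{thm:agg_fpr}.

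For the single-block part, fix a block $j$. By hypothesis the misalignment is at most $s_{\max}$, so the true framing $s^\star$ lies in $\mathcal{S}$ and the detector examines the window aligned with the embedded codeword $c^{(j)}$. In that window the observed word is $b^{(j)}=c^{(j)}+e$, where each of the $n$ coordinates is in error independently with probability $p_{\mathrm{tot}}$. The number of errors $\mathrm{wt}(e)$ is therefore $\mathrm{Bin}(n,p_{\mathrm{tot}})$.

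Since $t\le\lfloor(d_{\min}-1)/2\rfloor$, the unique decoder returns $c^{(j)}$ exactly when $\mathrm{wt}(e)\le t$. Safe decoding (Algorithm~\ref{alg:safe_decode}) accepts in that case, and the designated-codeword comparison then succeeds. This gives $\Pr[\text{success at }s^\star]=\Pr[\mathrm{Bin}(n,p_{\mathrm{tot}})\le t]=p_1$.

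Other offsets can only add further chances to match, because acceptance in the per-block model is a maximum over $\mathcal{S}$. So the block succeeds with probability at least $p_1$. Read with this interpretation, the statement's equality describes the aligned-window event, and a lower bound $\Pr[Y_j=1]\ge p_1$ is all the aggregate step needs. I will say this explicitly.

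For the aggregate part, assume the $Y_j$ are i.i.d.\ Bernoulli with mean at least $p_1$, and take the nontrivial regime $\theta<p_1$. By stochastic domination it suffices to treat the case of mean exactly $p_1$, since lowering the success probability only makes the event $\{\sum_j Y_j<M\theta\}$ more likely.

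Apply the Cram\'er--Chernoff method to the lower tail: for $\lambda>0$,
\begin{equation*}
\Pr\Big[\textstyle\sum_j Y_j\le M\theta\Big]\le e^{\lambda M\theta}\,\big(1-p_1+p_1e^{-\lambda}\big)^M .
\end{equation*}
Optimizing over $\lambda$ puts the minimizer at $e^{-\lambda}=\frac{\theta(1-p_1)}{p_1(1-\theta)}$. This value is less than $1$ exactly when $\theta<p_1$, so the choice is admissible. Substituting it gives the exponent $-M\,D(\theta\Vert p_1)$.

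The same bound also follows by applying Theorem~\ref{thm:agg_fpr} to the complements $1-Y_j$, which have mean $1-p_1$, at threshold $1-\theta$, together with the identity $D(1-\theta\Vert 1-p_1)=D(\theta\Vert p_1)$.

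The main obstacle is modelling, not computation. The claim $\Pr[Y_j=1]=p_1$ needs care because of the multiple offsets. In the actual global-offset detector, the correct alignment may also differ across blocks when edits accumulate. Handling this requires restricting to the per-block model and the bounded-misalignment hypothesis, and stating the condition $\theta<p_1$, without which the bound is vacuous.
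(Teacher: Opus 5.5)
Your proposal is correct and follows the same route as the paper's proof: the binomial tail at the correctly aligned window gives $p_1$, and the lower-tail Chernoff bound gives the aggregate exponent. You add useful care the paper omits, namely reading $p_1$ as a lower bound on the max-over-offsets success probability and stating the implicit hypothesis $\theta<p_1$. One correction to your closing remark: for $\theta>p_1$ the displayed bound is not merely vacuous but false in general (e.g.\ $M=1$, $p_1=0.1$, $\theta=0.9$ gives $0.9\not\le e^{-0.8\log 9}\approx 0.17$), so the condition is genuinely necessary.
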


\begin{proof}
Unique decoding succeeds iff the number of symbol errors does not exceed $t$; the binomial tail gives the expression. The Chernoff bound for the lower tail yields the aggregate exponent.
\end{proof}

\begin{example}[Guideline at $(n,t)=(31,7)$]
With $m=\tfrac{1}{2}$ and $\delta=2.5$, the embedding error probability is
$p_{\mathrm{emb}}\approx 0.0759$. 
If the attack-induced bit error probability satisfies $p_{\mathrm{att}}\in[0,0.01]$, then
$p_{\mathrm{tot}}\in[0.0759,0.0859]$, yielding a per-block recovery probability
$p_1\approx 0.79$ to $0.73$. 
For $M=32$ blocks and a detection threshold $\theta=0.5$, the aggregate false negative rate
is exponentially small by Theorem~\ref{thm:fnr}.
\end{example}

\subsection{Alignment Recovery via Sliding Window}
\label{Appendix:finite:shift}

\begin{lemma}[Alignment Recovery]
\label{lem:shift}
Let $\Delta_j$ be the net accumulation of insertions ($+$) and deletions ($-$) prior to block $j$. 
If $|\Delta_j| \le s_{\max}$, then there exists a linear offset $s = \Delta_j \in \mathcal{S}$ such that the window $\mathbf{b}[j\cdot n + s : j\cdot n + s + n]$ perfectly aligns with the embedded codeword boundaries.
Under this alignment, decoding succeeds provided the internal symbol errors (substitution) are within radius $t$.
\end{lemma}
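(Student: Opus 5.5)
The plan is to model the edit process as a map from embedded positions to observed positions and to track where block $j$ starts in the observed bit stream. Let $\pi(i)$ denote the observed index of the token that occupied embedded position $i$ (when that token survives). By definition of $\Delta_j$, every insertion before the start of block $j$ shifts later tokens right by one, and every deletion shifts them left by one. Hence the first token of block $j$, at nominal index $jn$, now sits at $jn+\Delta_j$. This first step is a simple induction on the sequence of edits preceding block $j$, using the convention that the sign of $\Delta_j$ counts insertions as $+1$ and deletions as $-1$.

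Second, I use the hypothesis $|\Delta_j|\le s_{\max}$. It gives $s=\Delta_j\in\mathcal{S}=\{-s_{\max},\ldots,s_{\max}\}$, so this offset is among the candidates enumerated by Algorithm~\ref{alg:detection}. The window $\mathbf{b}[jn+s:jn+s+n]$ therefore starts exactly at the observed position of the first embedded symbol of block $j$.

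Third, I must check that the symbols inside this window are read with the correct partition $(\mathcal{L}_0^{(j)},\mathcal{L}_1^{(j)})$. By Algorithm~\ref{alg:extraction} with alignment offset $\Delta=s$, the position with observed index $idx$ is assigned block index $\lfloor (idx+\Delta)/n\rfloor$. I will argue that with the matching offset the extraction assigns seed$_j$ to the window. This requires fixing the sign convention of $\Delta$ in Algorithm~\ref{alg:extraction} so that $idx=jn+\Delta_j+b$ maps to block $j$ for $0\le b<n$; I will adopt whichever convention makes the offset cancel.

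With the framing and seed correct, the $b$-th symbol of the window is the bit of the $b$-th surviving or replacing token. It equals $c^{(j)}_b$ except at positions corrupted by embedding noise or by substitutions, so the observed word is $c^{(j)}$ plus an error pattern. If the number of such errors is at most $t$, unique decoding (Algorithm~\ref{alg:safe_decode}) returns $c^{(j)}$, which is the designated match. Combined with Theorem~\ref{thm:fnr}, this yields success probability $p_1$.

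The main obstacle is the interaction with edits occurring inside block $j$ itself. An insertion or deletion within the block misaligns the tail of the window by one symbol. I will handle this by stating that the "perfect alignment" claim concerns the block boundary only: any intra-block indel is charged to the substitution-error budget, counting at most the displaced positions as errors. Decoding then succeeds when the total, including these, is at most $t$, which matches the lemma's proviso that "internal symbol errors are within radius $t$." I will make explicit that the lemma assumes no indels inside the block, or accounts for them in this way.
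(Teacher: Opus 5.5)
Your proposal is correct and follows the same route as the paper's proof: since $|\Delta_j|\le s_{\max}$, the true shift $\Delta_j$ lies in $\mathcal{S}$, selecting that window removes the framing error, and what remains is bounded-distance decoding of substitution errors within radius $t$. Your extra steps are checking that the shifted window is read with the block-$j$ partition (the sign convention in Algorithm~\ref{alg:extraction}) and charging intra-block indels to the error budget; these make explicit assumptions that the paper's short argument leaves implicit, rather than changing the approach.
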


\begin{proof}
Since the detector searches all linear offsets $s \in [-s_{\max}, s_{\max}]$, the set $\mathcal{S}$ includes the true shift $\Delta_j$. 
Selecting this window eliminates the framing error caused by insertion/deletion, reducing the problem to standard substitution error correction. 
Thus, unique decoding is guaranteed if the substitution noise is $\le t$.
\end{proof}

\begin{remark}[Linear vs. Circular]
Unlike prior works that model edits as circular shifts within a fixed block, our analysis accounts for the linear displacement of the bit stream. The sliding window search ($O(s_{\max} \cdot n)$ complexity) effectively neutralizes the synchronization drift.
\end{remark}

\subsection{Parameter selection via Entropy Bounds}
\label{Appendix:finite:params}
The entropy bound in \eqref{eq:entropy_bound}, together with Theorems~\ref{thm:agg_fpr} and \ref{thm:fnr}, suggests a simple two-sided design: pick $(n,t,s_{\max},\theta,M,\delta)$ so that
\begin{align}
\underbrace{\exp\Big(-M\,D(\theta\Vert p_0^{(\mathrm{shift})})\Big)}_{\text{FPR target }\alpha}&\le\alpha,
\qquad
p_0^{(\mathrm{shift})}\approx 1-(1-p_0)^S, p_0\le q^{-n(1-H_q(t/n))},\\
\underbrace{\exp\Big(-M\,D(\theta\Vert p_1)\Big)}_{\text{FNR target }\beta}&\le\beta,
\qquad
p_1=\Pr[\mathrm{Bin}(n,p_{\mathrm{tot}})\le t], p_{\mathrm{tot}}\lesssim p_{\mathrm{emb}}(\delta;m)+p_{\mathrm{att}}.
\end{align}

\begin{remark}[Balanced operation]
A convenient choice is to set $\theta$ near the Chernoff intersection that equalizes exponents $D(\theta\Vert p_0^{(\mathrm{shift})})\approx D(\theta\Vert p_1)$, and to tune $\delta$ to keep $p_{\mathrm{tot}}<t/n$ so that $p_1$ remains bounded away from $1/2$.
\end{remark}

\subsection{Generalizations}
\label{Appendix:finite:general}
\begin{proposition}[Direct $q$-ary extension]
\label{prop:qary}
All the bounds above hold verbatim for $q>2$ with $V_q(n,t)$ from \eqref{eq:Vq}; in particular,
\[
p_0=\frac{V_q(n,t)}{q^n},\qquad 
p_0^{(\mathrm{shift})}\le \min\{1,S\,p_0\},\qquad
\Pr_{\mathcal{H}_0}\big[\mathrm{match\ ratio}\ge \theta\big]\le e^{-M D(\theta\Vert p_0^{(\mathrm{shift})})}.
\]
\end{proposition}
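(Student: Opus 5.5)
The plan is to re-run each earlier argument with $q$ left as a parameter and confirm that binarity is never used. The three displayed bounds are then immediate consequences of Theorems~\ref{thm:single_fpr} and \ref{thm:agg_fpr}.

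First, under the idealized $q$-ary null model the detector-facing block is uniform on $\Sigma^n$ with $|\Sigma|=q$. For a fixed designated codeword $c^{(j)}$, a vector at Hamming distance exactly $i$ is obtained by choosing the $i$ disagreeing positions, in $\binom{n}{i}$ ways, and at each such position choosing one of the $q-1$ symbols different from $c^{(j)}_\ell$. Summing over $i\le t$ gives exactly $V_q(n,t)$ as in \eqref{eq:Vq}. Dividing by $q^n$ yields $p_0=V_q(n,t)/q^n$, which is the proof of Theorem~\ref{thm:single_fpr} verbatim.

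Second, the shift bound $p_0^{(\mathrm{shift})}\le\min\{1,Sp_0\}$ follows from the union bound over the $S=2s_{\max}+1$ framings. This requires only that each framing's window is marginally uniform on $\Sigma^n$, which holds because the ROM seed $H(\mathcal{K},j)$ induces a fresh uniform $q$-ary labeling. The bound is also trivially capped by $1$. Nothing here depends on $q$.

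Third, for the aggregate bound I would use the per-block relaxation $Y_j=\max_s X_j^{(s)}$ from Appendix~\ref{Appendix:finite:aggfpr}. This gives $\max_s\frac1M\sum_j X_j^{(s)}\le\frac1M\sum_j Y_j$. The indicators $Y_j$ are Bernoulli with $\Pr[Y_j=1]\le p_0^{(\mathrm{shift})}$ and are independent across blocks under the ROM argument of the stochastic-model paragraph; that argument concerns distinct seeds and never references the alphabet. Theorem~\ref{thm:agg_fpr}, a pure Chernoff statement about Bernoulli sums, then gives $e^{-MD(\theta\Vert p_0^{(\mathrm{shift})})}$ for $p_0^{(\mathrm{shift})}<\theta<1$. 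I would make that implicit condition on $\theta$ explicit.

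The main obstacle is justifying the $q$-ary uniformity and independence model. With $q>2$ the vocabulary is partitioned into $q$ lists by $H(\mathrm{seed}_j,v)\bmod q$. For a fixed text, distinct tokens within a block receive independent uniform labels, but repeated tokens within a block receive identical labels. To handle this I would state the result, as the binary case does, under the idealized null model. I would also note that repetition only matters through exact uniformity, not through the union or Chernoff steps.

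For "verbatim" transfer of the remaining results, I would remark that Theorems~\ref{thm:pemb} and \ref{thm:fnr} likewise hold. In Theorem~\ref{thm:pemb}, $m$ becomes the mass of the designated list. In Theorem~\ref{thm:fnr}, any wrong symbol counts as one error, so $p_1=\Pr[\mathrm{Bin}(n,p_{\mathrm{tot}})\le t]$ is unchanged. The entropy bound \eqref{eq:entropy_bound} uses the standard $q$-ary estimate $V_q(n,t)\le q^{nH_q(t/n)}$, valid for $t/n\le 1-1/q$.
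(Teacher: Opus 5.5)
Your proposal is correct and follows the paper's route. The paper's proof is the one-line observation that the same Hamming-ball counting goes through with $V_q(n,t)$, while the union-bound and Chernoff steps never use the alphabet size; you spell out exactly this step by step. Your added caveats are sound refinements rather than departures: the implicit condition $p_0^{(\mathrm{shift})}<\theta<1$, the reliance on the idealized null model because repeated tokens within a block share labels, and $t/n\le 1-1/q$ for the entropy estimate.
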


\begin{proof}
Identical combinatorial counting applies because the unique-decoding radius $t$ depends only on $d_{\min}$ and the metric, not on the field size beyond the volume $V_q(n,t)$.
\end{proof}

\section{Supplementary Experiments}
\label{Appendix:supp_exp}

\begin{figure*}[t]
  \centering
  \begin{minipage}[t]{0.53\textwidth}
    \centering
    \includegraphics[width=\linewidth]{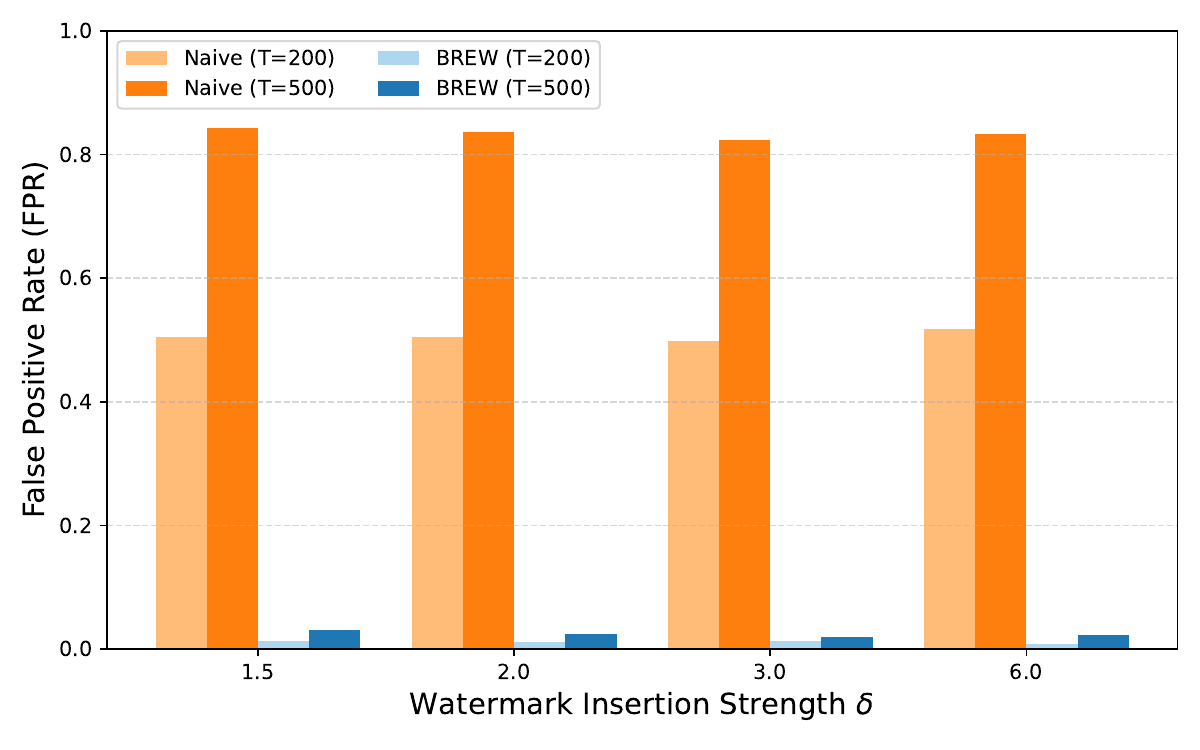}
    \captionof{figure}{False positive rate (FPR) across insertion strengths~$\delta$ under the clean setting.}
    \label{fig:fpr_clean}
  \end{minipage}
  \hfill
  \begin{minipage}[t]{0.43\textwidth}
    \centering
    \includegraphics[width=\linewidth]{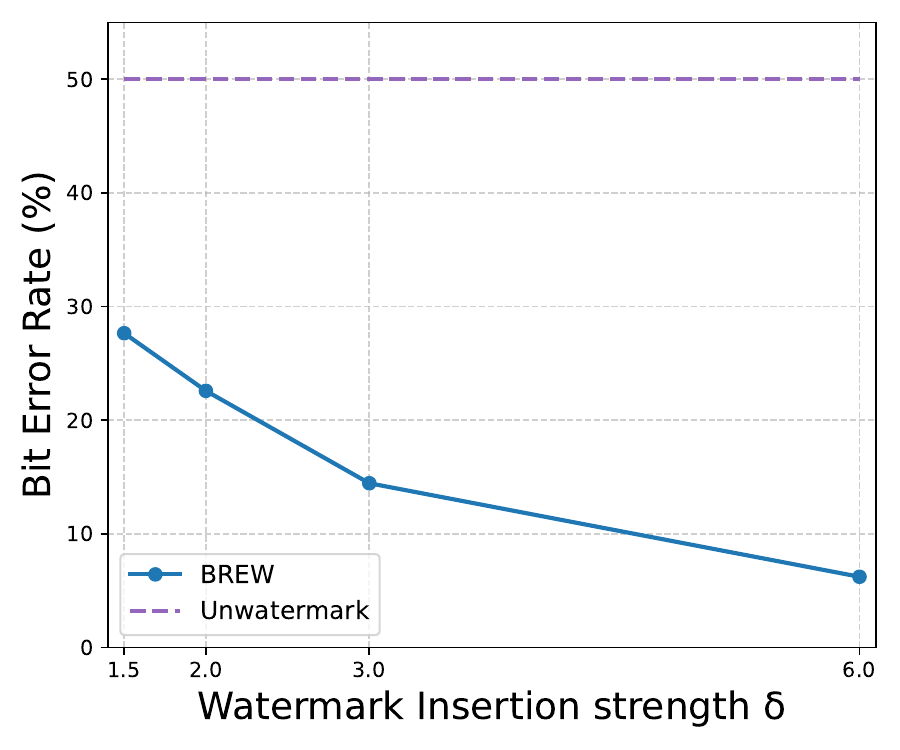}
    \captionof{figure}{Average bit error rate (BER) as a function of watermark insertion strength~$\delta$.}
    \label{fig:ber_delta}
  \end{minipage}
\end{figure*}

\subsection{Detection performance without attacks}
\label{Appendix:no_attack}


This experiment evaluates detection performance of watermarking techniques in clean environments without adversarial attacks, focusing on how watermark insertion strength \(\delta\) and text length \(T\) influence reliability. Figure~\ref{fig:fpr_clean} shows that \emph{Na\"{i}ve-Ours} suffers from consistently high FPR across all \(\delta\) values, failing to distinguish watermarked from unwatermarked text. In contrast, \emph{BREW} maintains FPR close to zero regardless of \(\delta\), demonstrating the effectiveness of structured decoding. For example, at \(\delta{=}3.0\) with \(T{=}200\), \emph{Na\"{i}ve} attains \(\text{TPR}{=}1.000\) but \(\text{FPR}{=}0.499\), whereas \emph{BREW} achieves a comparable \(\text{TPR}{=}0.987\) with \(\text{FPR}{=}0.013\). Complete numerical results across all settings are reported in Table~\ref{table:No_attack}.

\begin{table}[t]
\centering
\caption{Detection performance without attacks (\(T=200\) vs. \(T=500\))} 
\begin{tabular}{ll*{4}{c}*{4}{c}}
\toprule
Scheme & Setting & \multicolumn{4}{c}{T200} & \multicolumn{4}{c}{T500} \\  
\cmidrule(lr){3-6} \cmidrule(lr){7-10}  
 &  & TPR & FPR & Precision & F1 & TPR & FPR & Precision & F1 \\  
\midrule 
 Na\"{i}ve-Ours (soft)  
 & $\delta=1.5$ & 0.950 & 0.504 & 0.6534 & 0.7742 & 1.000 & 0.843 & 0.5426 & 0.7035 \\  
 & $\delta=2.0$ & 0.997 & 0.504 & 0.6642 & 0.7973 & 1.000 & 0.837 & 0.5444 & 0.7050 \\  
 & $\delta=3.0$ & 1.000 & 0.499 & 0.6671 & 0.8003 & 1.000 & 0.824 & 0.5482 & 0.7082 \\  
 & $\delta=6.0$ & 1.000 & 0.518 & 0.6588 & 0.7943 & 1.000 & 0.833 & 0.5456 & 0.7060 \\ 
\midrule  
Na\"{i}ve-Ours (hard)  
 &  & 1.000 & 0.475 & 0.6780 & 0.8081 & 1.000 & 0.861 & 0.5373 & 0.6991 \\ 
\midrule  
 BREW (soft)  
 & $\delta=1.5$ & 0.906 & 0.013 & 0.9859 & 0.9442 & 0.975 & 0.031 & 0.9692 & 0.9721 \\  
 & $\delta=2.0$ & 0.977 & 0.011 & 0.9889 & 0.9829 & 0.987 & 0.024 & 0.9763 & 0.9816 \\  
 & $\delta=3.0$ & 0.987 & 0.013 & 0.9870 & 0.9870 & 0.986 & 0.019 & 0.9811 & 0.9835 \\  
 & $\delta=6.0$ & 1.000 & 0.011 & 0.9891 & 0.9945 & 0.999 & 0.087 & 0.9328 & 0.9648 \\ 
\midrule  
BREW (hard)  
 & & 0.982 & 0.013 & 0.9869 & 0.9844 & 0.974 & 0.032 & 0.9682 & 0.9711 \\  
\bottomrule  
\end{tabular}  
\label{table:No_attack}  
\end{table}

\subsection{Bit Error Rate Analysis by Watermark Insertion Strength \(\delta\)}
\label{Appendix:BER}


This experiment evaluates the effect of watermark insertion strength \(\delta\) on bit-level codeword reconstruction. As shown in Figure~\ref{fig:ber_delta}, unwatermarked text consistently exhibits about 50\% BER, which corresponds to random guessing. In contrast, watermarked text yields significantly lower BERs, with the error rate steadily decreasing as \(\delta\) increases. These results demonstrate that stronger watermark insertion improves the reliability of codeword reconstruction, whereas smaller values of \(\delta\) result in BERs closer to random noise, rendering detection more difficult.

\subsection{Text Quality Under Watermarking}
\label{Appendix:Text_Quality}

\begin{table}[t]
  \caption{Comparison of text quality across watermarking schemes.}
  \label{table:text_quality}
  \begin{center}
  \begin{tabular}{llccc}
    \toprule
    Scheme & $\delta$ & PPL ($\downarrow$) & BLEU ($\uparrow$) & BERTScore ($\uparrow$) \\
    \midrule
    Unwatermarked & -- & 15.51 & 31.81 & 0.8201 \\
    \midrule
    MPAC
      & 1.5 & 10.80 & 29.86 & 0.8136 \\
      & 2.0 & 11.24 & 28.48 & 0.8069 \\
      & 3.0 & 13.87 & 22.22 & 0.7771 \\
      & 6.0 & 24.09 & 6.43 & 0.6279 \\
    \midrule
    \cite{qu2025provably}
      & 1.5 & 11.96 & 30.99 & 0.8132 \\
      & 2.0 & 13.02 & 26.44 & 0.7996 \\
      & 3.0 & 15.77 & 22.31 & 0.7740 \\
      & 6.0 & 23.45 & 10.14 & 0.6688 \\
    \midrule
    BREW (soft)
      & 1.5 & 11.92 & 29.14 & 0.8132 \\
      & 2.0 & 13.24 & 27.78 & 0.8082 \\
      & 3.0 & 15.81 & 20.13 & 0.7738 \\
      & 6.0 & 24.46 & 10.16 & 0.6864 \\
    \midrule
    BREW (hard) & -- & 30.75 & 6.69 & 0.6312 \\
    \bottomrule
  \end{tabular}
  \end{center}
\end{table}

Table~\ref{table:text_quality} reports text quality metrics under different watermarking schemes.
As expected, unwatermarked text achieves the best overall quality, with the highest BLEU~\cite{papineni2002} and BERTScore~\cite{zhang2020bertscore}.
Across all watermarking methods, increasing the embedding strength $\delta$ leads to higher perplexity (PPL) and lower BLEU and BERTScore, reflecting the inherent trade-off between watermark robustness and text quality.

Among watermarking approaches, BREW (soft) and \cite{qu2025provably} exhibit broadly comparable quality trends, while MPAC~\cite{yoo2024advancing} degrades more rapidly as $\delta$ increases, showing pronounced drops in BLEU and BERTScore.
At $\delta=2.0$, BREW (soft) achieves BLEU $=27.78$ and BERTScore $=0.8082$, compared to $26.44$ and $0.7996$ for \cite{qu2025provably}.
Under the strongest watermarking setting ($\delta=6.0$), BREW (soft) yields slightly higher PPL than MPAC (24.46 vs.\ 24.09) but substantially higher BLEU (10.16 vs.\ 6.43) and BERTScore (0.6864 vs.\ 0.6279).
In contrast, the hard variant of BREW incurs substantial quality degradation, confirming that soft watermarking provides a more favorable balance between robustness and text quality.

\subsection{Ablation Study: Effect of Designated Verification and Window-Shifting}
\label{Appendix:Ablation}

This ablation study analyzes the contribution of the two core components of BREW: \emph{designated codeword verification} and \emph{window-shifting detection}. All results are reported under a 10\% token-increasing synonym substitution attack on the C4 dataset using OPT-1.3B with watermark strength $\delta=6$.
We compare three detector variants:
(1) \textbf{Designated-only}, which verifies only designated codewords without window-shifting;
(2) \textbf{Shift-only}, which applies window-shifting but accepts any decodable codeword; and
(3) \textbf{Both}, which combines designated verification with window-shifting (full BREW).
The results confirm that designated codeword verification is essential for suppressing false positives, while window-shifting detection primarily improves recall under insertion-induced misalignment. Combining both components yields the best overall TPR--FPR trade-off.

\begin{table}[t]
    \centering
    \label{tab:ablation_study}
    \caption{Ablation study comparing designated-only, shift-only, and full BREW detectors under a 10\% token-increasing synonym substitution attack (C4, OPT-1.3B, $\delta=6$). \textbf{Designated-only} verifies designated codewords without window-shifting, \textbf{shift-only} applies window-shifting but accepts any decodable codeword, and \textbf{both} corresponds to the full BREW detector.}
    \begin{tabular}{llcccccccc}
    \toprule
     model 
    &  $s_{\max}$ 
    & \multicolumn{4}{c}{T200} & \multicolumn{4}{c}{T500} \\ \cmidrule(lr){3-6} \cmidrule(lr){7-10} 
     &  & TPR & FPR & Precision & F1 & TPR & FPR & Precision & F1 \\ 
     \midrule
    designated-only & - & 0.355 & 0.005 & 0.9861 & 0.5221 & 0.430 & 0.015 & 0.9663 & 0.5952 \\ \midrule
     shift-only 
     & 1 & 0.205 & 0.030 & 0.8723 & 0.3320 & 0.330 & 0.040 & 0.8919 & 0.4818 \\
     & 3 & 0.540 & 0.035 & 0.9391 & 0.6857 & 0.745 & 0.130 & 0.8514 & 0.7947 \\
     & 5 & 0.725 & 0.140 & 0.8382 & 0.7775 & 0.880 & 0.190 & 0.8224 & 0.8502 \\ \midrule
     both 
     & 1 & 0.500 & 0.035 & 0.9346 & 0.6515 & 0.630 & 0.075 & 0.8936 & 0.7390 \\
     & 3 & 0.730 & 0.050 & 0.9346 & 0.8202 & 0.815 & 0.105 & 0.8859 & 0.8490 \\
     & 5 & 0.815 & 0.110 & 0.8810 & 0.8468 & 0.960 & 0.265 & 0.7837 & 0.8629 \\ \bottomrule
    \end{tabular}
\end{table}

\subsection{Sensitivity to Window-Shift Range and $\delta$}
\label{Appendix:Window_shift_Range_delta}

\begin{figure}[t]
  \begin{center}
    \includegraphics[width=\columnwidth]{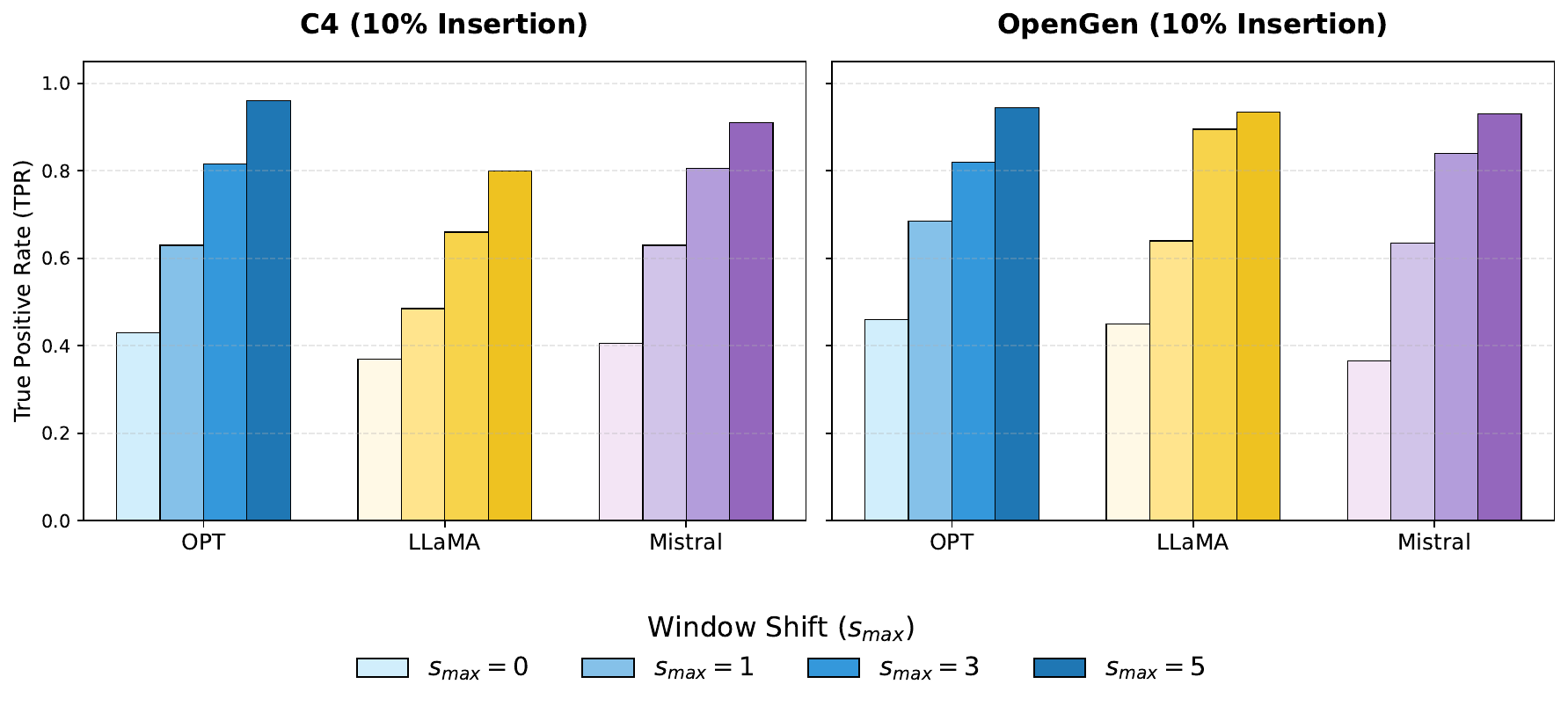}
    \caption{Effect of the window-shift range $s_{\max}$ on the true positive rate (TPR) under a fixed 10\% insertion attack. Increasing $s_{\max}$ consistently improves TPR on both C4 and OpenGen datasets, demonstrating that window-shifting effectively compensates for insertion-induced token-level misalignment.}
    \label{fig:S_max}
  \end{center}
\end{figure}
\begin{figure}[t]
  \begin{center}
    \includegraphics[width=\columnwidth]{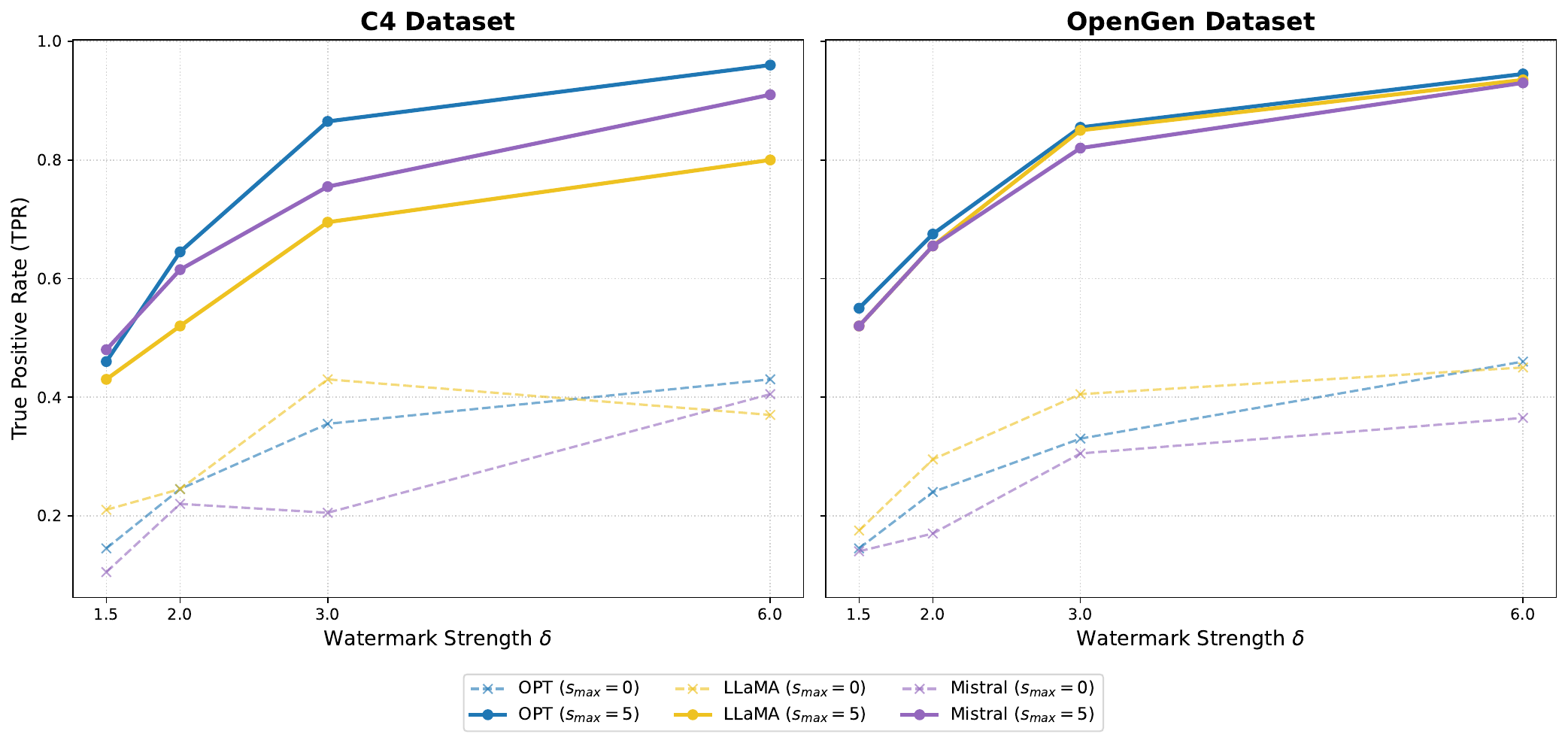}
    \caption{Sensitivity of the true positive rate (TPR) to the watermark embedding strength $\delta$ under a 10\% insertion attack. TPR increases monotonically with larger $\delta$ across all model backends, with substantially stronger gains when window-shifting is enabled ($s_{\max}=5$), highlighting the complementary role of watermark strength and alignment recovery.}
    \label{fig:delta_comparsion}
  \end{center}
\end{figure}

This subsection analyzes the sensitivity of detection performance to the window-shift range $s_{\max}$ and the watermark embedding strength $\delta$ under token-level attacks, with all results reported on texts of length $T=500$. Token-altering attacks such as insertion and deletion disrupt the alignment between embedded codeword boundaries and the observed token sequence, making reliable detection challenging.

Under a fixed 10\% insertion attack, increasing $s_{\max}$ consistently improves the true positive rate (TPR) on both C4 and OpenGen datasets (Figure~\ref{fig:S_max}), indicating that window-shifting effectively compensates for insertion-induced misalignment. The gains saturate at moderate shift ranges, suggesting that a small alignment budget is sufficient in practice.

Under the same setting, increasing the watermark embedding strength $\delta$ further improves TPR across all backbone models (Figure~\ref{fig:delta_comparsion}), with substantially stronger improvements observed when window-shifting is enabled ($s_{\max}=5$). This highlights the complementary roles of alignment flexibility provided by $s_{\max}$ and watermark signal strength controlled by $\delta$.

We note that increasing $s_{\max}$ also slightly increases the false positive rate due to the expanded search space; however, this trade-off is systematically controlled by the designated-codeword test. Detailed ROC figures for all backbone models are provided in Appendix~\ref{Appendix:synsub_figures}, while full numerical detection tables for OPT-1.3B are reported in Appendix~\ref{app:opt_full_results}.

\subsection{Computational Cost of Window-Shift Detection}
\label{Appendix:Computational_Cost}
The window-shift procedure is applied only during detection and therefore incurs minimal computational overhead. As shown in Table~\ref{table:inference_time}, inference time remains well below one second across all settings, even for larger codeword lengths $n$ and higher shift budgets $s_{\max}$. In the worst case ($T=500$, $n=63$, $s_{\max}=5$), detection completes in under 0.6 seconds, demonstrating that the proposed detection framework is computationally lightweight and practical for real-world deployment.

\begin{table}[t]
  \caption{Inference time (seconds) for varying $s_{\max}$ and codeword lengths $n$.}
  \label{table:inference_time}
  \begin{center}
  \begin{tabular}{cc cccc}
    \toprule
    \multicolumn{2}{c}{\textbf{Setting}} & \multicolumn{4}{c}{\textbf{Inference time (sec)}} \\
    \cmidrule(lr){1-2}\cmidrule(lr){3-6}
    \textbf{T} & \textbf{$n$}
    & $\boldsymbol{s_{\max}=0}$ & $\boldsymbol{s_{\max}=1}$
    & $\boldsymbol{s_{\max}=3}$ & $\boldsymbol{s_{\max}=5}$ \\
    \midrule
    200
     & 15 & 0.0252 & 0.0474 & 0.0861 & 0.1262 \\
     & 31 & 0.0270 & 0.0526 & 0.1146 & 0.1682 \\
     & 63 & 0.0270 & 0.0607 & 0.1345 & 0.2017 \\
    \midrule
    500
     & 15 & 0.0394 & 0.0931 & 0.1698 & 0.2741 \\
     & 31 & 0.0489 & 0.1127 & 0.2577 & 0.3788 \\
     & 63 & 0.0552 & 0.1540 & 0.3431 & 0.5342 \\
    \bottomrule
  \end{tabular}
  \end{center}
\end{table}

\subsection{Effect of Threshold Increase}
\label{Appendix:threshold_two_codewords}

\begin{figure}[t]
  \begin{center}
    \includegraphics[width=\columnwidth]{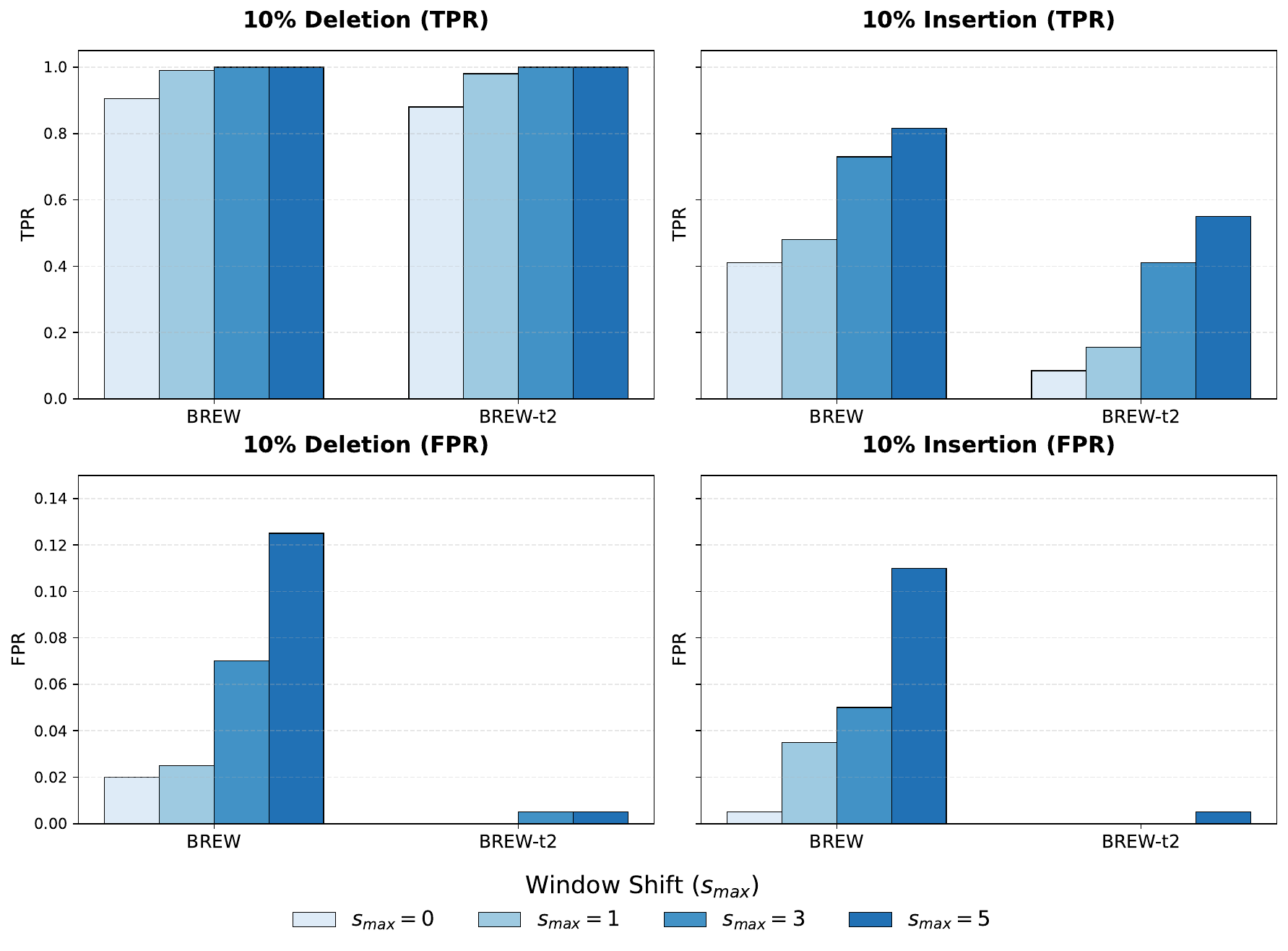}
    \caption{Effect of increasing the detection threshold from one matched codeword (BREW) to two matched codewords (BREW-t2) under \textbf{token-altering synonym substitution attacks} (deletion-like and insertion-like) at a 10\% rate on C4 using OPT-1.3B ($T=200$, $\delta=6$).}
    \label{fig:Threshold}
  \end{center}
\end{figure}

This experiment examines the effect of increasing the detection threshold, i.e., requiring a larger number of matched codewords to declare watermark presence, under token-altering synonym substitution attacks. We compare the default BREW detector, which declares watermark presence if at least one designated codeword is recovered, with a stricter variant (BREW-t2) that requires at least two matched codewords. Figure~\ref{fig:Threshold} illustrates the resulting trade-offs under 10\% deletion-like and insertion-like substitution on the C4 dataset using OPT-1.3B with $\delta=6$ and $T=200$. Increasing the detection threshold consistently suppresses the false positive rate (FPR), with BREW-t2 maintaining near-zero FPR across all window-shift ranges $s_{\max}$, at the cost of reduced true positive rate (TPR) as some watermarked texts fail to recover multiple intact codewords after token-level perturbations. Table~\ref{tab:integrated_tpr_fpr} provides a comprehensive quantitative comparison across different watermark strengths $\delta$ and window-shift ranges. Overall, stricter thresholds offer a conservative operating mode that prioritizes false-positive control under severe token misalignment; accordingly, we adopt the single-codeword threshold as the default setting in the main experiments and report the two-codeword threshold as an optional configuration for applications requiring extremely low FPR.

\begin{table}[t]
\centering
\caption{Comprehensive performance evaluation of BREW and BREW-t2 under Deletion and Insertion attacks on C4 dataset (OPT-1.3B, $T=200$). BREW-t2 consistently maintains near-zero FPR while TPR scales with $\delta$ and $s_{\max}$.}
\label{tab:integrated_tpr_fpr}
\begin{tabular}{lcc | cc | cc | cc | cc}
\toprule
\multicolumn{3}{c|}{\textbf{Setting}} & \multicolumn{4}{c|}{\textbf{Deletion Attacks}} & \multicolumn{4}{c}{\textbf{Insertion Attacks}} \\
\cmidrule(lr){1-3}\cmidrule(lr){4-7}\cmidrule(lr){8-11}
 & & & \multicolumn{2}{c|}{\textbf{5\% Rate}} & \multicolumn{2}{c|}{\textbf{10\% Rate}} & \multicolumn{2}{c|}{\textbf{5\% Rate}} & \multicolumn{2}{c}{\textbf{10\% Rate}} \\
\cmidrule(lr){4-5}\cmidrule(lr){6-7}\cmidrule(lr){8-9}\cmidrule(lr){10-11}
\textbf{Model} & $\delta$ & $s_{\max}$ & \textbf{TPR} & \textbf{FPR} & \textbf{TPR} & \textbf{FPR} & \textbf{TPR} & \textbf{FPR} & \textbf{TPR} & \textbf{FPR} \\
\midrule
BREW & 1.5 & 0 & 0.810 & 0.000 & 0.750 & 0.005 & 0.305 & 0.010 & 0.110 & 0.025 \\
        &     & 1 & 0.895 & 0.025 & 0.900 & 0.020 & 0.380 & 0.025 & 0.185 & 0.040 \\
        &     & 3 & 0.900 & 0.095 & 0.905 & 0.055 & 0.515 & 0.050 & 0.195 & 0.050 \\
        &     & 5 & 0.915 & 0.120 & 0.910 & 0.100 & 0.565 & 0.075 & 0.405 & 0.085 \\
\cmidrule(lr){2-11}
        & 2.0 & 0 & 0.905 & 0.005 & 0.875 & 0.020 & 0.540 & 0.020 & 0.225 & 0.020 \\
        &     & 1 & 0.985 & 0.025 & 0.985 & 0.030 & 0.810 & 0.025 & 0.310 & 0.055 \\
        &     & 3 & 0.995 & 0.045 & 0.995 & 0.080 & 0.695 & 0.090 & 0.380 & 0.080 \\
        &     & 5 & 0.995 & 0.080 & 0.995 & 0.075 & 0.885 & 0.120 & 0.480 & 0.115 \\
\cmidrule(lr){2-11}
        & 3.0 & 0 & 0.920 & 0.005 & 0.895 & 0.010 & 0.630 & 0.020 & 0.390 & 0.010 \\
        &     & 1 & 0.995 & 0.030 & 1.000 & 0.010 & 0.735 & 0.050 & 0.505 & 0.020 \\
        &     & 3 & 1.000 & 0.055 & 1.000 & 0.045 & 0.840 & 0.070 & 0.645 & 0.070 \\
        &     & 5 & 1.000 & 0.090 & 1.000 & 0.105 & 0.930 & 0.085 & 0.710 & 0.110 \\
\cmidrule(lr){2-11}
        & 6.0 & 0 & 0.920 & 0.000 & 0.905 & 0.020 & 0.590 & 0.020 & 0.410 & 0.005 \\
        &     & 1 & 1.000 & 0.045 & 0.990 & 0.025 & 0.630 & 0.020 & 0.480 & 0.035 \\
        &     & 3 & 1.000 & 0.070 & 1.000 & 0.070 & 0.950 & 0.080 & 0.730 & 0.050 \\
        &     & 5 & 1.000 & 0.080 & 1.000 & 0.125 & 0.935 & 0.100 & 0.815 & 0.110 \\
\midrule
BREW-t2 & 1.5 & 0 & 0.515 & 0.000 & 0.525 & 0.000 & 0.060 & 0.000 & 0.005 & 0.000 \\
           &     & 1 & 0.685 & 0.000 & 0.585 & 0.000 & 0.125 & 0.000 & 0.050 & 0.000 \\
           &     & 3 & 0.740 & 0.000 & 0.615 & 0.000 & 0.260 & 0.005 & 0.065 & 0.000 \\
           &     & 5 & 0.700 & 0.005 & 0.695 & 0.000 & 0.260 & 0.005 & 0.055 & 0.000 \\
\cmidrule(lr){2-11}
           & 2.0 & 0 & 0.770 & 0.000 & 0.735 & 0.000 & 0.165 & 0.000 & 0.040 & 0.000 \\
           &     & 1 & 0.955 & 0.000 & 0.925 & 0.000 & 0.230 & 0.000 & 0.085 & 0.000 \\
           &     & 3 & 0.930 & 0.000 & 0.960 & 0.005 & 0.460 & 0.000 & 0.130 & 0.000 \\
           &     & 5 & 0.930 & 0.000 & 0.945 & 0.015 & 0.540 & 0.010 & 0.145 & 0.005 \\
\cmidrule(lr){2-11}
           & 3.0 & 0 & 0.880 & 0.000 & 0.815 & 0.000 & 0.250 & 0.005 & 0.075 & 0.000 \\
           &     & 1 & 0.995 & 0.000 & 0.980 & 0.000 & 0.445 & 0.005 & 0.115 & 0.000 \\
           &     & 3 & 1.000 & 0.000 & 0.995 & 0.010 & 0.675 & 0.000 & 0.280 & 0.000 \\
           &     & 5 & 1.000 & 0.000 & 1.000 & 0.015 & 0.770 & 0.000 & 0.370 & 0.005 \\
\cmidrule(lr){2-11}
           & 6.0 & 0 & 0.885 & 0.000 & 0.880 & 0.000 & 0.255 & 0.000 & 0.085 & 0.000 \\
           &     & 1 & 0.995 & 0.000 & 0.980 & 0.000 & 0.455 & 0.000 & 0.155 & 0.000 \\
           &     & 3 & 1.000 & 0.000 & 1.000 & 0.005 & 0.730 & 0.000 & 0.410 & 0.000 \\
           &     & 5 & 1.000 & 0.005 & 1.000 & 0.005 & 0.890 & 0.005 & 0.550 & 0.005 \\
\bottomrule
\end{tabular}
\end{table}

\subsection{Effect of Codeword Parameters on Detection Performance}
\label{Appendix:codeword_para}


\begin{table*}[t]
  \caption{Detection performance (TPR/FPR) of different BCH codeword configurations at $\delta=3$ and $s_{\max}=5$ under token-altering synonym substitution attacks ($T=200$).}
  \label{tab:bch_summary_delta3_s5}
  \begin{center}
    \begin{small}
      \setlength{\tabcolsep}{4pt}
      \begin{tabular}{l ccc ccc ccc ccc}
        \toprule
        & \multicolumn{3}{c}{\textbf{5\% Deletion}}
        & \multicolumn{3}{c}{\textbf{10\% Deletion}}
        & \multicolumn{3}{c}{\textbf{5\% Insertion}}
        & \multicolumn{3}{c}{\textbf{10\% Insertion}} \\
        \cmidrule(lr){2-4}\cmidrule(lr){5-7}\cmidrule(lr){8-10}\cmidrule(lr){11-13}
        & \textbf{(15,5,3)} & \textbf{(31,6,7)} & \textbf{(63,7,15)}
        & \textbf{(15,5,3)} & \textbf{(31,6,7)} & \textbf{(63,7,15)}
        & \textbf{(15,5,3)} & \textbf{(31,6,7)} & \textbf{(63,7,15)}
        & \textbf{(15,5,3)} & \textbf{(31,6,7)} & \textbf{(63,7,15)} \\
        \midrule
        \textbf{TPR}
        & 1.000 & 1.000 & 1.000
        & 1.000 & 1.000 & 1.000
        & 0.990 & 0.930 & 0.710
        & 1.000 & 0.710 & 0.305 \\
        \textbf{FPR}
        & 0.955 & 0.090 & 0.000
        & 0.930 & 0.105 & 0.000
        & 0.945 & 0.085 & 0.000
        & 0.930 & 0.110 & 0.005 \\
        \bottomrule
      \end{tabular}
    \end{small}
  \end{center}
\end{table*}

We compared three BCH codeword configurations at $T=200$: a short codeword $(n=15,k=5,t=3)$, a medium codeword $(n=31,k=6,t=7)$, and a long codeword $(n=63,k=7,t=15)$. Table~\ref{tab:bch_summary_delta3_s5} summarizes their detection performance at $\delta=3$ and $s_{\max}=5$ under token-altering synonym substitution attacks. The short codeword achieves near-perfect TPR across all settings, but incurs extremely high FPR (e.g., under 10\% deletion, FPR $=0.930$), indicating over-sensitivity to noise. In contrast, the long codeword consistently suppresses FPR to near zero, but suffers from severe TPR degradation under insertion attacks (e.g., TPR drops to $0.305$ under 10\% insertion). The medium codeword strikes a favorable balance between the two extremes, maintaining high TPR while keeping FPR at a moderate level across both deletion and insertion scenarios. Based on this trade-off, we adopt the medium codeword configuration $(n=31,k=6,t=7)$ as the default setting in our main experiments.

\subsection{Message-Level Exact Match Rate}
\label{app:match_rate}

The message-level exact match rate measures the fraction of texts for which the full embedded multi-bit payload is recovered exactly. To avoid overestimating payload recovery, match rate is computed using distinct correctly recovered codewords, without counting duplicate occurrences of the same codeword multiple times.

\begin{table}[t]
\centering
\caption{Message-level exact match rate of BREW at $\delta=6.0$.}
\label{tab:exact_match_rate}
\begin{tabular}{lccc}
\toprule
Attack Scenario & TPR & Match Rate & $\Delta$ (TPR$-$Match Rate) \\ 
\midrule
No Attack & 1.000 & 0.990 & 0.010 \\
Substitution (10\%) & 0.990 & 0.985 & 0.005 \\
Paraphrase & 0.840 & 0.760 & 0.080 \\
\bottomrule
\end{tabular}
\end{table}

As shown in Table~\ref{tab:exact_match_rate}, the gap between TPR and Match Rate is
small under no attack and substitution, indicating that successful detection is
closely aligned with exact payload recovery. Under paraphrasing, BREW still
achieves a Match Rate of $0.760$, demonstrating substantial multi-bit recovery
capability under semantic perturbations.

\subsection{Additional Results for Synonym Substitution Attacks}
\label{Appendix:synsub_figures}

\begin{figure}[t]
  \centering
  \includegraphics[
    width=\columnwidth,
    height=0.9\textheight,
    keepaspectratio
  ]{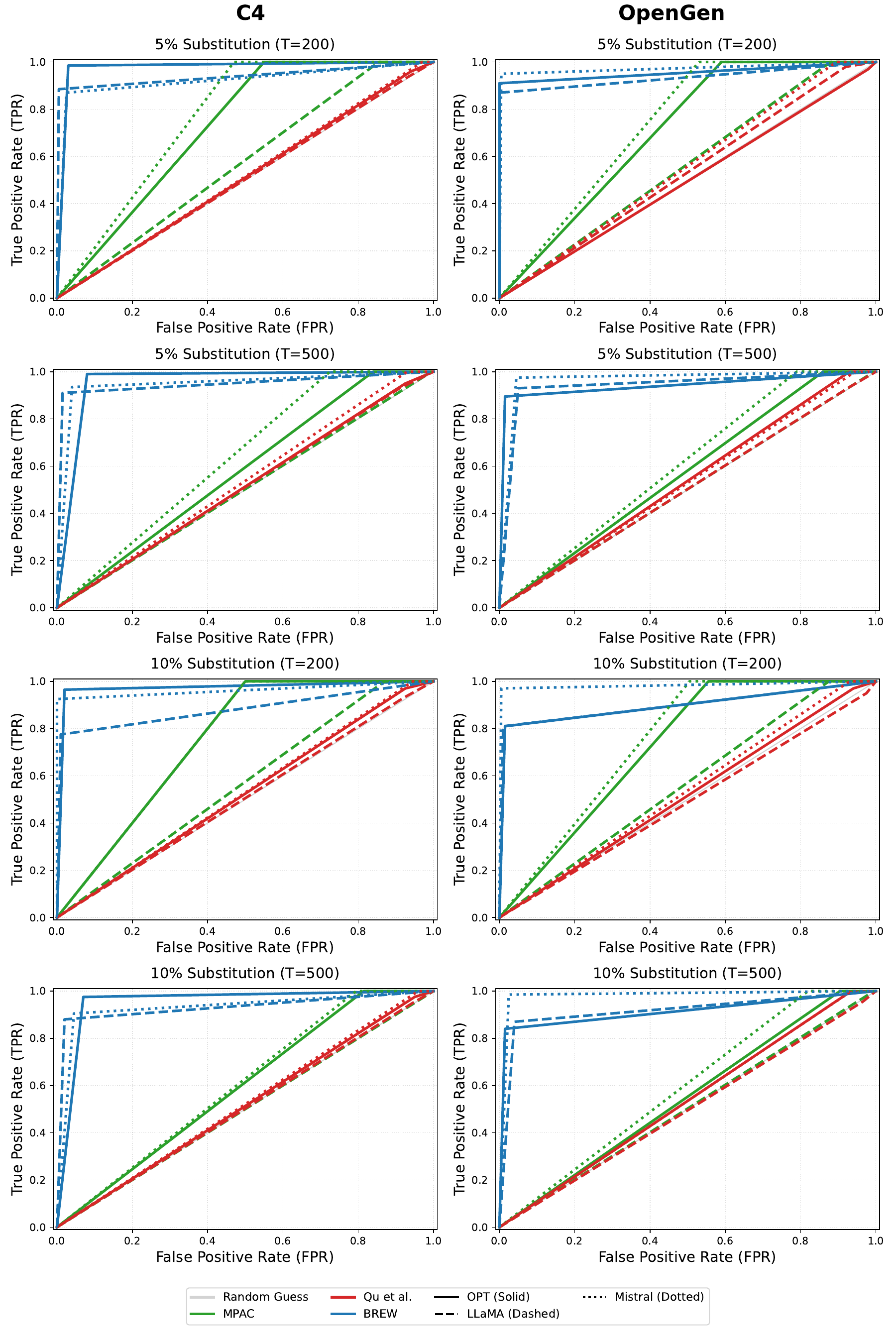}
  \caption{ROC curves under Token-preserving synonym substitution attacks across multiple backbone models. Results are shown for the C4 (left) and OpenGen (right) datasets. Rows correspond to substitution rates and text lengths: 5\% ($T{=}200$), 5\% ($T{=}500$), 10\% ($T{=}200$), and 10\% ($T{=}500$) from top to bottom. Colors denote watermarking methods (BREW, MPAC, Qu et al., and random guessing), while line styles distinguish backbone models (OPT-1.3B: solid, LLaMA-3.2-3B: dashed, Mistral-7B: dotted). Consistent detection trends are observed across all models and datasets.}
  \label{fig:Substitution_ROC_Final}
\end{figure}

\begin{figure}[t]
  \centering
  \includegraphics[
    width=\columnwidth,
    height=0.9\textheight,
    keepaspectratio
  ]{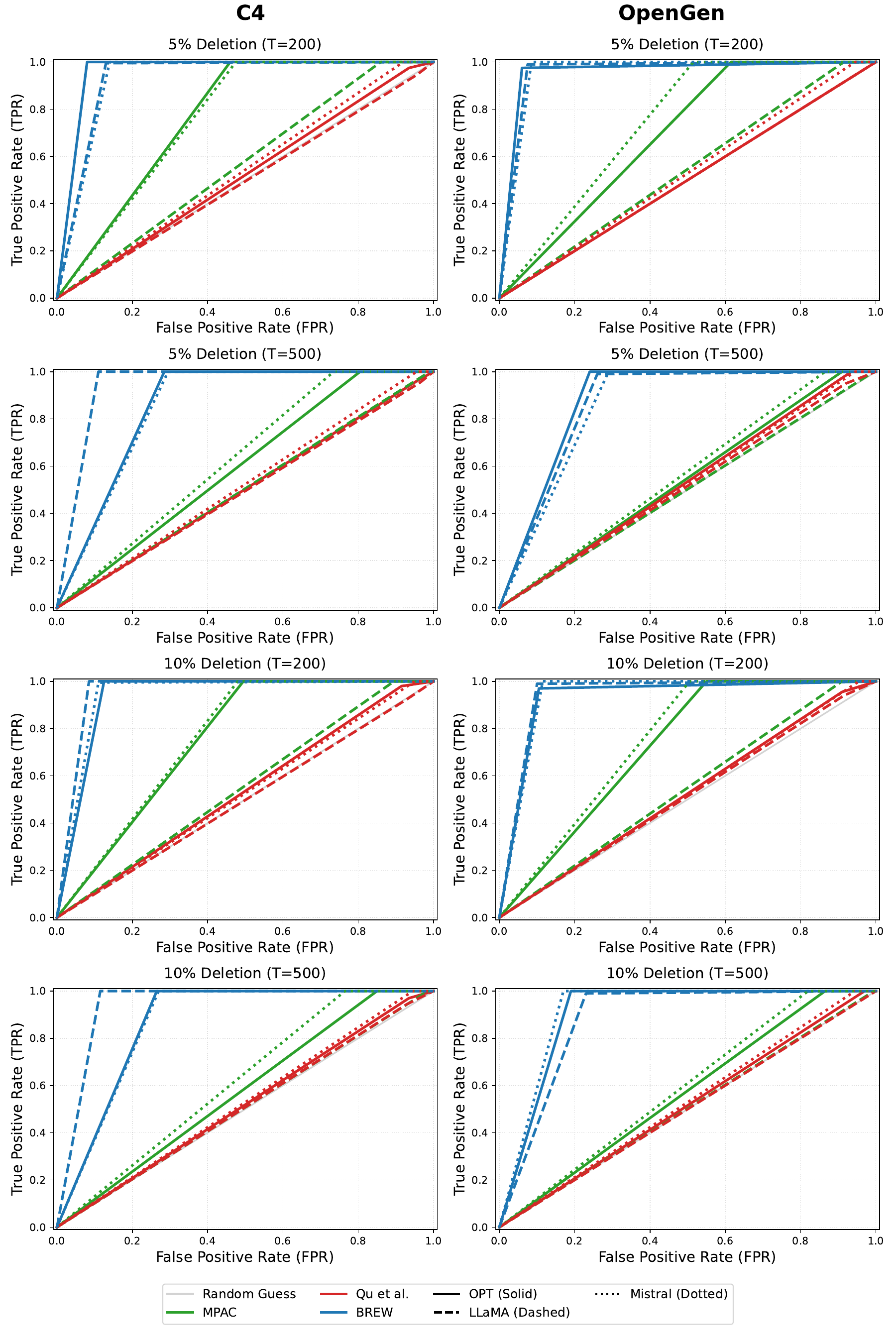}
  \caption{ROC curves under Token-reducing synonym substitution attacks across multiple backbone models. Results are shown for the C4 (left) and OpenGen (right) datasets. Rows correspond to substitution rates and text lengths: 5\% ($T{=}200$), 5\% ($T{=}500$), 10\% ($T{=}200$), and 10\% ($T{=}500$) from top to bottom. Colors denote watermarking methods (BREW, MPAC, Qu et al., and random guessing), while line styles distinguish backbone models (OPT-1.3B: solid, LLaMA-3.2-3B: dashed, Mistral-7B: dotted). Consistent detection trends are observed across all models and datasets.}
  \label{fig:Deletion_ROC_Final}
\end{figure}

\begin{figure}[t]
  \centering
  \includegraphics[
    width=\columnwidth,
    height=0.9\textheight,
    keepaspectratio
  ]{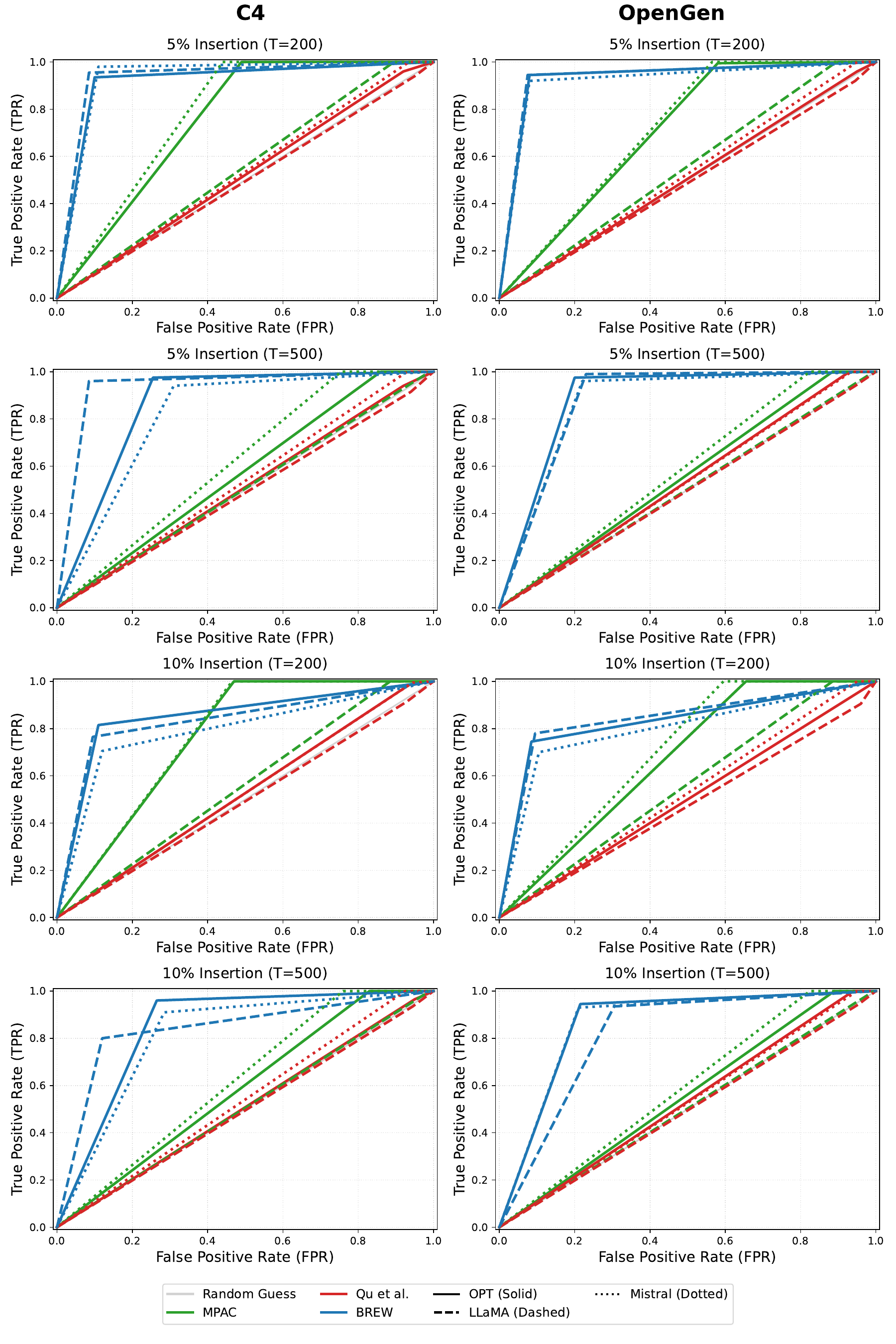}
  \caption{ROC curves under Token-increasing synonym substitution attacks across multiple backbone models. Results are shown for the C4 (left) and OpenGen (right) datasets. Rows correspond to substitution rates and text lengths: 5\% ($T{=}200$), 5\% ($T{=}500$), 10\% ($T{=}200$), and 10\% ($T{=}500$) from top to bottom. Colors denote watermarking methods (BREW, MPAC, Qu et al., and random guessing), while line styles distinguish backbone models (OPT-1.3B: solid, LLaMA-3.2-3B: dashed, Mistral-7B: dotted). Consistent detection trends are observed across all models and datasets.}
  \label{fig:Insertion_ROC_Final}
\end{figure}

This section reports comprehensive ROC results for synonym substitution attacks across backbone models, datasets, substitution rates, and text lengths. We distinguish three token-level effects: token-preserving substitutions (Figure~\ref{fig:Substitution_ROC_Final}), token-reducing (deletion-like) substitutions (Figure~\ref{fig:Deletion_ROC_Final}), and token-increasing (insertion-like) substitutions (Figure~\ref{fig:Insertion_ROC_Final}). Across all settings, BREW consistently separates from the random-guess diagonal, achieving strong TPR in the low-FPR regime across OPT-1.3B, LLaMA-3.2-3B, and Mistral-7B on both C4 and OpenGen, for 5\%/10\% substitution and $T\in\{200,500\}$.

Under \textbf{deletion-like} substitutions on C4, BREW attains near-perfect TPR with moderate FPR; e.g., at 10\% deletion on OPT-1.3B, $\mathrm{TPR}\approx0.91$--$1.00$ with $\mathrm{FPR}\approx0.075$--$0.125$ for $T{=}200$, and $\mathrm{TPR}\approx0.995$--$1.00$ with $\mathrm{FPR}\approx0.19$--$0.32$ for $T{=}500$ (depending on $\delta$). Similar trends hold for LLaMA and Mistral, with high TPR (typically $\ge0.95$ for $\delta\ge2$) while maintaining substantially lower FPR than MPAC~\cite{yoo2024advancing} and \cite{qu2025provably}. In contrast, \cite{qu2025provably} exhibits pervasive false positives, with FPR typically around $0.9$--$0.97$ across C4/OpenGen and both $T{=}200$ and $T{=}500$, despite high TPR. MPAC~\cite{yoo2024advancing} shows intermediate behavior: TPR often saturates near $1.0$, but FPR is substantially elevated and increases with text length (e.g., on C4 under deletion-like attacks, $\mathrm{FPR}\approx0.46$--$0.55$ for $T{=}200$ and $\mathrm{FPR}\approx0.73$--$0.87$ for $T{=}500$).

For \textbf{insertion-like} substitutions, BREW’s advantage is most pronounced in the low-FPR regime: at 10\% insertion on OpenGen with OPT-1.3B, BREW achieves $\mathrm{FPR}\approx0.08$--$0.09$ while improving TPR from $\approx0.41$ to $\approx0.75$ at $T{=}200$, and from $\approx0.55$ to $\approx0.95$ at $T{=}500$ as $\delta$ increases. Overall, the method ordering is consistent across token-preserving and token-altering regimes, confirming that BREW’s robustness generalizes across diverse token-level perturbations.

\subsubsection{Full Results on OPT-1.3B}
\label{app:opt_full_results}

Tables~\ref{tab:opt13b_c4_synonym_substitution_5}--\ref{tab:opt13b_opengen_synonym_insertion_10} report full detection results on OPT-1.3B.


\begin{table}[h!]
\centering
\caption{Detection performance under 5\% token-preserving synonym substitution using OPT-1.3B on the C4 dataset.}
\label{tab:opt13b_c4_synonym_substitution_5}
\renewcommand{\arraystretch}{1.2}
\begin{tabular}{l c cccc cccc}
\toprule
\multicolumn{2}{c|}{\textbf{Setting}} & \multicolumn{4}{c|}{\textbf{T200}} & \multicolumn{4}{c}{\textbf{T500}} \\
\cmidrule(lr){1-2}\cmidrule(lr){3-6}\cmidrule(lr){7-10}
\textbf{Model} & $\delta$ & TPR & FPR & Precision & F1 & TPR & FPR & Precision & F1 \\
\midrule
MPAC
 & 1.5 & 0.9950 & 0.5550 & 0.6419 & 0.7804 & 1.0000 & 0.8200 & 0.5495 & 0.7092 \\
 & 2.0 & 1.0000 & 0.5400 & 0.6494 & 0.7874 & 1.0000 & 0.8650 & 0.5362 & 0.6981 \\
 & 3.0 & 1.0000 & 0.5300 & 0.6536 & 0.7905 & 1.0000 & 0.8550 & 0.5391 & 0.7005 \\
 & 6.0 & 1.0000 & 0.5500 & 0.6452 & 0.7842 & 1.0000 & 0.8400 & 0.5435 & 0.7042 \\
\midrule
\cite{qu2025provably}
 & 1.5 & 0.9600 & 0.9350 & 0.5066 & 0.6632 & 1.0000 & 0.9100 & 0.5236 & 0.6873 \\
 & 2.0 & 0.9550 & 0.9350 & 0.5053 & 0.6609 & 1.0000 & 0.9300 & 0.5181 & 0.6826 \\
 & 3.0 & 0.9800 & 0.9300 & 0.5131 & 0.6735 & 0.9650 & 0.9600 & 0.5013 & 0.6598 \\
 & 6.0 & 0.9600 & 0.9400 & 0.5053 & 0.6621 & 0.9500 & 0.9250 & 0.5067 & 0.6609 \\
\midrule
BREW (Ours)
 & 1.5 & 0.6450 & 0.0150 & 0.9773 & 0.7771 & 0.7400 & 0.0300 & 0.9610 & 0.8362 \\
 & 2.0 & 0.7600 & 0.0100 & 0.9870 & 0.8588 & 0.7450 & 0.0250 & 0.9675 & 0.8418 \\
 & 3.0 & 0.8050 & 0.0000 & 1.0000 & 0.8920 & 0.9000 & 0.0250 & 0.9730 & 0.9351 \\
 & 6.0 & 0.9850 & 0.0300 & 0.9704 & 0.9777 & 0.9900 & 0.0800 & 0.9252 & 0.9565 \\
\bottomrule
\end{tabular}
\end{table}

\begin{table}[h!]
\centering
\caption{Detection performance under 10\% token-preserving synonym substitution using OPT-1.3B on the C4 dataset.}
\label{tab:opt13b_c4_synonym_substitution_10}
\renewcommand{\arraystretch}{1.2}
\begin{tabular}{l c cccc cccc}
\toprule
\multicolumn{2}{c|}{\textbf{Setting}} & \multicolumn{4}{c|}{\textbf{T200}} & \multicolumn{4}{c}{\textbf{T500}} \\
\cmidrule(lr){1-2}\cmidrule(lr){3-6}\cmidrule(lr){7-10}
\textbf{Model} & $\delta$ & TPR & FPR & Precision & F1 & TPR & FPR & Precision & F1 \\
\midrule
MPAC
 & 1.5 & 0.9900 & 0.5750 & 0.6326 & 0.7719 & 1.0000 & 0.8300 & 0.5464 & 0.7067 \\
 & 2.0 & 1.0000 & 0.5500 & 0.6452 & 0.7843 & 1.0000 & 0.8100 & 0.5525 & 0.7117 \\
 & 3.0 & 1.0000 & 0.5100 & 0.6623 & 0.7968 & 1.0000 & 0.8400 & 0.5435 & 0.7042 \\
 & 6.0 & 1.0000 & 0.5000 & 0.6667 & 0.8000 & 1.0000 & 0.8150 & 0.5510 & 0.7105 \\
\midrule
\cite{qu2025provably}
 & 1.5 & 0.9650 & 0.9250 & 0.5106 & 0.6678 & 0.9900 & 0.9000 & 0.5238 & 0.6851 \\
 & 2.0 & 0.9600 & 0.9250 & 0.5093 & 0.6655 & 1.0000 & 0.9400 & 0.5155 & 0.6803 \\
 & 3.0 & 0.9800 & 0.9200 & 0.5158 & 0.6759 & 0.9650 & 0.9400 & 0.5066 & 0.6644 \\
 & 6.0 & 0.9700 & 0.9250 & 0.5132 & 0.6724 & 0.9750 & 0.9500 & 0.5065 & 0.6667 \\
\midrule
BREW (Ours)
 & 1.5 & 0.5850 & 0.0100 & 0.9832 & 0.7335 & 0.4700 & 0.0300 & 0.9400 & 0.6267 \\
 & 2.0 & 0.6300 & 0.0050 & 0.9921 & 0.7706 & 0.6750 & 0.0050 & 0.9926 & 0.8036 \\
 & 3.0 & 0.7600 & 0.0000 & 1.0000 & 0.8636 & 0.7650 & 0.0400 & 0.9503 & 0.8476 \\
 & 6.0 & 0.9650 & 0.0200 & 0.9797 & 0.9723 & 0.9750 & 0.0700 & 0.9330 & 0.9535 \\
\bottomrule
\end{tabular}
\end{table}


\begin{table}[h!]
\centering
\caption{Detection performance under 5\% token-preserving synonym substitution using OPT-1.3B on the OpenGen dataset.}
\label{tab:opt13b_opengen_synonym_substitution_5}
\renewcommand{\arraystretch}{1.2}
\begin{tabular}{l c cccc cccc}
\toprule
\multicolumn{2}{c|}{\textbf{Setting}} & \multicolumn{4}{c|}{\textbf{T200}} & \multicolumn{4}{c}{\textbf{T500}} \\
\cmidrule(lr){1-2}\cmidrule(lr){3-6}\cmidrule(lr){7-10}
\textbf{Model} & $\delta$ & TPR & FPR & Precision & F1 & TPR & FPR & Precision & F1 \\
\midrule
MPAC
 & 1.5 & 0.9850 & 0.6350 & 0.6080 & 0.7519 & 1.0000 & 0.8600 & 0.5376 & 0.6993 \\
 & 2.0 & 1.0000 & 0.5750 & 0.6349 & 0.7767 & 1.0000 & 0.8800 & 0.5319 & 0.6944 \\
 & 3.0 & 1.0000 & 0.6450 & 0.6079 & 0.7561 & 1.0000 & 0.9000 & 0.5263 & 0.6897 \\
 & 6.0 & 1.0000 & 0.5900 & 0.6289 & 0.7722 & 1.0000 & 0.8600 & 0.5376 & 0.6993 \\
\midrule
\cite{qu2025provably} 
 & 1.5 & 0.9550 & 0.9500 & 0.5013 & 0.6574 & 0.9850 & 0.9200 & 0.5171 & 0.6781 \\
 & 2.0 & 0.9700 & 0.9200 & 0.5132 & 0.6712 & 1.0000 & 0.9650 & 0.5089 & 0.6745 \\
 & 3.0 & 0.9900 & 0.9100 & 0.5211 & 0.6827 & 1.0000 & 0.9300 & 0.5181 & 0.6825 \\
 & 6.0 & 0.9700 & 0.9800 & 0.4974 & 0.6576 & 1.0000 & 0.9300 & 0.5181 & 0.6825 \\
\midrule
BREW (Ours)
 & 1.5 & 0.5850 & 0.0050 & 0.9915 & 0.7358 & 0.6150 & 0.0400 & 0.9389 & 0.7432 \\
 & 2.0 & 0.8000 & 0.0050 & 0.9937 & 0.8864 & 0.8100 & 0.0350 & 0.9585 & 0.8780 \\
 & 3.0 & 0.8650 & 0.0000 & 1.0000 & 0.9276 & 0.8700 & 0.0150 & 0.9831 & 0.9231 \\
 & 6.0 & 0.9100 & 0.0000 & 1.0000 & 0.9528 & 0.8950 & 0.0150 & 0.9835 & 0.9371 \\
\bottomrule
\end{tabular}
\end{table}

\begin{table}[h!]
\centering
\caption{Detection performance under 10\% token-preserving synonym substitution using OPT-1.3B on the OpenGen dataset.}
\label{tab:opt13b_opengen_synonym_substitution_10}
\renewcommand{\arraystretch}{1.2}
\begin{tabular}{l c cccc cccc}
\toprule
\multicolumn{2}{c|}{\textbf{Setting}} & \multicolumn{4}{c|}{\textbf{T200}} & \multicolumn{4}{c}{\textbf{T500}} \\
\cmidrule(lr){1-2}\cmidrule(lr){3-6}\cmidrule(lr){7-10}
\textbf{Model} & $\delta$ & TPR & FPR & Precision & F1 & TPR & FPR & Precision & F1 \\
\midrule
MPAC
 & 1.5 & 0.9900 & 0.6150 & 0.6168 & 0.7601 & 1.0000 & 0.9300 & 0.5181 & 0.6826 \\
 & 2.0 & 1.0000 & 0.6300 & 0.6135 & 0.7605 & 1.0000 & 0.8750 & 0.5333 & 0.6957 \\
 & 3.0 & 1.0000 & 0.5250 & 0.6557 & 0.7921 & 1.0000 & 0.8800 & 0.5319 & 0.6944 \\
 & 6.0 & 1.0000 & 0.5550 & 0.6431 & 0.7828 & 1.0000 & 0.9050 & 0.5249 & 0.6885 \\
\midrule
\cite{qu2025provably} 
 & 1.5 & 0.9500 & 0.9250 & 0.5066 & 0.6608 & 0.9850 & 0.9600 & 0.5064 & 0.6689 \\
 & 2.0 & 0.9550 & 0.9200 & 0.5093 & 0.6643 & 0.9950 & 0.9250 & 0.5182 & 0.6815 \\
 & 3.0 & 0.9850 & 0.9450 & 0.5103 & 0.6723 & 1.0000 & 0.9400 & 0.5154 & 0.6802 \\
 & 6.0 & 0.9700 & 0.9400 & 0.5078 & 0.6666 & 0.9950 & 0.9300 & 0.5168 & 0.6803 \\
\midrule
BREW (Ours)
 & 1.5 & 0.4300 & 0.0100 & 0.9772 & 0.5972 & 0.4300 & 0.0200 & 0.9555 & 0.5931 \\
 & 2.0 & 0.6800 & 0.0250 & 0.9645 & 0.7976 & 0.7150 & 0.0350 & 0.9533 & 0.8171 \\
 & 3.0 & 0.7850 & 0.0050 & 0.9936 & 0.8771 & 0.7650 & 0.0150 & 0.9807 & 0.8595 \\
 & 6.0 & 0.8100 & 0.0150 & 0.9818 & 0.8876 & 0.8400 & 0.0150 & 0.9824 & 0.9056 \\
\bottomrule
\end{tabular}
\end{table}


\begin{table}[t]
\centering
\caption{Detection performance under 5\% token-reducing synonym substitution using OPT-1.3B on the C4 dataset.}
\label{tab:opt13b_c4_synonym_deletion_5}
\renewcommand{\arraystretch}{1.2}
\begin{tabular}{l l l cccc cccc}
\toprule
\multicolumn{3}{c|}{\textbf{Setting}} & \multicolumn{4}{c|}{\textbf{T200}} & \multicolumn{4}{c}{\textbf{T500}} \\
\cmidrule(lr){1-3}\cmidrule(lr){4-7}\cmidrule(lr){8-11}
\textbf{Model} & $\delta$ & $s_{\max}$ & TPR & FPR & Precision & F1 & TPR & FPR & Precision & F1 \\
\midrule
MPAC 
 & 1.5 & - & 0.9950 & 0.5750 & 0.6338 & 0.7743 & 1.0000 & 0.8000 & 0.5556 & 0.7143 \\
 & 2.0 & - & 1.0000 & 0.4850 & 0.6734 & 0.8048 & 1.0000 & 0.8650 & 0.5362 & 0.6981 \\
 & 3.0 & - & 1.0000 & 0.5300 & 0.6536 & 0.7905 & 1.0000 & 0.8650 & 0.5362 & 0.6981 \\
 & 6.0 & - & 1.0000 & 0.4600 & 0.6849 & 0.8130 & 1.0000 & 0.8050 & 0.5540 & 0.7130 \\
\midrule
\cite{qu2025provably}
 & 1.5 & - & 0.9400 & 0.9800 & 0.4896 & 0.6438 & 0.9800 & 0.9350 & 0.5117 & 0.6724 \\
 & 2.0 & - & 0.9700 & 0.9550 & 0.5039 & 0.6632 & 0.9750 & 0.9350 & 0.5105 & 0.6701 \\
 & 3.0 & - & 0.9950 & 0.9400 & 0.5142 & 0.6780 & 0.9700 & 0.9200 & 0.5132 & 0.6713 \\
 & 6.0 & - & 0.9750 & 0.9350 & 0.5105 & 0.6701 & 0.9650 & 0.9600 & 0.5013 & 0.6598 \\
\midrule
BREW (Ours) & 1.5
 & 0 & 0.8100 & 0.0000 & 1.0000 & 0.8950 & 0.9000 & 0.0300 & 0.9677 & 0.9326 \\
 & & 1 & 0.8950 & 0.0250 & 0.9728 & 0.9323 & 0.9850 & 0.0750 & 0.9292 & 0.9563 \\
 & & 3 & 0.9000 & 0.0950 & 0.9045 & 0.9023 & 0.9950 & 0.1400 & 0.8767 & 0.9321 \\
 & & 5 & 0.9150 & 0.1200 & 0.8841 & 0.8993 & 1.0000 & 0.2450 & 0.8032 & 0.8909 \\
\cmidrule(lr){2-11}
 & 2.0
 & 0 & 0.9050 & 0.0050 & 0.9945 & 0.9476 & 0.9050 & 0.0150 & 0.9837 & 0.9427 \\
 & & 1 & 0.9850 & 0.0250 & 0.9752 & 0.9801 & 0.9950 & 0.0600 & 0.9431 & 0.9684 \\
 & & 3 & 0.9950 & 0.0450 & 0.9567 & 0.9755 & 1.0000 & 0.1300 & 0.8850 & 0.9390 \\
 & & 5 & 0.9950 & 0.0800 & 0.9256 & 0.9590 & 1.0000 & 0.2500 & 0.8000 & 0.8888 \\
\cmidrule(lr){2-11}
 & 3.0
 & 0 & 0.9200 & 0.0050 & 0.9946 & 0.9558 & 0.9050 & 0.0200 & 0.9784 & 0.9403 \\
 & & 1 & 0.9950 & 0.0300 & 0.9707 & 0.9827 & 0.9950 & 0.0750 & 0.9299 & 0.9614 \\
 & & 3 & 1.0000 & 0.0550 & 0.9479 & 0.9732 & 1.0000 & 0.1850 & 0.8439 & 0.9153 \\
 & & 5 & 1.0000 & 0.0900 & 0.9174 & 0.9569 & 1.0000 & 0.2600 & 0.7937 & 0.8850 \\
\cmidrule(lr){2-11}
 & 6.0
 & 0 & 0.9200 & 0.0000 & 1.0000 & 0.9583 & 0.9500 & 0.0300 & 0.9694 & 0.9596 \\
 & & 1 & 1.0000 & 0.0450 & 0.9569 & 0.9780 & 0.9950 & 0.0750 & 0.9299 & 0.9614 \\
 & & 3 & 1.0000 & 0.0700 & 0.9346 & 0.9662 & 1.0000 & 0.2000 & 0.8333 & 0.9091 \\
 & & 5 & 1.0000 & 0.0800 & 0.9259 & 0.9615 & 1.0000 & 0.2850 & 0.7782 & 0.8753 \\
\bottomrule
\end{tabular}
\end{table}

\begin{table}[t]
\centering
\caption{Detection performance under 10\% token-reducing synonym substitution using OPT-1.3B on the C4 dataset.}
\label{tab:opt13b_c4_synonym_deletion_10}
\renewcommand{\arraystretch}{1.2}
\begin{tabular}{l l l cccc cccc}
\toprule
\multicolumn{3}{c|}{\textbf{Setting}} & \multicolumn{4}{c|}{\textbf{T200}} & \multicolumn{4}{c}{\textbf{T500}} \\
\cmidrule(lr){1-3}\cmidrule(lr){4-7}\cmidrule(lr){8-11}
\textbf{Model} & $\delta$ & $s_{\max}$ & TPR & FPR & Precision & F1 & TPR & FPR & Precision & F1 \\
\midrule
MPAC
 & 1.5 & - & 0.9950 & 0.4800 & 0.6746 & 0.8040 & 1.0000 & 0.8050 & 0.5540 & 0.7130 \\
 & 2.0 & - & 1.0000 & 0.5500 & 0.6452 & 0.7843 & 1.0000 & 0.7900 & 0.5587 & 0.7168 \\
 & 3.0 & - & 1.0000 & 0.5450 & 0.6472 & 0.7859 & 1.0000 & 0.8250 & 0.5479 & 0.7080 \\
 & 6.0 & - & 1.0000 & 0.4950 & 0.6689 & 0.8016 & 1.0000 & 0.8500 & 0.5405 & 0.7018 \\
\midrule
\cite{qu2025provably}
 & 1.5 & - & 0.9400 & 0.9250 & 0.5040 & 0.6562 & 0.9900 & 0.9500 & 0.5103 & 0.6735 \\
 & 2.0 & - & 0.9650 & 0.9250 & 0.5106 & 0.6678 & 0.9700 & 0.9400 & 0.5079 & 0.6667 \\
 & 3.0 & - & 0.9850 & 0.9100 & 0.5198 & 0.6805 & 0.9550 & 0.9450 & 0.5026 & 0.6586 \\
 & 6.0 & - & 0.9800 & 0.9150 & 0.5172 & 0.6770 & 0.9700 & 0.9350 & 0.5092 & 0.6678 \\
\midrule
BREW (Ours) & 1.5
 & 0 & 0.7500 & 0.0050 & 0.9934 & 0.8547 & 0.8700 & 0.0300 & 0.9667 & 0.8865 \\
 & & 1 & 0.9000 & 0.0200 & 0.9783 & 0.9375 & 0.9650 & 0.0650 & 0.9369 & 0.9508 \\
 & & 3 & 0.9050 & 0.0550 & 0.9427 & 0.9235 & 0.9800 & 0.1750 & 0.8485 & 0.9095 \\
 & & 5 & 0.9100 & 0.1000 & 0.9010 & 0.9055 & 0.9950 & 0.3200 & 0.7567 & 0.8596 \\
\cmidrule(lr){2-11}
 & 2.0
 & 0 & 0.8750 & 0.0200 & 0.9777 & 0.9235 & 0.9000 & 0.0300 & 0.9611 & 0.9326 \\
 & & 1 & 0.9850 & 0.0300 & 0.9704 & 0.9777 & 0.9850 & 0.0400 & 0.9610 & 0.9728 \\
 & & 3 & 0.9950 & 0.0800 & 0.9256 & 0.9590 & 0.9950 & 0.1500 & 0.8690 & 0.9277 \\
 & & 5 & 0.9950 & 0.0750 & 0.9299 & 0.9614 & 1.0000 & 0.1900 & 0.8403 & 0.9132 \\
\cmidrule(lr){2-11}
 & 3.0
 & 0 & 0.8950 & 0.0100 & 0.9890 & 0.9396 & 0.9350 & 0.0200 & 0.9791 & 0.9565 \\
 & & 1 & 1.0000 & 0.0100 & 0.9901 & 0.9950 & 0.9950 & 0.0750 & 0.9299 & 0.9614 \\
 & & 3 & 1.0000 & 0.0450 & 0.9569 & 0.9780 & 1.0000 & 0.1600 & 0.8621 & 0.9259 \\
 & & 5 & 1.0000 & 0.1050 & 0.9050 & 0.9501 & 1.0000 & 0.2800 & 0.7813 & 0.8772 \\
\cmidrule(lr){2-11}
 & 6.0
 & 0 & 0.9050 & 0.0200 & 0.9784 & 0.9403 & 0.9250 & 0.0300 & 0.9686 & 0.9463 \\
 & & 1 & 0.9900 & 0.0250 & 0.9754 & 0.9826 & 0.9950 & 0.0800 & 0.9256 & 0.9590 \\
 & & 3 & 1.0000 & 0.0700 & 0.9346 & 0.9662 & 1.0000 & 0.1650 & 0.8584 & 0.9238 \\
 & & 5 & 1.0000 & 0.1250 & 0.8889 & 0.9412 & 1.0000 & 0.2650 & 0.7905 & 0.8830 \\
\bottomrule
\end{tabular}
\end{table}

\begin{table}[t]
\centering
\caption{Detection performance under 5\% token-reducing synonym substitution using OPT-1.3B on the OpenGen dataset.}
\label{tab:opt13b_opengen_synonym_deletion_5}
\renewcommand{\arraystretch}{1.2}
\begin{tabular}{l l l cccc cccc}
\toprule
\multicolumn{3}{c|}{\textbf{Setting}} & \multicolumn{4}{c|}{\textbf{T200}} & \multicolumn{4}{c}{\textbf{T500}} \\
\cmidrule(lr){1-3}\cmidrule(lr){4-7}\cmidrule(lr){8-11}
\textbf{Model} & $\delta$ & $s_{\max}$ & TPR & FPR & Precision & F1 & TPR & FPR & Precision & F1 \\
\midrule
MPAC
 & 1.5 & - & 0.9950 & 0.5650 & 0.6378 & 0.7773 & 1.0000 & 0.9150 & 0.5222 & 0.6861 \\
 & 2.0 & - & 1.0000 & 0.5200 & 0.6579 & 0.7937 & 1.0000 & 0.8450 & 0.5420 & 0.7030 \\
 & 3.0 & - & 1.0000 & 0.5800 & 0.6329 & 0.7752 & 1.0000 & 0.8700 & 0.5348 & 0.6969 \\
 & 6.0 & - & 1.0000 & 0.6150 & 0.6192 & 0.7648 & 1.0000 & 0.9100 & 0.5236 & 0.6873 \\
\midrule
\cite{qu2025provably}
 & 1.5 & - & 0.9550 & 0.9050 & 0.5134 & 0.6678 & 0.9950 & 0.9550 & 0.5102 & 0.6745 \\
 & 2.0 & - & 0.9600 & 0.9200 & 0.5106 & 0.6666 & 0.9950 & 0.9400 & 0.5142 & 0.6780 \\
 & 3.0 & - & 1.0000 & 0.9600 & 0.5102 & 0.6756 & 1.0000 & 0.9700 & 0.5076 & 0.6734 \\
 & 6.0 & - & 0.9550 & 0.9550 & 0.5000 & 0.6563 & 1.0000 & 0.9350 & 0.5167 & 0.6814 \\
\midrule
BREW (Ours) & 1.5
 & 0 & 0.8450 & 0.0150 & 0.9825 & 0.9086 & 0.8850 & 0.0250 & 0.9725 & 0.9267 \\
 & & 1 & 0.8800 & 0.0200 & 0.9777 & 0.9263 & 0.9350 & 0.0800 & 0.9211 & 0.9280 \\
 & & 3 & 0.8550 & 0.0850 & 0.9095 & 0.8814 & 1.0000 & 0.1200 & 0.8929 & 0.9434 \\
 & & 5 & 0.8950 & 0.1000 & 0.8994 & 0.8972 & 0.9850 & 0.2200 & 0.8174 & 0.8934 \\
\cmidrule(lr){2-11}
 & 2.0
 & 0 & 0.9300 & 0.0000 & 1.0000 & 0.9637 & 0.9350 & 0.0150 & 0.9842 & 0.9590 \\
 & & 1 & 0.9400 & 0.0350 & 0.9641 & 0.9518 & 1.0000 & 0.0900 & 0.9174 & 0.9569 \\
 & & 3 & 0.9600 & 0.0700 & 0.9320 & 0.9458 & 1.0000 & 0.1300 & 0.8850 & 0.9390 \\
 & & 5 & 0.9400 & 0.0800 & 0.9215 & 0.9306 & 1.0000 & 0.1950 & 0.8368 & 0.9116 \\
\cmidrule(lr){2-11}
 & 3.0
 & 0 & 0.9450 & 0.0000 & 1.0000 & 0.9717 & 0.9450 & 0.0250 & 0.9742 & 0.9594 \\
 & & 1 & 0.9650 & 0.0150 & 0.9846 & 0.9747 & 1.0000 & 0.0400 & 0.9615 & 0.9803 \\
 & & 3 & 0.9550 & 0.0650 & 0.9362 & 0.9455 & 1.0000 & 0.1800 & 0.8475 & 0.9174 \\
 & & 5 & 0.9450 & 0.0300 & 0.9692 & 0.9569 & 1.0000 & 0.2350 & 0.8097 & 0.8948 \\
\cmidrule(lr){2-11}
 & 6.0
 & 0 & 0.9400 & 0.0050 & 0.9947 & 0.9665 & 0.9650 & 0.0300 & 0.9698 & 0.9674 \\
 & & 1 & 0.9550 & 0.0250 & 0.9744 & 0.9646 & 0.9900 & 0.0850 & 0.9209 & 0.9542 \\
 & & 3 & 0.9950 & 0.0400 & 0.9613 & 0.9778 & 0.7950 & 0.0200 & 0.9755 & 0.8760 \\
 & & 5 & 0.9750 & 0.0600 & 0.9420 & 0.9582 & 1.0000 & 0.2400 & 0.8064 & 0.8928 \\
\bottomrule
\end{tabular}
\end{table}

\begin{table}[t]
\centering
\caption{Detection performance under 10\% token-reducing synonym substitution using OPT-1.3B on the OpenGen dataset.}
\label{tab:opt13b_opengen_synonym_deletion_10}
\renewcommand{\arraystretch}{1.2}
\begin{tabular}{l l l cccc cccc}
\toprule
\multicolumn{3}{c|}{\textbf{Setting}} & \multicolumn{4}{c|}{\textbf{T200}} & \multicolumn{4}{c}{\textbf{T500}} \\
\cmidrule(lr){1-3}\cmidrule(lr){4-7}\cmidrule(lr){8-11}
\textbf{Model} & $\delta$ & $s_{\max}$ & TPR & FPR & Precision & F1 & TPR & FPR & Precision & F1 \\
\midrule
MPAC
 & 1.5 & - & 0.9900 & 0.5250 & 0.6535 & 0.7873 & 1.0000 & 0.8750 & 0.5333 & 0.6957 \\
 & 2.0 & - & 1.0000 & 0.5550 & 0.6431 & 0.7828 & 1.0000 & 0.8600 & 0.5376 & 0.6993 \\
 & 3.0 & - & 1.0000 & 0.5400 & 0.6494 & 0.7874 & 1.0000 & 0.8250 & 0.5479 & 0.7080 \\
 & 6.0 & - & 1.0000 & 0.5500 & 0.6452 & 0.7843 & 1.0000 & 0.8650 & 0.5362 & 0.6981 \\
\midrule
\cite{qu2025provably}
 & 1.5 & - & 0.9500 & 0.9400 & 0.5026 & 0.6574 & 0.9750 & 0.9600 & 0.5039 & 0.6644 \\
 & 2.0 & - & 0.9700 & 0.9650 & 0.5012 & 0.6609 & 1.0000 & 0.9100 & 0.5235 & 0.6872 \\
 & 3.0 & - & 0.9950 & 0.9400 & 0.5142 & 0.6780 & 1.0000 & 0.9450 & 0.5141 & 0.6791 \\
 & 6.0 & - & 0.9550 & 0.9100 & 0.5121 & 0.6666 & 1.0000 & 0.9700 & 0.5076 & 0.6734 \\
\midrule
BREW (Ours) & 1.5
 & 0 & 0.7650 & 0.0050 & 0.9935 & 0.8644 & 0.8600 & 0.0250 & 0.9718 & 0.9125 \\
 & & 1 & 0.7950 & 0.0200 & 0.9754 & 0.8760 & 0.9250 & 0.0400 & 0.9585 & 0.9414 \\
 & & 3 & 0.8650 & 0.0650 & 0.9301 & 0.8963 & 0.8300 & 0.0300 & 0.9651 & 0.8924 \\
 & & 5 & 0.8650 & 0.0650 & 0.9301 & 0.8963 & 0.9650 & 0.2100 & 0.8212 & 0.8873 \\
\cmidrule(lr){2-11}
 & 2.0
 & 0 & 0.8800 & 0.0100 & 0.9888 & 0.9312 & 0.8950 & 0.0200 & 0.9781 & 0.9347 \\
 & & 1 & 0.9500 & 0.0350 & 0.9644 & 0.9571 & 0.9950 & 0.0700 & 0.9342 & 0.9636 \\
 & & 3 & 0.9650 & 0.0550 & 0.9460 & 0.9554 & 1.0000 & 0.1600 & 0.8621 & 0.9259 \\
 & & 5 & 0.9350 & 0.0900 & 0.9121 & 0.9234 & 0.9950 & 0.2100 & 0.8257 & 0.9024 \\
\cmidrule(lr){2-11}
 & 3.0
 & 0 & 0.9250 & 0.0050 & 0.9946 & 0.9585 & 0.9400 & 0.0250 & 0.9741 & 0.9567 \\
 & & 1 & 0.9500 & 0.0100 & 0.9895 & 0.9693 & 0.9900 & 0.0600 & 0.9428 & 0.9658 \\
 & & 3 & 0.9600 & 0.0650 & 0.9365 & 0.9481 & 1.0000 & 0.1650 & 0.8584 & 0.9238 \\
 & & 5 & 0.9850 & 0.0900 & 0.9162 & 0.9493 & 1.0000 & 0.2000 & 0.8333 & 0.9091 \\
\cmidrule(lr){2-11}
 & 6.0
 & 0 & 0.9300 & 0.0050 & 0.9946 & 0.9612 & 0.9550 & 0.0250 & 0.9744 & 0.9646 \\
 & & 1 & 0.9400 & 0.0250 & 0.9741 & 0.9567 & 1.0000 & 0.0500 & 0.9523 & 0.9756 \\
 & & 3 & 0.9850 & 0.0550 & 0.9471 & 0.9656 & 1.0000 & 0.1100 & 0.9009 & 0.9479 \\
 & & 5 & 0.9700 & 0.1050 & 0.9023 & 0.9349 & 1.0000 & 0.1900 & 0.8403 & 0.9132 \\
\bottomrule
\end{tabular}
\end{table}

\begin{table}[t]
\centering
\caption{Detection performance under 5\% token-increasing synonym substitution using OPT-1.3B on the C4 dataset.}
\label{tab:opt13b_c4_synonym_insertion_5}
\renewcommand{\arraystretch}{1.2}
\begin{tabular}{l l l cccc cccc}
\toprule
\multicolumn{3}{c|}{\textbf{Setting}} & \multicolumn{4}{c|}{\textbf{T200}} & \multicolumn{4}{c}{\textbf{T500}} \\
\cmidrule(lr){1-3}\cmidrule(lr){4-7}\cmidrule(lr){8-11}
\textbf{Model} & $\delta$ & $s_{\max}$ & TPR & FPR & Precision & F1 & TPR & FPR & Precision & F1 \\
\midrule
MPAC
 & 1.5 & - & 0.9950 & 0.5050 & 0.6633 & 0.7960 & 1.0000 & 0.8100 & 0.5525 & 0.7117 \\
 & 2.0 & - & 1.0000 & 0.5650 & 0.6390 & 0.7797 & 1.0000 & 0.8250 & 0.5479 & 0.7080 \\
 & 3.0 & - & 1.0000 & 0.5550 & 0.6431 & 0.7828 & 1.0000 & 0.8400 & 0.5435 & 0.7042 \\
 & 6.0 & - & 1.0000 & 0.4900 & 0.6711 & 0.8032 & 1.0000 & 0.8600 & 0.5376 & 0.6993 \\
\midrule
\cite{qu2025provably}
 & 1.5 & - & 0.9450 & 0.9250 & 0.5053 & 0.6585 & 0.9650 & 0.9450 & 0.5052 & 0.6632 \\
 & 2.0 & - & 0.9650 & 0.9200 & 0.5119 & 0.6690 & 0.9500 & 0.9600 & 0.4974 & 0.6529 \\
 & 3.0 & - & 0.9700 & 0.9150 & 0.5146 & 0.6724 & 0.9600 & 0.9600 & 0.5000 & 0.6575 \\
 & 6.0 & - & 0.9600 & 0.9200 & 0.5106 & 0.6667 & 0.9400 & 0.9200 & 0.5034 & 0.6573 \\
\midrule
BREW (Ours) & 1.5
 & 0 & 0.2450 & 0.0100 & 0.9601 & 0.3904 & 0.3050 & 0.0250 & 0.9242 & 0.4586 \\
 & & 1 & 0.3800 & 0.0250 & 0.9383 & 0.5409 & 0.4400 & 0.0600 & 0.8800 & 0.5867 \\
 & & 3 & 0.5250 & 0.0500 & 0.9130 & 0.6667 & 0.5000 & 0.1550 & 0.7634 & 0.6042 \\
 & & 5 & 0.6350 & 0.0750 & 0.8944 & 0.7427 & 0.7300 & 0.1950 & 0.7892 & 0.7584 \\
\cmidrule(lr){2-11}
 & 2.0
 & 0 & 0.3450 & 0.0200 & 0.9452 & 0.5055 & 0.3700 & 0.0200 & 0.9487 & 0.5324 \\
 & & 1 & 0.5150 & 0.0250 & 0.9537 & 0.6688 & 0.6350 & 0.0900 & 0.8759 & 0.7362 \\
 & & 3 & 0.6950 & 0.0900 & 0.8854 & 0.7787 & 0.7650 & 0.1300 & 0.8547 & 0.8074 \\
 & & 5 & 0.8850 & 0.1200 & 0.8806 & 0.8828 & 0.8300 & 0.2650 & 0.7580 & 0.7924 \\
\cmidrule(lr){2-11}
 & 3.0
 & 0 & 0.5200 & 0.0200 & 0.9629 & 0.6753 & 0.4750 & 0.0200 & 0.9596 & 0.6355 \\
 & & 1 & 0.7350 & 0.0500 & 0.9363 & 0.8235 & 0.6950 & 0.1150 & 0.8580 & 0.7680 \\
 & & 3 & 0.8400 & 0.0700 & 0.9231 & 0.8796 & 0.9200 & 0.1850 & 0.8326 & 0.8741 \\
 & & 5 & 0.9300 & 0.0850 & 0.9163 & 0.9231 & 0.9450 & 0.2500 & 0.7908 & 0.8610 \\
\cmidrule(lr){2-11}
 & 6.0
 & 0 & 0.6000 & 0.0200 & 0.9677 & 0.7407 & 0.5650 & 0.0200 & 0.9658 & 0.7129 \\
 & & 1 & 0.7550 & 0.0200 & 0.9742 & 0.8507 & 0.7700 & 0.0800 & 0.9059 & 0.8324 \\
 & & 3 & 0.9500 & 0.0800 & 0.9223 & 0.9360 & 0.9350 & 0.1850 & 0.8348 & 0.8821 \\
 & & 5 & 0.9350 & 0.1000 & 0.9034 & 0.9189 & 0.9750 & 0.2550 & 0.7927 & 0.8744 \\
\bottomrule
\end{tabular}
\end{table}

\begin{table}[t]
\centering
\caption{Detection performance under 10\% token-increasing synonym substitution using OPT-1.3B on the C4 dataset.}
\label{tab:opt13b_c4_synonym_insertion_10}
\renewcommand{\arraystretch}{1.2}
\begin{tabular}{l l l cccc cccc}
\toprule
\multicolumn{3}{c|}{\textbf{Setting}} & \multicolumn{4}{c|}{\textbf{T200}} & \multicolumn{4}{c}{\textbf{T500}} \\
\cmidrule(lr){1-3}\cmidrule(lr){4-7}\cmidrule(lr){8-11}
\textbf{Model} & $\delta$ & $s_{\max}$ & TPR & FPR & Precision & F1 & TPR & FPR & Precision & F1 \\
\midrule
MPAC 
 & 1.5 & - & 0.9850 & 0.5400 & 0.6459 & 0.7802 & 1.0000 & 0.8650 & 0.5362 & 0.6981 \\
 & 2.0 & - & 1.0000 & 0.5400 & 0.6494 & 0.7874 & 1.0000 & 0.8100 & 0.5525 & 0.7117 \\
 & 3.0 & - & 1.0000 & 0.5600 & 0.6410 & 0.7812 & 1.0000 & 0.8400 & 0.5435 & 0.7042 \\
 & 6.0 & - & 1.0000 & 0.4700 & 0.6803 & 0.8097 & 1.0000 & 0.8300 & 0.5464 & 0.7067 \\
\midrule
\cite{qu2025provably} 
 & 1.5 & - & 0.9700 & 0.9750 & 0.4987 & 0.6587 & 0.9850 & 0.9400 & 0.5117 & 0.6735 \\
 & 2.0 & - & 0.9600 & 0.9600 & 0.5000 & 0.6575 & 0.9800 & 0.9450 & 0.5091 & 0.6701 \\
 & 3.0 & - & 0.9700 & 0.9300 & 0.5105 & 0.6690 & 0.9650 & 0.9450 & 0.5052 & 0.6632 \\
 & 6.0 & - & 0.9850 & 0.9350 & 0.5130 & 0.6747 & 0.9650 & 0.9500 & 0.5039 & 0.6621 \\
\midrule
BREW (Ours) & 1.5
 & 0 & 0.1500 & 0.0250 & 0.8571 & 0.2553 & 0.1450 & 0.0250 & 0.8529 & 0.2479 \\
 & & 1 & 0.1650 & 0.0400 & 0.8049 & 0.2739 & 0.3050 & 0.0550 & 0.8472 & 0.4485 \\
 & & 3 & 0.3050 & 0.0500 & 0.8592 & 0.4502 & 0.4550 & 0.1550 & 0.7459 & 0.5652 \\
 & & 5 & 0.3600 & 0.0850 & 0.8090 & 0.4983 & 0.4600 & 0.3100 & 0.5974 & 0.5198 \\
\cmidrule(lr){2-11}
 & 2.0
 & 0 & 0.1850 & 0.0200 & 0.9024 & 0.3071 & 0.2450 & 0.0100 & 0.9608 & 0.3904 \\
 & & 1 & 0.3100 & 0.0550 & 0.8493 & 0.4542 & 0.3700 & 0.0850 & 0.8132 & 0.5086 \\
 & & 3 & 0.3800 & 0.0800 & 0.8261 & 0.5205 & 0.5800 & 0.1600 & 0.7838 & 0.6667 \\
 & & 5 & 0.4800 & 0.1150 & 0.8067 & 0.6019 & 0.6450 & 0.3050 & 0.6789 & 0.6615 \\
\cmidrule(lr){2-11}
 & 3.0
 & 0 & 0.3250 & 0.0100 & 0.9701 & 0.4869 & 0.3550 & 0.0150 & 0.9595 & 0.5182 \\
 & & 1 & 0.5050 & 0.0200 & 0.9619 & 0.6623 & 0.5300 & 0.0750 & 0.8760 & 0.6604 \\
 & & 3 & 0.6450 & 0.0700 & 0.9021 & 0.7522 & 0.7550 & 0.1350 & 0.8483 & 0.7989 \\
 & & 5 & 0.7100 & 0.1100 & 0.8659 & 0.7802 & 0.8650 & 0.2600 & 0.7689 & 0.8141 \\
\cmidrule(lr){2-11}
 & 6.0
 & 0 & 0.3550 & 0.0050 & 0.9861 & 0.5221 & 0.4300 & 0.0150 & 0.9663 & 0.5952 \\
 & & 1 & 0.5000 & 0.0350 & 0.9346 & 0.6515 & 0.6300 & 0.0750 & 0.8936 & 0.7390 \\
 & & 3 & 0.7300 & 0.0500 & 0.9359 & 0.8202 & 0.8150 & 0.1050 & 0.8859 & 0.8490 \\
 & & 5 & 0.8150 & 0.1100 & 0.8810 & 0.8468 & 0.9600 & 0.2650 & 0.7837 & 0.8629 \\
\bottomrule
\end{tabular}
\end{table}

\begin{table}[t]
\centering
\caption{Detection performance under 5\% token-increasing synonym substitution using OPT-1.3B on the OpenGen dataset.}
\label{tab:opt13b_opengen_synonym_insertion_5}
\renewcommand{\arraystretch}{1.2}
\begin{tabular}{l l l cccc cccc}
\toprule
\multicolumn{3}{c|}{\textbf{Setting}} & \multicolumn{4}{c|}{\textbf{T200}} & \multicolumn{4}{c}{\textbf{T500}} \\
\cmidrule(lr){1-3}\cmidrule(lr){4-7}\cmidrule(lr){8-11}
\textbf{Model} & $\delta$ & $s_{\max}$ & TPR & FPR & Precision & F1 & TPR & FPR & Precision & F1 \\
\midrule
MPAC
 & 1.5 & - & 1.0000 & 0.6050 & 0.6231 & 0.7678 & 1.0000 & 0.8900 & 0.5291 & 0.6920 \\
 & 2.0 & - & 1.0000 & 0.6250 & 0.6154 & 0.7619 & 1.0000 & 0.8850 & 0.5305 & 0.6932 \\
 & 3.0 & - & 1.0000 & 0.5800 & 0.6329 & 0.7752 & 1.0000 & 0.8650 & 0.5362 & 0.6981 \\
 & 6.0 & - & 0.9950 & 0.5800 & 0.6317 & 0.7728 & 1.0000 & 0.8850 & 0.5305 & 0.6932 \\
\midrule
\cite{qu2025provably}
 & 1.5 & - & 0.9500 & 0.9500 & 0.5000 & 0.6551 & 0.9850 & 0.9600 & 0.5064 & 0.6689 \\
 & 2.0 & - & 0.9750 & 0.9500 & 0.5064 & 0.6666 & 1.0000 & 0.9250 & 0.5194 & 0.6837 \\
 & 3.0 & - & 0.9800 & 0.9250 & 0.5144 & 0.6746 & 1.0000 & 0.9300 & 0.5181 & 0.6825 \\
 & 6.0 & - & 0.9600 & 0.9500 & 0.5026 & 0.6597 & 1.0000 & 0.9300 & 0.5181 & 0.6825 \\
\midrule
BREW (Ours) & 1.5
 & 0 & 0.3050 & 0.0200 & 0.9384 & 0.4603 & 0.3200 & 0.0250 & 0.9275 & 0.4758 \\
 & & 1 & 0.3850 & 0.0550 & 0.8750 & 0.5347 & 0.3550 & 0.0600 & 0.8554 & 0.5017 \\
 & & 3 & 0.5150 & 0.0550 & 0.9035 & 0.6561 & 0.5600 & 0.1400 & 0.8000 & 0.6588 \\
 & & 5 & 0.5650 & 0.0750 & 0.8828 & 0.6890 & 0.6900 & 0.2000 & 0.7752 & 0.7301 \\
\cmidrule(lr){2-11}
 & 2.0
 & 0 & 0.4600 & 0.0150 & 0.9684 & 0.6237 & 0.4500 & 0.0350 & 0.9278 & 0.6061 \\
 & & 1 & 0.5400 & 0.0250 & 0.9557 & 0.6901 & 0.5700 & 0.0500 & 0.9193 & 0.7037 \\
 & & 3 & 0.7100 & 0.0650 & 0.9161 & 0.8000 & 0.6950 & 0.1600 & 0.8128 & 0.7493 \\
 & & 5 & 0.8100 & 0.0650 & 0.9257 & 0.8640 & 0.8650 & 0.2350 & 0.7863 & 0.8238 \\
\cmidrule(lr){2-11}
 & 3.0
 & 0 & 0.6100 & 0.0250 & 0.9606 & 0.7462 & 0.6100 & 0.0300 & 0.9531 & 0.7439 \\
 & & 1 & 0.6300 & 0.0250 & 0.9618 & 0.7613 & 0.7000 & 0.0500 & 0.9333 & 0.8000 \\
 & & 3 & 0.8250 & 0.0350 & 0.9593 & 0.8871 & 0.8850 & 0.1550 & 0.8510 & 0.8676 \\
 & & 5 & 0.9200 & 0.1100 & 0.8932 & 0.9064 & 0.9400 & 0.2700 & 0.7768 & 0.8506 \\
\cmidrule(lr){2-11}
 & 6.0
 & 0 & 0.5900 & 0.0100 & 0.9833 & 0.7375 & 0.7900 & 0.0750 & 0.9132 & 0.8471 \\
 & & 1 & 0.6300 & 0.0100 & 0.9843 & 0.7682 & 0.8200 & 0.1050 & 0.8864 & 0.8519 \\
 & & 3 & 0.5250 & 0.0550 & 0.9050 & 0.6640 & 0.9300 & 0.1300 & 0.8773 & 0.9029 \\
 & & 5 & 0.9450 & 0.0800 & 0.9219 & 0.9333 & 0.9750 & 0.2000 & 0.8297 & 0.8965 \\
\bottomrule
\end{tabular}
\end{table}

\begin{table}[t]
\centering
\caption{Detection performance under 10\% token-increasing synonym substitution using OPT-1.3B on the OpenGen dataset.}
\label{tab:opt13b_opengen_synonym_insertion_10}
\renewcommand{\arraystretch}{1.2}
\begin{tabular}{l l l cccc cccc}
\toprule
\multicolumn{3}{c|}{\textbf{Setting}} & \multicolumn{4}{c|}{\textbf{T200}} & \multicolumn{4}{c}{\textbf{T500}} \\
\cmidrule(lr){1-3}\cmidrule(lr){4-7}\cmidrule(lr){8-11}
\textbf{Model} & $\delta$ & $s_{\max}$ & TPR & FPR & Precision & F1 & TPR & FPR & Precision & F1 \\
\midrule
MPAC
 & 1.5 & - & 0.9800 & 0.6200 & 0.6125 & 0.7538 & 1.0000 & 0.9150 & 0.5222 & 0.6861 \\
 & 2.0 & - & 0.9900 & 0.5950 & 0.6246 & 0.7660 & 1.0000 & 0.8800 & 0.5319 & 0.6944 \\
 & 3.0 & - & 1.0000 & 0.6400 & 0.6098 & 0.7576 & 1.0000 & 0.8950 & 0.5277 & 0.6908 \\
 & 6.0 & - & 1.0000 & 0.6550 & 0.6042 & 0.7533 & 1.0000 & 0.8900 & 0.5291 & 0.6920 \\
\midrule
\cite{qu2025provably} 
 & 1.5 & - & 0.9400 & 0.9200 & 0.5053 & 0.6573 & 0.9900 & 0.9450 & 0.5116 & 0.6746 \\
 & 2.0 & - & 0.9550 & 0.9350 & 0.5052 & 0.6608 & 0.9600 & 0.9350 & 0.5065 & 0.6632 \\
 & 3.0 & - & 0.9850 & 0.9350 & 0.5130 & 0.6746 & 1.0000 & 0.9600 & 0.5102 & 0.6756 \\
 & 6.0 & - & 0.9500 & 0.9500 & 0.5000 & 0.6551 & 1.0000 & 0.9400 & 0.5154 & 0.6802 \\
\midrule
BREW (Ours) & 1.5
 & 0 & 0.1100 & 0.0050 & 0.9565 & 0.1973 & 0.1450 & 0.0400 & 0.7837 & 0.2447 \\
 & & 1 & 0.1850 & 0.0400 & 0.8222 & 0.3020 & 0.2850 & 0.0600 & 0.8261 & 0.4237 \\
 & & 3 & 0.1950 & 0.0550 & 0.7800 & 0.3120 & 0.4450 & 0.1650 & 0.7295 & 0.5527 \\
 & & 5 & 0.4050 & 0.0900 & 0.8181 & 0.5418 & 0.5500 & 0.2300 & 0.7051 & 0.6179 \\
\cmidrule(lr){2-11}
 & 2.0
 & 0 & 0.2950 & 0.0200 & 0.9365 & 0.4487 & 0.2400 & 0.0300 & 0.8889 & 0.3780 \\
 & & 1 & 0.2250 & 0.0200 & 0.9183 & 0.3614 & 0.3750 & 0.0600 & 0.8621 & 0.5226 \\
 & & 3 & 0.4500 & 0.0800 & 0.8491 & 0.5882 & 0.5850 & 0.1350 & 0.8125 & 0.6802 \\
 & & 5 & 0.4900 & 0.0800 & 0.8596 & 0.6242 & 0.6750 & 0.2600 & 0.7219 & 0.6976 \\
\cmidrule(lr){2-11}
 & 3.0
 & 0 & 0.2450 & 0.0050 & 0.9800 & 0.3920 & 0.3300 & 0.0250 & 0.9296 & 0.4871 \\
 & & 1 & 0.3900 & 0.0250 & 0.9397 & 0.5512 & 0.4900 & 0.0750 & 0.8672 & 0.6261 \\
 & & 3 & 0.6000 & 0.0800 & 0.8823 & 0.7142 & 0.7900 & 0.1200 & 0.8681 & 0.8272 \\
 & & 5 & 0.6650 & 0.0900 & 0.8807 & 0.7578 & 0.8550 & 0.2150 & 0.7991 & 0.8261 \\
\cmidrule(lr){2-11}
 & 6.0
 & 0 & 0.4100 & 0.0050 & 0.9879 & 0.5795 & 0.4600 & 0.0250 & 0.9484 & 0.6195 \\
 & & 1 & 0.4800 & 0.0200 & 0.9600 & 0.6400 & 0.6850 & 0.0850 & 0.8896 & 0.7740 \\
 & & 3 & 0.7100 & 0.0650 & 0.9161 & 0.8000 & 0.8200 & 0.1400 & 0.8541 & 0.8367 \\
 & & 5 & 0.7450 & 0.0850 & 0.8975 & 0.8142 & 0.9450 & 0.2150 & 0.8146 & 0.8750 \\
\bottomrule
\end{tabular}
\end{table}

\subsection{Detailed Results under Paraphrasing Attacks}
\label{Appendix:Paraphrasing_attacks}

Table~\ref{tab:Paraphrasing_attacks} reports detailed detection results under paraphrasing attacks generated by the T5-based paraphrasing model \cite{raffel2020exploring}. All experiments are conducted on texts truncated to $T=200$ tokens using OPT-1.3B, evaluated on both the C4 and OpenGen datasets.

The table provides a fine-grained view of the performance trends summarized by the ROC curves in the main text. MPAC consistently achieves high true positive rates (TPR), but exhibits moderately elevated false positive rates (FPR), indicating partial robustness to paraphrasing but limited reliability in low-FPR regimes. \cite{qu2025provably} further amplifies this trade-off, attaining near-saturated TPR at the cost of extremely high FPR, which results in near-random discrimination and poor practical usability.

In contrast, BREW maintains strict control over FPR across all configurations of $\delta$ and window-shift range $s_{\max}$, while gradually improving TPR as $s_{\max}$ increases. This stable TPR--FPR balance under strong semantic rewriting explains the superior ROC behavior of BREW observed in the main text.

\begin{table}[t]
\centering
\caption{Detailed detection performance under paraphrasing attacks generated by the T5-based paraphrasing model on C4 and OpenGen ($T=200$, OPT-1.3B). MPAC achieves high TPR but incurs moderately elevated FPR, while \cite{qu2025provably} attains similarly high TPR with near-saturated FPR, leading to unreliable discrimination. In contrast, BREW consistently maintains low FPR across all $\delta$ and $s_{\max}$ settings, while steadily improving TPR as the window-shift range increases.}
\label{tab:Paraphrasing_attacks}
\renewcommand{\arraystretch}{1.2}
\begin{tabular}{l cc | cccc | cccc}
\toprule
\multicolumn{3}{c|}{\textbf{Setting}} & \multicolumn{8}{c}{\textbf{T200}} \\
\cmidrule(lr){1-3}\cmidrule(lr){4-11}
\multicolumn{3}{c|}{\textbf{Dataset}} & \multicolumn{4}{c|}{\textbf{C4}} & \multicolumn{4}{c}{\textbf{OpenGen}} \\
\cmidrule(lr){1-3}\cmidrule(lr){4-7}\cmidrule(lr){8-11}
\textbf{Model} & $\delta$ & $s_{\max}$ & \textbf{TPR} & \textbf{FPR} & \textbf{Precision} & \textbf{F1} & \textbf{TPR} & \textbf{FPR} & \textbf{Precision} & \textbf{F1} \\
\midrule
 MPAC
& 1.5 & - & 0.950 & 0.590 & 0.6169 & 0.7480 & 0.930 & 0.600 & 0.6078 & 0.7352  \\
& 2 & - & 0.990 & 0.560 & 0.6387 & 0.7765  & 1.000 & 0.660 & 0.6024 & 0.7519 \\
& 3 & - & 1.000  & 0.580 & 0.6329 & 0.7752 & 1.000 & 0.630 & 0.6135 & 0.7605 \\
& 6 & - & 1.000  & 0.530 & 0.6536 & 0.7905  & 1.000  & 0.570 & 0.6369 & 0.7782 \\
\midrule
 \cite{qu2025provably} 
& 1.5 & - & 0.950 & 0.950 & 0.5000 & 0.6552 & 0.920 & 0.960 & 0.4894 & 0.6389 \\
& 2 & - & 0.920 & 0.930 & 0.4973 & 0.6456 & 0.960 & 0.960 & 0.5000 & 0.6575 \\
& 3 & - & 0.960 & 0.970 & 0.4974 & 0.6553 & 0.960 & 0.920 & 0.5106 & 0.6667 \\
& 6 & - & 0.980 & 0.900 & 0.5213 & 0.6806 & 0.980 & 0.930 & 0.5131 & 0.6735 \\
\midrule
 BREW (Ours)
& 1.5 
& 0 & 0.360 & 0.020 & 0.9474 & 0.5217 & 0.450 & 0.020 & 0.9574 & 0.6122 \\
& & 1 & 0.410 & 0.040 & 0.9111 & 0.5655 & 0.400 & 0.000 & 1.0000 & 0.5714 \\
& & 3 & 0.350 & 0.030 & 0.9211 & 0.5072 & 0.570 & 0.050 & 0.9194 & 0.7037 \\
& & 5 & 0.480 & 0.100 & 0.8276 & 0.6076 & 0.480 & 0.100 & 0.8276 & 0.6076 \\
\cmidrule(lr){2-11}
& 2 
& 0 & 0.330 & 0.000 & 1.0000 & 0.4962 & 0.520 & 0.010 & 0.9811 & 0.6797 \\
& & 1 & 0.530 & 0.030 & 0.9464 & 0.6795 & 0.540 & 0.020 & 0.9643 & 0.6923 \\
& & 3 & 0.560 & 0.060 & 0.9032 & 0.6914 & 0.710 & 0.030 & 0.9595 & 0.8161 \\
& & 5 & 0.580 & 0.140 & 0.8056 & 0.6744 & 0.530 & 0.130 & 0.8030 & 0.6386 \\
\cmidrule(lr){2-11}
& 3 
& 0 & 0.360 & 0.000 & 1.0000 & 0.5294 & 0.600 & 0.000 & 1.0000 & 0.7500 \\
& & 1 & 0.450 & 0.040 & 0.9184 & 0.6040 & 0.630 & 0.060 & 0.9130 & 0.7456 \\
& & 3 & 0.650 & 0.040 & 0.9420 & 0.7692 & 0.710 & 0.060 & 0.9221 & 0.8023 \\
& & 5 & 0.680 & 0.110 & 0.8608 & 0.7598 & 0.720 & 0.040 & 0.9474 & 0.8182 \\
\cmidrule(lr){2-11}
& 6 
& 0 & 0.550 & 0.000 & 1.0000 & 0.7097 & 0.550 & 0.010 & 0.9821 & 0.7051 \\
& & 1 & 0.720 & 0.030 & 0.9600 & 0.8229 & 0.780 & 0.080 & 0.9070 & 0.8387 \\
& & 3 & 0.710 & 0.100 & 0.8765 & 0.7845 & 0.700 & 0.070 & 0.9091 & 0.7910 \\
& & 5 & 0.810 & 0.120 & 0.8710 & 0.8394 & 0.910 & 0.070 & 0.9286 & 0.9192 \\
\bottomrule
\end{tabular}
\end{table}


\end{document}